    \declaretheorem[name=Theorem,numberwithin=section]{theorem} 
    \declaretheorem[name=Lemma,sibling=theorem]{lemma}
    \newcolumntype{C}{@{}>{\centering\arraybackslash}X@{}}
\newcommand{\mc}[2]{\multicolumn{#1}{c|}{#2}}
\newcommand{\graphicsScale}{0.9}
\newcommand{\Figure}{Figure}
\newcommand{\OO}{O} % The big O asymptotic notation.
\newcommand{\Property}{\mathtt{\Pi}} % A property.
\newcommand{\Multigraph}{\mathtt{Multigraph}} % A property.
\newcommand{\Cycle}{\mathtt{TSP}} % A property.
\newcommand{\Tree}{\mathtt{Tree}} % A property.
\newcommand{\Matching}{\mathtt{Matching}} % A property.
\newcommand{\RedBlue}{\mathtt{RedBlue}} % A property.
\newcommand{\Convex}{\mathtt{Convex}} % A property.
\newcommand{\myC}{C} % A point set in convex position.
\newcommand{\myT}{T} % A tour or a point set. 
\newcommand{\D}{\mathbf{D}} % The maximum path length parameter.
\newcommand{\Choices}{\textit{Choices}} % A choice variable.
\newcommand{\stO}{\emptyset} % No choice.
\newcommand{\stR}{\mathtt{R}} % A removal strategy.
\newcommand{\stI}{\mathtt{I}} % An interstion strategy.
\newcommand{\stRI}{\stR\stI} % A strategy for both insertion and removal choice.
\newcommand{\dRI}{\D^{\stRI}} % The worst best distance parameter where we choose (a) the pair to be replaced and (b) the pair replacing it.
\newcommand{\dR}{\D^\stR} % The worst best distance parameter where we choose (a) the pair to be replaced but not (b) the pair replacing it.
\newcommand{\dI}{\D^\stI} % The worst best distance parameter where we choose (b) the replacing pair but not (a) the pair to be replaced.
\newcommand{\dO}{\D} % The length of the longest flip sequence.
\newcommand{\mym}{m} % The number of intersections.
\newcommand{\nbflips}{k} % The number of flips.
\newcommand{\myn}{n} % The number of segments.
\newcommand{\myk}{k} % An integer.
\newcommand{\intFactor}{f} % An integer.
\newcommand{\myt}{t} % The near convex parameter.
\newcommand{\myi}{i} % An index.
\newcommand{\myj}{j} % An index.
\newcommand{\mya}{a} % An index.
\newcommand{\myc}{c} % An index or the nb of clauses.
\newcommand{\myd}{d} % An index.
\newcommand{\myl}{\ell} % A line.
\newcommand{\myL}{L} % A set of lines.
\newcommand{\myh}{h} % A segment, a line, or a set of two lines.
\newcommand{\myp}{p} % A point.
\newcommand{\myq}{q} % A point.
\newcommand{\myr}{r} % A red point.
\newcommand{\myb}{b} % A blue point.
\newcommand{\myv}{v} % A vector or the nb of variables. 
\newcommand{\myx}{x} % A intersection not in the pointset.
\newcommand{\myy}{y} % A intersection not in the pointset.
\newcommand{\myo}{o} % A point not in the point set (e.g. an intersection).
\newcommand{\ppx}{\myo} % A flip intersection.
\newcommand{\mys}{s} % A segment.
\newcommand{\myu}{u} % An uncrossable segment.
\newcommand{\flip}{f} % A flip.
\newcommand{\myP}{P} % A pointset.
\newcommand{\myS}{S} % A multiset of segments.
\newcommand{\inner}{_{\textnormal{in}}} % inner segment
\newcommand{\outter}{_{\textnormal{out}}} % outer segment
\newcommand{\central}{_{\textnormal{central}}} % central segment
\newcommand{\myQ}{Q} % A point set. A sequence of flips.
\newcommand{\FlipSet}{F} % A set of flips.
\newcommand{\sgt}[2]{#1#2} % A segment between two endpoints.
\newcommand{\ray}[2]{#1#2} % A ray form the first and going through the second point.
\newcommand{\trgl}[3]{#1#2#3} % A triangle between three points.
\newcommand{\PotLine}{\Lambda} % The line potential.
\newcommand{\PotParaLine}{\eta} % The prarllel line potential.
\newcommand{\PotCrossings}{\chi} % The crossing number potential.
\newcommand{\PotCNC}{\chi} % The crossing number potential between central and non-central segments.
\newcommand{\PotCTCNC}{\chi} % The crossing number potential between central and CT plus CT X CT.
\newcommand{\PotXL}{\Phi} % $\PotCrossings + \PotLine$ 
\newcommand{\PotDepth}{\delta} % The depth potential.
\newcommand{\PotProduct}{\varpi} % The depth product potential.
\newcommand{\crossingDepth}{\delta_\times} % The corissing depth.
\newcommand{\PotDotprod}{\rho} % The dot product potential.
\DeclareMathOperator{\degree}{deg} % The degree of a point.
\newcommand{\myR}{R} % A remval strategy.
\newcommand{\myB}{B} % A convex body. 
\newcommand{\lineT}[2]{#1 #2} % The notation for lines through two points.
\newcommand{\vect}[2]{\overrightarrow{#1 #2}} % The notation for vectors through two points.
\newcommand{\nbf}{f} % the number of flips performed by the algorithm in the worst case.
\newcommand{\myg}{g} % A function.
\newcommand{\pair}[2]{#1,#2} % The notation for a pair of segments.
\newcommand{\card}[1]{\left|#1\right|} % The cardinal of a set.
\newcommand{\abs}[1]{\left|#1\right|} % The cardinal of a set.
\newcommand{\floor}[1]{\left\lfloor#1\right\rfloor} % The floor of a number.
\newcommand{\ceil}[1]{\left\lceil#1\right\rceil} % The ceil of a number.
\renewcommand{\vect}[2]{#2 - #1} % A vector as a difference of two points.
\providecommand{\keywords}[1]{\textbf{\textit{Keywords:}} #1}
\providecommand{\email}[1]{\protect\href{mailto:#1}{#1}}
\title{Untangling Segments in the Plane\thanks{Preliminary versions of these results appeared in WALCOM 2023~\cite{FGR23}, WALCOM 2024~\cite{FGR24}, and the PhD dissertation~\cite{Riv23}. This work is supported by the French ANR PRC grant ADDS (ANR-19-CE48-0005).}}
\author[1]{Guilherme D. da Fonseca}
\affil[1]{LIS, Aix-Marseille Université, France \href{https://orcid.org/0000-0002-9807-028X}{[ORCID]} \email{guilherme.fonseca@lis-lab.fr}}
\author[2]{Yan Gerard}
\affil[2]{LIMOS, Université Clermont Auvergne, France \href{https://orcid.org/0000-0002-2664-0650}{[ORCID]} \email{yan.gerard@uca.fr}}
\author[3]{Bastien Rivier}
\affil[3]{Brock University, Canada \href{https://orcid.org/0000-0001-5985-2169}{[ORCID]} \email{brivier@brocku.ca}}
\date{}
\begin{document}
\maketitle
\begin{abstract}
A set of $\myn$ segments in the plane may form a Euclidean TSP tour, a tree, or a matching, among others. Optimal TSP tours as well as minimum spanning trees and perfect matchings have no crossing segments, but several heuristics and approximation algorithms may produce solutions with crossings. If two segments cross, then we can reduce the total length with the following flip operation. We remove a pair of crossing segments, and insert a pair of non-crossing segments, while keeping the same vertex degrees.
In this paper, we consider the number of flips performed under different assumptions, using a new unifying framework that applies to tours, trees, matchings, and other types of (multi)graphs.
Within this framework, we prove several new bounds that are sensitive to whether some endpoints are in convex position or not.
\end{abstract}

\keywords{Reconfiguration, Planar Geometry, Matching, Euclidean TSP}
\section{Introduction}\label{cha:introduction}

In the Euclidean Traveling Salesman Problem (TSP), we are given a set $\myP$ of $\myn$ points in the plane and the goal is to produce a tour of minimum Euclidean length. 
The TSP problem, both in the Euclidean and in the more general graph versions, is one of the most studied NP-hard optimization problems, with several approximation algorithms, as well as powerful heuristics (see for example~\cite{ABCC11,Dav10,GuPu06}).
Multiple PTAS are known for the Euclidean version~\cite{Aro96,Mit99,RaSm98}, in contrast to the general graph version that unlikely admits a PTAS~\cite{ChC19}. It is well known that the optimal solution for the Euclidean TSP is a simple polygon, i.e., a polygon with no crossing segments. In fact, a crossing-free solution is necessary for some applications~\cite{BuKi22}.
However, most approximation algorithms (including Christofides~\cite{Christofides76} and the PTAS~\cite{Aro96,Mit99,RaSm98}), as well as a variety of simple heuristics (nearest neighbor, greedy, and insertion, among others) may produce solutions with pairs of crossing segments.
In practice, these algorithms may be supplemented with a local search phase, in which crossings are removed by iterative modification of the solution.
The objective of this paper is to analyze this local search phase, for TSP as well as for several other problems involving planar segments, using a unifying approach.
Before considering other problems, we provide some definitions and present the previous results on untangling a TSP tour.

\begin{figure}[htb]
 \centering
 \begin{tabularx}{\textwidth}{CCCCC}
        \includegraphics[scale=\graphicsScale,page=4]{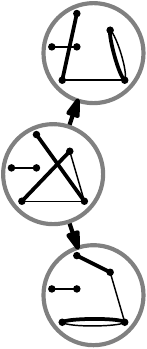}&%
        \includegraphics[scale=\graphicsScale,page=5]{graphics/flips.pdf}&%
        \includegraphics[scale=\graphicsScale,page=2]{graphics/flips.pdf}&%
        \includegraphics[scale=\graphicsScale,page=1]{graphics/flips.pdf}&%
        \includegraphics[scale=\graphicsScale,page=3]{graphics/flips.pdf}\\%
        (a) & (b) & (c) & (d) & (e)
 \end{tabularx}
 \caption{Examples of flips in a (a) TSP tour, (b) tree, (c) monochromatic matching, (d) multigraph, and (e) red-blue matching.}
 \label{fig:flip}
\end{figure}

Given a Euclidean TSP tour, a \emph{flip} is an operation that removes a pair of crossing segments and inserts a new pair of segments preserving a tour (Figure~\ref{fig:flip}(a)).
In order to find a tour without crossing segments starting from an arbitrary tour, it suffices to find a crossing, perform a flip, and repeat until there are no crossings, in a process called \emph{untangle}.
The untangle always terminates as each flip strictly shortens the tour.
Since a flip may create several new crossings (Figure~\ref{fig:flip}(a)), it is not obvious how to bound the number of flips performed to untangle a tour.
Let $\dO_\Cycle(\myn)$ denote the maximum number of flips successively performed on a TSP tour with $\myn$ segments.
An upper bound of $\dO_\Cycle(\myn) = \OO(\myn^3)$ is proved in~\cite{VLe81}, while the best known lower bound is $\dO_\Cycle(\myn) = \Omega(\myn^2)$.
In contrast, if the endpoints $\myP$ satisfy the property of being in convex position, then flips always decrease the number of crossings and tight bounds of $\dO_{\Convex,\Cycle}(\myn) = \Theta(\myn^2)$ are easy to prove.

When the goal is to untangle a TSP tour, we are allowed to \emph{choose} which pair of crossing segments to remove in order to perform fewer flips, which we call \emph{removal choice}.
Let $\dR_\Cycle(\myn)$ denote the number of flips needed to untangle any TSP tour with $\myn$ segments, assuming the best possible removal choice.
If the endpoints $\myP$ are in convex position position, then a tight bound of $\dR_{\Convex,\Cycle}(\myn) = \Theta(\myn)$ has been shown~\cite{OdW07,WCL09}.
However, for points in general position, the best bound known is again $\dR_{\Cycle}(\myn) \leq \dO_{\Cycle}(\myn) = \OO(\myn^3)$, which does not use removal choice, and the only asymptotic lower bound known is the trivial $\dR_{\Cycle}(\myn) = \Omega(\myn)$.

The same flip operation may be applied in other settings. 
More precisely, a \emph{flip} consists of removing a pair of crossing segments $\mys_1,\mys_2$ and inserting a pair of segments $\mys'_1, \mys'_2$ in a way that $\mys_1,\mys'_1,\mys_2,\mys'_2$ forms a cycle and a certain \emph{graph property} is preserved. In the case of \emph{TSP} tours, the property is being a Hamiltonian cycle. Other properties have also been studied, such as spanning \emph{trees} (Figure~\ref{fig:flip}(b)), perfect \emph{matchings} (Figure~\ref{fig:flip}(c)), and \emph{multigraphs} (Figure~\ref{fig:flip}(d)).
Notice that flips preserve the degrees of all vertices and that multiple copies of the same edge may appear when we perform a flip on certain graphs.
Biniaz et al.~\cite{BMS19} showed that if the graph property is being a tree, the points are in convex position, and we perform the correct removal choice, then $ \dR_{\Convex,\Tree}(\myn) = \OO(\myn \log \myn)$ flips suffice to untangle the tree.

Notice that, when the graph property is being a tour or a tree, choosing which pair of crossing edges we remove determines which pair of crossing edges we insert. However, this is not the case for matchings and multigraphs, where we are also allowed to choose which pair of segments to insert among two possibilities, which we call \emph{insertion choice} (Figure~\ref{fig:flip}(c) and (d)).

Bonnet et al.~\cite{BoM16} showed that, using insertion choice, it is possible to untangle a matching using $\dI_\Matching(\myn) = \OO(\myn^2)$ flips.
Let $\sigma$ be the \emph{spread} of $\myP$, that is, the ratio between the maximum and minimum distances among points in $\myP$. Notice that $\sigma = \Omega(\sqrt{\myn})$ but may be arbitrarily large. 
Biniaz et al.~\cite{BMS19} showed an alternative bound of $\dI_{\Property,\Matching}(\myn, \sigma) = \OO(\myn \sigma)$ flips when the endpoints have spread $\sigma$. 
Biniaz et al.~\cite{BMS19} also showed that if both removal and insertion choices are used and $\myP$ is in convex position position, then $\dRI_{\Convex,\Matching}(\myn) = \OO(\myn)$ flips suffice to untangle a matching.

Another version of the problem that has been studied is the \emph{red-blue} version (Figure~\ref{fig:flip}(e)).
In this version, the endpoints $\myP$ are partitioned into $\myn$ \emph{red} points and $\myn$ \emph{blue} points, while the segments must connect points of different colors, forming a red-blue perfect matching.
Notice that insertion choice is not possible in this version.
If the endpoints $\myP$ satisfy the property $\Property$ of all red points being colinear, then using removal choice $\dR_{\Property,\RedBlue}(\myn) = \OO(\myn^2)$ flips suffice~\cite{BMS19,DDFGR22} and the only lower bound known is $\dR_{\Property,\RedBlue}(\myn) = \Omega(\myn)$.
If the endpoints $\myP$ are in convex position, then a tight bound of $\dR_{\Convex,\RedBlue}(\myn) = \Theta(\myn)$ is known~\cite{BMS19}.

\subsection{New Results}

Using removal or insertion choices to obtain shorter flip sequences has not been studied in a systematic way before and opens several new questions, while unifying the solution to multiple  problems. In contrast, previous bounds are usually stated for a single graph property. For example, the seminal TSP bound of $\dO_\Cycle(\myn) = \OO(\myn^3)$ flips~\cite{VLe81} has been rediscovered (with nearly identical proofs) in the context of matchings as $\dO_\Matching(\myn) = \OO(\myn^3)$ after 35 years~\cite{BoM16}.
As another example, the tree bound of $\dR_{\Convex,\Tree}(\myn) = \OO(\myn \log \myn)$ from~\cite{BMS19} uses very specific properties of trees, while our new bound of $\dR_{\Convex}(\myn) = \OO(\myn \log \myn)$ uses a completely different approach that holds for trees, TSP tours, matchings, and arbitrary multigraphs.

Using the framework of choices, we are able to state the results in a more general setting.
Upper bounds for multigraphs that do not use insertion choice apply to all aforementioned problems. In contrast, bounds that use insertion choice are unlikely to generalize to red-blue matchings, TSP tours, or trees, where insertion choice is not available. Still, they generally hold for both monochromatic matchings as well as multigraphs. 

The goal of the paper is to obtain improved bounds in this framework of removal and insertion choices, providing unified proofs to multiple problems. In order to bridge the gaps between points in convex and in general position, we consider configurations where most or all points are in convex position.
Let $\myP = \myC \cup \myT$ where $\myC$ is in convex position and the points of $\myT$ are placed anywhere. Let $\myS$ be a multiset of $\myn$ segments with endpoints $\myP$ and $\myt$ be the sum of the degrees of the points in $\myT$ (and define $\myt = \OO(1)$ if $\myT = \emptyset$). We prove the following results to obtain a crossing-free solution from a set of segments $\myS$. Some bounds are summarized in Table~\ref{tab:results}.

\paragraph{Using no choice (Section~\ref{cha:O}):} We show that a flip sequence has at most $\OO(\myt\myn^2)$ flips. This bound continuously interpolates between the best bounds known for the convex and general cases and holds for TSP tours, matchings, red-blue matching, trees, and multigraphs. In fact, we show that when no choice is used, all lower and upper bounds for matchings can be converted into bounds for the remaining problems (within a constant factor). We also show that any flip sequence has $\OO(\myn^{8/3})$ \emph{distinct} flips.

\paragraph{Using only removal choice (Section~\ref{cha:R}):} If $\abs{\myT} \leq 2$, then $\OO(\myt^2 \myn + \myn  \log \myn)$ flips suffice. These bounds hold for TSP tours, matchings, red-blue matching, trees, and multigraphs.
The $\OO(n \log \myn)$ term is removed for the special cases of TSP tours and red-blue matchings.

\paragraph{Using only insertion choice (Section~\ref{cha:I}):} In the convex case ($\myT=\emptyset$), we show that $\OO(\myn \log \myn)$ flips suffice. If $\myT$ is separated from $\myC$ by two parallel lines, then $\OO(\myt \myn \log \myn)$ flips suffice.

\paragraph{Using both removal and insertion choices (Section~\ref{cha:RI}):} If $\myT$ is separated from $\myC$ by two parallel lines, then $\OO(\myt \myn)$ flips suffice. If $\myT$ is anywhere outside the convex hull of $\myC$ and $\myS$ is a matching, then $\OO(\myt^3 \myn)$ flips suffice.

\begin{table}[htb]
    \small
    \centering
    \begin{tabular}{|l||l|l|l|l|l|l|}\cline{1-7}
Graph: & \mc{2}{TSP, Red-Blue} & \mc{4}{Monochromatic Matching} \\\hline
Choices: & $\emptyset$ & R & $\emptyset$ & R & I & RI\\\hline\hline
Convex & $\bm{\myn^2}$ & $\bm{\myn}$ \cite{BMS19,OdW07,WCL09} & $\bm{\myn^2}$ & \cellcolor{yellow!50}$\myn \log \myn$ (\ref{thm:RUpperConvex}) &  \cellcolor{yellow!50}$\myn \log \myn$ (\ref{thm:convexI}) & $\bm{\myn}$~\cite{BMS19}     \\\hline
$\abs{\myT} = 1$ & \cellcolor{yellow!50}$\bm{\myn^2}$ (\ref{thm:ConvexToGeneralBis}) & \cellcolor{yellow!50}$\bm{\myn}$  (\ref{thm:1InsideOutsideR}) & \cellcolor{yellow!50}$\bm{\myn^2}$ (\ref{thm:ConvexToGeneralBis}) & \cellcolor{yellow!50}$\myn \log \myn$ (\ref{thm:1InsideOutsideR}) & $\myn^2$~\cite{BoM16} & \cellcolor{yellow!50}$\bm{\myn}$ (\ref{thm:1InsideOutsideR}) \\\hline
$\abs{\myT} = 2$ & \cellcolor{yellow!50}$\bm{\myn^2}$ (\ref{thm:ConvexToGeneralBis}) & \cellcolor{yellow!50}$\bm{\myn}$  (\ref{thm:2InsideOutsideR}) & \cellcolor{yellow!50}$\bm{\myn^2}$ (\ref{thm:ConvexToGeneralBis}) & \cellcolor{yellow!50}$\myn \log \myn$  (\ref{thm:2InsideOutsideR}) & $\myn^2$~\cite{BoM16} & \cellcolor{yellow!50}$\bm{\myn}$ (\ref{thm:2InsideOutsideR})\\\hline
$\myT$ separated & \mc{4}{\cellcolor{yellow!50}$\myt\myn^2$ (\ref{thm:ConvexToGeneralBis})} & \cellcolor{yellow!50}$\myt\myn \log \myn$ (\ref{thm:separatedI}) & \cellcolor{yellow!50}$\myt\myn$ (\ref{thm:separatedRI})\\\hline
$\myT$ outside & \mc{4}{\cellcolor{yellow!50}$\myt\myn^2$ (\ref{thm:ConvexToGeneralBis})} & $\myn^2$~\cite{BoM16} & \cellcolor{yellow!50}$\myt^3\myn$ (\ref{thm:nearConvexRI})\\\hline
$\myT$ anywhere & \mc{4}{\cellcolor{yellow!50}$\myt\myn^2$ (\ref{thm:ConvexToGeneralBis})} & \mc{2}{$\myn^2$~\cite{BoM16}}\\\hline    
    \end{tabular}
    \caption{Upper bounds to different versions of the problem with points having $\OO(1)$ degree. The letter R corresponds to removal choice, I to insertion choice, and $\emptyset$ to no choice. New results are highlighted in yellow with the theorem number in parenthesis and tight bounds are bold.}
    \label{tab:results}
\end{table}

\medskip

These bounds improve several previous upper bounds when most points are in convex position with a small number of additional points in general position. Essentially, we obtain linear or near-linear upper bounds when we have removal or insertion choice and quadratic upper bounds when no choice is available. 
Conjecture has been made that the convex upper bounds hold asymptotically for points in general position~\cite{BoM16,DDFGR22}.

In a matching or TSP tour, we have $\myt = \OO(\abs{\myT})$ and $\myn = \OO(\abs{\myP})$, however, in a tree, $\myt$ can be as high as $\OO(\myn)$. 
In a multigraph $\myt$ and $\myn$ can be arbitrarily larger than $\abs{\myT}$ and $\abs{\myP}$. The theorems describe more precise bounds as functions of all these parameters. For simplicity, the introduction only describes the bounds in terms of $\myn$ and $\myt$.

\subsection{Related Reconfiguration Problems}

Combinatorial reconfiguration studies the step-by-step transition from one solution to another for a given combinatorial problem. Many reconfiguration problems are presented in~\cite{Heu13}. We give a brief overview of reconfiguration among $\myn$ line segments using alternative flip operations.

The \emph{2OPT flip} is not restricted to crossing segments. It removes and inserts pairs of segments (the four segments forming a cycle) as long as the total length decreases. In contrast to flips restricted to crossing segments, the number of 2OPT flips performed may be exponential~\cite{ERV14}. 

It is possible to relax the flip definition even further to all operations that replace two segments by two others such that the four segments together form a cycle~\cite{BeI08,BeI17,BBH19,BJ20,EKM13,Wil99}. This definition has also been considered for multigraphs~\cite{Hak62,Hak63,phdJof}. 

Another type of flip consists of removing a single segment and inserting another one. 
Such flips are widely studied for triangulations~\cite{AMP15,HNU99,Law72,LuP15,NiN18,Pil14}.
They have also been considered for non-crossing trees~\cite{ABDK22,BKU24}, paths, and matchings~\cite{ABPS24}. It is possible to reconfigure any two non-crossing paths using $\OO(\myn)$ flips if the points are in convex position~\cite{AIM07,ChWu09,KKR24},  if there is \emph{one} point \emph{inside} the convex hull~\cite{AKLM23}, and if there is \emph{one} point \emph{outside} the convex hull~\cite{BFR24}. It is also possible to reconfigure any two non-crossing paths (using an unknown number of flips) if the endpoints are in two convex layers~\cite{KKR24}.

\subsection{Overview}

In Section~\ref{cha:preliminaries}, we precisely define the notation used throughout the paper and present some key lemmas from previous work. In Section~\ref{cha:O}, we present the bounds without removal or insertion choices as well as several reductions. In Section~\ref{cha:R}, we present the bounds with removal choice. In Section~\ref{cha:I}, we present the bounds with insertion choice. In Section~\ref{cha:RI}, we present the bounds with both removal and insertion choices. Concluding remarks and open problems are presented in Section~\ref{cha:conclusion}.

\section{Preliminaries}\label{cha:preliminaries}
This section contains a collection of definitions, and lemmas used throughout the paper.

\paragraph{Asymptotic bounds incorrectly fading to zero.} Our bounds often have terms like $\OO(\myt\myn)$ and $\OO(\myn \log \card{\myC})$ that would incorrectly become $0$ if $\myt$ or $\log \card{\myC}$ is $0$. In order to avoid this problem, factors in the $\OO$ notation should be made at least 1. For example, the aforementioned bounds should be respectively interpreted as $\OO((1+\myt) \myn)$ and $\OO(\myn \log (2+\card{\myC}))$.

\paragraph{General position.} Throughout this paper, we assume general position of the endpoints of the segments we consider. In particular, we exclude three colinear endpoints, in which case a flip and a crossing are not well defined. In some proofs, we assume that all $y$-coordinates are distinct, but it is straightforward to remove this assumption. We also assume that the two endpoints of a segment are always distinct.

\subsection{General Definitions}
In the following, we summarize important definitions used throughout this paper.

\paragraph{Segment types.} Given two (possibly equal) sets $\myP_1,\myP_2$ of endpoints, we say that a segment is a \emph{$\myP_1\myP_2$-segment} if one of its endpoints is in $\myP_1$ and the other is in $\myP_2$.

\paragraph{Crossings.} We say that two segments \emph{cross} if they intersect at a single point which is not an endpoint of either segment. A pair of crossing segments is called a \emph{crossing}. We also say that a line and a segment \emph{cross} if they intersect at a single point which is not an endpoint of the segment.

If $\myh$ is a segment or a line (respectively a set of two parallel lines) we say that $\myh$ \emph{separates} a set of points $\myP$ if $\myP$ can be partitioned into two non-empty sets $\myP_1,\myP_2$ such that every segment $\sgt{\myp_1}{\myp_2}$ with $\myp_1 \in \myP_1, \myp_2 \in \myP_2$ crosses $\myh$ (respectively crosses at least one of the lines in $\myh$).

\subsection{Properties}

We freely interpret a multiset $\myS$ of $\myn$ segments with endpoints $\myP$ in the plane as the multigraph $(\myP,\myS)$ where the set of vertices is the set of endpoints and where the multiset of edges is the multiset of segments. In some cases the endpoints $\myP$ have a color (typically red or blue).

We consider the following two types of properties.

\begin{enumerate}
    \item The \emph{point properties} are the properties about the position of the endpoints. A notable point property is that the endpoints are in convex position. Throughout, we refer to this property as $\Convex$.
    \item The \emph{graph properties} are the class of graphs before and after a flip. Example: the segments form a matching.
\end{enumerate}

The \textbf{graph properties} $\Gamma$ considered is this paper are the following:
\begin{itemize}
    \item $\Cycle$: Being a Hamiltonian cycle.
    \item $\Matching$: Being a perfect matching.
    \item $\RedBlue$: Being a perfect matching where each segment connects a red endpoint to a blue endpoint.
    \item $\Tree$: Being a spanning tree.
    \item $\Multigraph$: Being a multigraph. Since this property imposes no restriction on the segments, it may not be explicitly written on the bounds.
\end{itemize}

\subsection{Flips and Flip Graphs}

Given a graph property $\Gamma$ and a multiset of segments $S$ satisfying $\Gamma$, a \emph{flip} $\flip$ is the operation of removing two crossing segments $\mys_1,\mys_2 \in S$ and inserting two non-crossing segments $\mys_1',\mys_2'$ with the same four endpoints that preserve the property $\Gamma$.

Given a set of endpoints $P$, an integer $n$, and a graph property $\Gamma$, the \emph{flip graph} $F(P,n,\Gamma)$ is defined as the following directed acyclic simple graph. The vertices of the flip graph are the multisets of $n$ segments with endpoints in $P$ that satisfy the property $\Gamma$. There is a directed arc from $S_1$ to $S_2$ if there exists a flip $\flip$ such that $S_2 = \flip(S_1)$. Figure~\ref{fig:reconf}(a) and~\ref{fig:strategy}(a) display examples of flip graphs in the TSP and matching versions, respectively. 

\begin{figure}[htb]
    \centering\hspace*{\stretch{1}}%
    \pbox[b]{\textwidth}{\centering\includegraphics[scale=\graphicsScale,page=1]{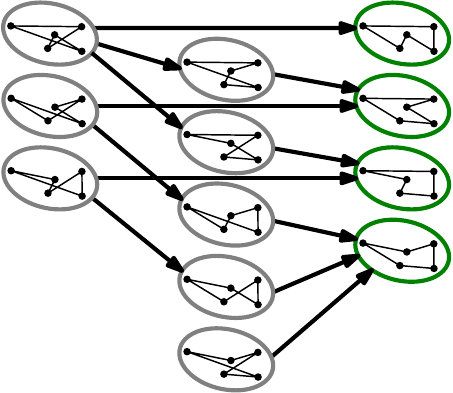}\newline(a)}\hspace*{\stretch{2}}%
    \pbox[b]{\textwidth}{\centering\includegraphics[scale=\graphicsScale,page=2]{sequence}\newline(b)}\hspace*{\stretch{2}}%
 \caption{The (a) flip graph and a (b) removal strategy of a five-point set in the TSP version.
 }
 \label{fig:reconf}
\end{figure}

Given a flip graph $F$, a \emph{strategy} is a spanning subgraph of $F$ with the same set of sink vertices (a sink vertex corresponds to a crossing-free set of segments).
A \emph{removal strategy} is a strategy $R$ such that each vertex $S$ of $F$ has an associated crossing pair of segments $s_1,s_2 \in S$ and $R$ only contains the arcs from $S$ that remove $s_1,s_2$. Figures~\ref{fig:reconf}(b) and~\ref{fig:strategy}(b) display examples of removal strategies in the TSP and matching versions, respectively. 

The following strategies with insertion choice are only possible if the graph property $\Gamma$ is such that for every crossing pair of segments removed, there are two possible pairs of segment that may be added.
A \emph{strategy for both removal and insertion choices} is a strategy $RI$ such that each vertex of $RI$ that is not a sink in $F$ has out-degree exactly one. 
An \emph{insertion strategy} is a strategy $I$ obtained from $F$ by erasing the following arcs. For each pair of arcs $f,f'$ coming from the same vertex $S$ and with the same pair of crossing segments $s_1,s_2 \in S$ removed, exactly one of the two arcs $f,f'$ is erased and the other is kept in $I$. Figure~\ref{fig:strategy}(c) and~\ref{fig:strategy}(d)  display examples of a removal strategy and a strategy for both choices, respectively, in the matching version.

\begin{figure}[p]
    \centering
    \hspace*{\stretch{1}}%
    \pbox[b]{\textwidth}{\centering\includegraphics[scale=\graphicsScale,page=1]{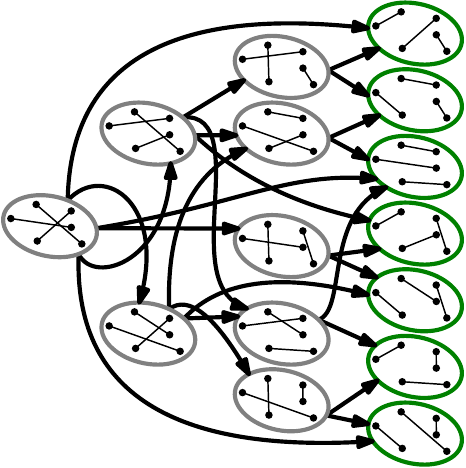}\newline(a)}\hspace*{\stretch{2}}%
    \pbox[b]{\textwidth}{\centering\includegraphics[scale=\graphicsScale,page=2]{strategy}\newline(b)}\hspace*{\stretch{2}}\\
    \medskip
    \hspace*{\stretch{1}}%
    \pbox[b]{\textwidth}{\centering\includegraphics[scale=\graphicsScale,page=3]{strategy}\newline(c)}\hspace*{\stretch{2}}%
    \pbox[b]{\textwidth}{\centering\includegraphics[scale=\graphicsScale,page=4]{strategy}\newline(d)}\hspace*{\stretch{2}}\\    
 \caption{(a) The flip graph $F$ of a six-point set in the matching version. (b) Removal, (c) insertion, and (d) both removal and insertion strategies for the flip graph $F$. Pairs of edges corresponding to the same crossing are grouped.
 }
 \label{fig:strategy}
\end{figure}

Let $\mathcal{R}(F),\mathcal{I}(F), \mathcal{RI}(F)$ respectively denote the collection of all strategies for the flip graph $F$ that are removal strategies, insertion strategies, and strategies for both removal and insertion choices.

A \emph{flip sequence} is a path in the flip graph (or a strategy, which is a spanning subgraph of the flip graph) and an \emph{untangle sequence} is a flip sequence that ends in a sink of the flip graph. The \emph{length} of a flip sequence is the number of flips (the arcs of the flip graph) it contains. Given a graph $F$, let $L(F)$ denote the length of the longest directed path in $F$. Notice that $L(F)=3$ for the flip graph $F$ in Figure~\ref{fig:strategy}(a), $L(R)=2$ for the removal strategy $R$ in Figure~\ref{fig:strategy}(b), $L(I)=2$ for the insertion strategy $I$ in Figure~\ref{fig:strategy}(c), and $L(RI)=1$ for the removal and insertion strategy $RI$ in Figure~\ref{fig:strategy}(d).

We are now ready to formally define the following four parameters, which are the central object of this paper.
\begin{alignat*}{4}
    \dO_{\Property,\Gamma}(\myn) \quad &&= \quad \max_{P \text{ satisfying } \Property} \quad &&\;&& L(F(P,n,\Gamma)) \\
    \dR_{\Property,\Gamma}(\myn) \quad &&= \quad \max_{P \text{ satisfying } \Property} \quad && \min_{R \in \mathcal{R}(F(P,n,\Gamma))} && L(R) \\
    \dI_{\Property,\Gamma}(\myn) \quad &&= \quad \max_{P \text{ satisfying } \Property} \quad && \min_{I \in \mathcal{I}(F(P,n,\Gamma))} && L(I) \\
    \dRI_{\Property,\Gamma}(\myn) \quad &&= \quad \max_{P \text{ satisfying } \Property} \quad && \min_{RI \in \mathcal{RI}(F(P,n,\Gamma))} && L(RI) 
\end{alignat*}
To simplify the notation, we omit $\Property$ when it is the most general point property of being a set of points in general position, and we omit $\Gamma$ when it is the most general graph property of being any multigraph.

It is convenient to imagine that two clever players, let us call them \emph{the oracle} and \emph{the adversary}, are playing the following game using the flip graph as a board. The oracle aims at minimizing the number of flips, while the adversary aims at maximizing it. Initially, the adversary chooses the starting vertex in the flip graph. The game ends when a crossing-free vertex is reached. The four definitions corresponds to the number of flips with different choices for the oracle (specified in the exponent).
The number of flips is $\dO$ if the adversary performs all the choices, it is $\dR$ if the oracle performs all the removal choices while the adversary performs all the insertion choices, it is $\dI$ if the oracle performs all the insertion choices while the adversary performs all the removal choices, it is $\dRI$ if the oracle performs all the removal and all the insertion choices (the adversary only gets to choose the starting vertex).

\subsection{Reductions}

In this section, we state some trivial inequalities between the different versions of $\D$. First, as we have fewer choices, the value of $\D$ can only increase.

\begin{lemma}
    \label{lem:reductionChoice}
    The following inequalities hold for any non-negative integer $\myn$,  point property $\Property$, and graph property $\Gamma$.
    \[
    \dRI_{\Property,\Gamma}(\myn) \leq
        \dR_{\Property,\Gamma}(\myn) 
    \leq \dO_{\Property,\Gamma}(\myn)
    \]
    \[
    \dRI_{\Property,\Gamma}(\myn) \leq
        \dI_{\Property,\Gamma}(\myn)
    \leq \dO_{\Property,\Gamma}(\myn)
    \]
\end{lemma}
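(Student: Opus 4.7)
The plan is to derive all four inequalities from one observation: each strategy class on the right-hand side contains, or can be refined into, a strategy in the class on the left-hand side, and passing to a spanning subgraph can only make the longest directed path shorter or equal. Since each of $\dRI$, $\dR$, $\dI$, $\dO$ is a max over point configurations of either $L(F)$ or a min-of-$L$ over a class of spanning subgraphs of $F$, it suffices to prove the corresponding per-$F$ inequalities on lengths.

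First I would dispatch $\dR_{\Property,\Gamma}(\myn) \leq \dO_{\Property,\Gamma}(\myn)$ and $\dI_{\Property,\Gamma}(\myn) \leq \dO_{\Property,\Gamma}(\myn)$. By definition every $R \in \mathcal{R}(F)$ and every $I \in \mathcal{I}(F)$ is a spanning subgraph of $F$, so any directed path in such a strategy is also a directed path in $F$, giving $L(R) \leq L(F)$ and $L(I) \leq L(F)$. Taking the minimum over $R$ (resp.\ $I$) and then the maximum over point sets $P$ satisfying $\Property$ yields the claimed inequalities.

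Next I would prove $\dRI_{\Property,\Gamma}(\myn) \leq \dR_{\Property,\Gamma}(\myn)$ by a refinement construction. Given any $R \in \mathcal{R}(F)$, at each non-sink vertex $S$ of $R$ there is at least one outgoing arc (since $R$ is spanning and has the same sinks as $F$), and there are at most two (corresponding to the two insertion choices on the fixed removal pair). Pick one outgoing arc per non-sink vertex to obtain a spanning subgraph $RI \subseteq R$ with the same sinks as $F$ and out-degree exactly one at every non-sink, hence $RI \in \mathcal{RI}(F)$. Since $RI$ is a subgraph of $R$, we have $L(RI) \leq L(R)$, so $\min_{RI \in \mathcal{RI}(F)} L(RI) \leq L(R)$; minimizing over $R$ and maximizing over $P$ gives the inequality. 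The proof of $\dRI_{\Property,\Gamma}(\myn) \leq \dI_{\Property,\Gamma}(\myn)$ is essentially identical: starting from $I \in \mathcal{I}(F)$, pick one outgoing arc per non-sink to obtain a valid $RI$-strategy contained in $I$.

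There is no real obstacle: the only thing to be careful about is verifying that the refinement is a legitimate member of $\mathcal{RI}(F)$, i.e.\ that it is spanning, preserves sinks of $F$, and has out-degree exactly one at each non-sink, all of which follow directly from the fact that the parent strategy is spanning with the same sink set as $F$ and that at least one outgoing arc exists at every non-sink.
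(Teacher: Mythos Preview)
Your proof is correct and is precisely the natural formalization of what the paper has in mind; the paper does not actually provide a proof of this lemma, merely stating it as a trivial inequality following from the observation that ``as we have fewer choices, the value of $\D$ can only increase.'' Your argument via spanning-subgraph containment and strategy refinement is the standard way to make that sentence precise, and there is nothing to add.
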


Also, as the point set becomes less constrained, the value of $\D$ can only increase. Furthermore, if the graph property is less constrained, it is easy to show that the value of $\D$ can only increase using the fact that insertion strategies are only possible when both insertion choices preserve the graph property at every flip.

\begin{lemma}
    \label{lem:reductionProperty}
    The following inequality holds for any non-negative integer $\myn$, two point properties properties $\Property,\Property'$ such that $\Property$ implies $\Property'$,  two graph properties properties $\Gamma,\Gamma'$ such that $\Gamma$ implies $\Gamma'$, and for any $\Choices \in  \{\stO, \stR, \stI, \stRI\}$ that are possible for $\Gamma$.
    \[ \D^\Choices_{\Property,\Gamma}(\myn) \leq \D^\Choices_{\Property',\Gamma'}(\myn) \]
\end{lemma}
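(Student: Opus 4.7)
The plan is to split the inequality as
\[ \D^\Choices_{\Property,\Gamma}(\myn) \leq \D^\Choices_{\Property,\Gamma'}(\myn) \leq \D^\Choices_{\Property',\Gamma'}(\myn). \]
The second inequality is immediate from the outer maximum in the definition of $\D$: since $\Property$ implies $\Property'$, the maximum on the left ranges over a subfamily of the point sets ranged over on the right. All the real work lies in the first inequality, which I would establish by fixing a point set $P$ satisfying $\Property$ and pulling each strategy on $F(P,\myn,\Gamma')$ back to one on $F(P,\myn,\Gamma)$ whose longest directed path is no longer. The driving observation is that $\Gamma \Rightarrow \Gamma'$ yields a containment at the flip-graph level: every $\Gamma$-multiset is a $\Gamma'$-multiset, every $\Gamma$-preserving flip is $\Gamma'$-preserving, and the crossing-free sinks of $F(P,\myn,\Gamma)$ are still sinks of $F(P,\myn,\Gamma')$.

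For $\Choices = \stO$ this immediately yields $L(F(P,\myn,\Gamma)) \leq L(F(P,\myn,\Gamma'))$, since any longest directed path on the left remains a directed path on the right. For $\Choices = \stR$ I would fix an arbitrary $R' \in \mathcal{R}(F(P,\myn,\Gamma'))$ and build $R \in \mathcal{R}(F(P,\myn,\Gamma))$ by assigning to each non-sink $S \in F(P,\myn,\Gamma)$ the same crossing pair that $R'$ assigns to $S$, retaining exactly the arcs of $F(P,\myn,\Gamma)$ that remove this pair. This is well-defined provided the chosen pair always admits a $\Gamma$-preserving flip from $S$; this holds for each graph property considered here, since for $\Cycle$ and $\Tree$ the unique reconnection preserving the property is non-crossing, while for $\Matching$, $\RedBlue$, and $\Multigraph$ at least one reconnection is non-crossing and preserves the property. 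Because every arc of $R$ is an arc of $R'$, any directed path in $R$ is a path in $R'$, so $L(R) \leq L(R')$, and taking the minimum over $R'$ closes the case.

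For $\Choices \in \{\stI,\stRI\}$ I would invoke the principle highlighted just before the lemma: insertion-based choices are only defined when both insertions preserve the graph property at every flip. Applied to $\Gamma$ and combined with $\Gamma \Rightarrow \Gamma'$, both insertions also preserve $\Gamma'$, so at every vertex $S$ shared by the two flip graphs the outgoing arcs are identical. Consequently, any insertion (resp.\ removal-and-insertion) strategy $T'$ on $F(P,\myn,\Gamma')$ restricts, by inheriting every one of its choices at the vertices of $F(P,\myn,\Gamma)$, to a strategy $T$ of the same type on $F(P,\myn,\Gamma)$ with $L(T) \leq L(T')$, and one more minimisation completes the proof.

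The main obstacle I anticipate is the one step in the $\stR$ case where one has to certify that the crossing pair chosen by $R'$ really does admit a $\Gamma$-preserving flip out of a $\Gamma$-multiset: this is the only point at which a per-property verification is unavoidable, though each case reduces to the standard fact that some reconnection of a pair of crossing segments preserves the property and is non-crossing.
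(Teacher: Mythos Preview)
Your proposal is correct and matches the paper's reasoning, which in fact does not give a formal proof at all: the lemma is stated after a one-sentence remark that ``if the graph property is less constrained, it is easy to show that the value of $\D$ can only increase using the fact that insertion strategies are only possible when both insertion choices preserve the graph property at every flip.'' Your plan is precisely a fleshed-out version of this hint, including the key observation for $\Choices \in \{\stI,\stRI\}$ that the paper singles out.
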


Finally, we present some inequalities relating matchings and red-blue matchings.

\begin{lemma}
    \label{lem:reductionTransfer}
    The following inequalities hold for any non-negative integer $\myn$, and for any point property $\Property$.
    \begin{align*}
        \dRI_{\Property,\Matching}(\myn) & \leq \dR_{\Property,\RedBlue}(\myn) \\
        \dI_{\Property,\Matching}(\myn) & \leq \dO_{\Property,\RedBlue}(\myn)
    \end{align*}
\end{lemma}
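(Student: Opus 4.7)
The argument hinges on a correspondence between monochromatic and red-blue matchings. Any monochromatic matching $\myS$ of $\myn$ edges on $2\myn$ endpoints $\myP$ becomes a red-blue matching under a suitable 2-coloring $\chi$ of $\myP$: declare one endpoint of each edge of $\myS$ red and the other blue, producing exactly $\myn$ red and $\myn$ blue points. Once $\chi$ is fixed, each crossing pair $\mya\myb, \myc\myd$ in a $\chi$-red-blue matching admits exactly one red-blue-preserving insertion among $\{\mya\myc,\myb\myd\}$ and $\{\mya\myd,\myb\myc\}$ (the other would create two monochromatic edges). Writing $F_\chi$ for $F(\myP,\myn,\RedBlue)$ under the coloring $\chi$, this identifies $F_\chi$ with a subgraph of $F(\myP,\myn,\Matching)$ in which the insertion is forced at every arc. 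The main subtlety is that a single global $\chi$ does not work for the whole strategy -- from a matching not red-blue under $\chi$ no $\chi$-preserving insertion exists -- so I use a \emph{different} coloring per starting matching and rely on the fact that both the $\dRI$-optimal and $\dI$-optimal strategies can be computed vertex by vertex.

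\textbf{First inequality: $\dRI_{\Property,\Matching}(\myn) \leq \dR_{\Property,\RedBlue}(\myn)$.} Fix $\myP$ satisfying $\Property$. Since the oracle may choose any single outgoing arc at each vertex, the optimal $RI$-strategy picks the first arc of a shortest path to a sink, giving
\[
\min_{RI \in \mathcal{RI}(F(\myP,\myn,\Matching))} L(RI) \;=\; \max_\myS \shortestUntangle(\myS),
\]
where $\shortestUntangle(\myS)$ is the length of a shortest flip sequence from $\myS$ to a sink in $F(\myP,\myn,\Matching)$. To bound each $\shortestUntangle(\myS)$, I pick a coloring $\chi$ making $\myS$ red-blue and invoke a removal strategy $R \in \mathcal{R}(F_\chi)$ with $L(R) \leq \dR_{\Property,\RedBlue}(\myn)$; its unique path from $\myS$ to a sink, viewed inside $F(\myP,\myn,\Matching)$ via the subgraph inclusion, is a flip sequence of length at most $\dR_{\Property,\RedBlue}(\myn)$, so $\shortestUntangle(\myS) \leq \dR_{\Property,\RedBlue}(\myn)$.

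\textbf{Second inequality: $\dI_{\Property,\Matching}(\myn) \leq \dO_{\Property,\RedBlue}(\myn)$.} Here the adversary picks removals, so I argue by induction on the longest chain below $\myS$ in $F(\myP,\myn,\Matching)$. Define $d'(\myS)=0$ at sinks and $d'(\myS) = 1 + \max_c \min_i d'(\flip(\myS,c,i))$ otherwise, where $c$ ranges over crossing pairs of $\myS$ and $i\in\{1,2\}$ over insertion options. Because insertion choices at distinct vertices are independent, the strategy $I^*$ defined at each $(\myS,c)$ by picking the insertion $i$ minimizing $d'(\flip(\myS,c,i))$ achieves, from every $\myS$, a longest outgoing path of exactly $d'(\myS)$, while every other insertion strategy has longest $\myS$-path at least $d'(\myS)$; hence $\dI_{\Property,\Matching}(\myn) = \max_\myS d'(\myS)$. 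To bound $d'(\myS)$, pick $\chi$ making $\myS$ red-blue and, for each crossing $c$ of $\myS$, let $i^*(c)$ be the $\chi$-preserving insertion; then $\myS_c := \flip(\myS,c,i^*(c))$ is red-blue under $\chi$ and, by the inductive hypothesis applied with the same $\chi$, $d'(\myS_c)$ is at most the longest $\myS_c$-path in $F_\chi$. Since the successors of $\myS$ in $F_\chi$ are exactly the $\myS_c$, prepending the arc $\myS \to \myS_c$ yields $d'(\myS) \leq L(F_\chi) \leq \dO_{\Property,\RedBlue}(\myn)$, completing the induction.
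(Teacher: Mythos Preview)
The paper states this lemma without proof, listing it alongside Lemmas~\ref{lem:reductionChoice} and~\ref{lem:reductionProperty} as one of the ``trivial inequalities'' in the preliminaries. So there is no paper argument to compare against, only your proof to check.

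Your proof is correct. The core idea---two-colour a monochromatic matching to obtain a red-blue one, and observe that the unique colour-preserving insertion at each crossing realises an insertion choice on the monochromatic side---is exactly the intended mechanism. You are also right that one global colouring $\chi$ cannot serve every starting matching simultaneously, and your resolution via the pointwise characterisations $\min_{RI}L(RI)=\max_\myS \shortestUntangle(\myS)$ and $\min_I L(I)=\max_\myS d'(\myS)$ is sound (both identities hold because the flip graph is a DAG and a uniformly optimal strategy exists, as you verify).

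One small simplification for the second inequality: once you have $d'(\myS)=\min_I L_I(\myS)$, the induction is unnecessary. For any $\chi$ making $\myS$ red-blue, the insertion strategy $I_\chi$ that always picks the $\chi$-preserving insertion (and is arbitrary at matchings not red-blue under $\chi$) has the property that every path in $I_\chi$ starting at $\myS$ stays inside $F_\chi$, so $L_{I_\chi}(\myS)$ equals the longest $\myS$-path in $F_\chi$. Hence $d'(\myS)\le L_{I_\chi}(\myS)\le L(F_\chi)\le \dO_{\Property,\RedBlue}(\myn)$ directly. Your induction on $d'$ is just an unrolled version of this observation; either way the argument goes through.
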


\subsection{Splitting Lemma}

The notion of splitting first appears in~\cite{BoM16}. The idea is that some subsets of segments may be untangled independently. We start with a definition. 

Given a multiset $S$ of segments, a \emph{splitting partition} is a partition of $S$ into $k > 1$ subsets $S_1,\ldots,S_k$ such that the following condition holds.
Let $P_i$ be the set of endpoints of $S_i$ and $\binom{P_i}{2}$ be the set of segments with endpoints in $P_i$. The condition is that there is no pair of crossing segments $s_i \in \binom{P_i}{2}$ and $s_j \in \binom{P_j}{2}$ for $i \neq j$.

\begin{lemma}[\cite{BoM16,FGR24}]
\label{lem:splitting}
Consider a multiset of segments $S$ and a splitting partition $S_1,\ldots,S_k$. The subsets $S_1,\ldots,S_k$ may be untangled independently in any order to obtain an untangle sequence of $S$ and every untangle sequence of $S$ may be obtained in such a way.
\end{lemma}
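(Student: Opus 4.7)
The plan is to treat the splitting partition as a decomposition that is preserved by every flip, and then show that flips in different parts commute, which gives both the forward and reverse directions.

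First, I would establish the key invariant: if $S_1, \ldots, S_k$ is a splitting partition of $S$, then after any flip $\flip$ applied to $S$, the resulting multiset $\flip(S)$ still admits the same splitting partition (with one $S_i$ updated). Because the splitting condition forbids any crossing between a segment with endpoints in $P_i$ and a segment with endpoints in $P_j$ for $i \neq j$, the two crossing segments removed by $\flip$ must lie in a common $S_i$. The two segments inserted share the same four endpoints, which all lie in $P_i$, so they also belong to $\binom{P_i}{2}$. Hence the partition $S_1, \ldots, \flip(S_i), \ldots, S_k$ is again a splitting partition of $\flip(S)$, and in particular $P_i$ is unchanged.

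For the forward direction, I would concatenate untangle sequences of each $S_i$ in any order. By the invariant, performing a flip inside $S_i$ neither changes the other $S_j$'s nor introduces any crossing between parts (since cross-part pairs still cannot cross by the splitting condition). After all the subsequences have been applied, each $S_i$ is crossing-free and no inter-part crossings exist, so $S$ is crossing-free. For the reverse direction, any untangle sequence of $S$ consists only of flips that, by the invariant applied inductively, must occur entirely inside some $S_i$. I would then show that two consecutive flips acting on different parts commute: flip $\flip$ inside $S_i$ and flip $\flip'$ inside $S_j$ modify disjoint portions of the segment multiset and are each applicable to any configuration where the corresponding part is unchanged. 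Hence swapping them produces a valid flip sequence with the same final configuration. Using bubble-sort style swaps, I reorder the sequence so that all flips affecting $S_1$ come first (in their original relative order), then all those affecting $S_2$, and so on; this exhibits the sequence as the concatenation of independent untangle sequences of the $S_i$'s.

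The main obstacle, and the step I would be most careful about, is making the commutation of flips from different parts fully rigorous in the multiset setting. In a multigraph, two identical segments may coexist and a flip is specified by a pair of segment occurrences; I would argue that since $P_i$ and $P_j$ are disjoint-supported in the sense that no segment of one part shares both endpoints with a segment of the other, the occurrences targeted by $\flip$ and $\flip'$ remain well-defined and distinguishable after the other flip is applied. Everything else in the argument is bookkeeping: the invariant is preserved, crossings only arise within parts, and reorderings go through via repeated commutation of adjacent independent flips.
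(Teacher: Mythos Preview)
The paper does not give its own proof of this lemma; it is stated with citations to \cite{BoM16,FGR24} and used as a black box. Your outline is correct and is the standard argument: the splitting condition depends only on the endpoint sets $P_i$, which are unchanged by any flip (flips preserve degrees, hence endpoints), so the partition persists along every flip sequence; consequently every flip acts within a single part, and flips in distinct parts touch disjoint segment occurrences and therefore commute.

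Two minor comments. First, the phrase ``every untangle sequence of $S$ may be obtained in such a way'' is most naturally read as: restricting any untangle sequence of $S$ to each $S_i$ yields an untangle sequence of $S_i$, and any interleaving of such per-part sequences is a valid untangle sequence of $S$. Your bubble-sort reordering proves a slightly stronger block-form statement, which is fine but more than is needed. Second, your worry about identical segments in different parts is legitimate but harmless once you remember that the partition is of segment \emph{occurrences} in the multiset $S$, not of geometric segments; a flip in $S_i$ removes and inserts occurrences belonging to $S_i$ only, so no confusion with $S_j$ can arise even if the same geometric segment appears in both.
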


If a singleton $\{\mys\}$ is in a splitting partition, then we say that the segment $\mys$ is \emph{uncrossable}.

\section{Untangling with No Choice}
\label{cha:O}
In this section, we study $\dO$, the length of the longest untangle sequence in the flip graph.
First, we prove reductions between several versions of $\dO$.
Then, we prove an upper bound parameterized by $\myt$ (the sum of the degrees of the points which are not in convex position).
Finally, we prove a sub-cubic upper bound on the number of \emph{distinct} flips in an untangle sequence. This number is \emph{not} the length of an untangle sequence where the same flip may appear several times (up to a linear number of times) and is counted with its multiplicity.

\subsection{Reductions}
\label{sec:Oreductions}

In this section, we provide a series of inequalities relating the different versions of $\dO$. In particular, we show that $\dO_\Matching$, $\dO_\RedBlue$, $\dO_\Cycle$, $\dO_\Tree$, and $\dO_\Multigraph$ have the same asymptotic behavior. 

We say that a point property $\Property$ is \emph{replicable} if for any point set $P$ that satisfies $\Property$, any function $k : P \to \mathbb{N}$, and any $\epsilon > 0$, there exists a point set $P'$ that satisfies $\Property$ replacing each point $ p \in P $ by $ k ( p ) $ points located within distance at most $\epsilon$ from $p$.
Notice that point properties of being in general and convex position are replicable, while the point property of having at most one point inside the convex hull is not.

\begin{theorem}\label{thm:Oreductions}
  For all positive integer $\myn$ and for all replicable point property $\Property$, we have the following relations:
  \begin{align}
     \dO_{\Property, \Matching}(\myn) & = \dO_{\Property, \Multigraph}(\myn), \label{eq:0}\\
     2\,\dO_{\Property, \Matching}(\myn) & \leq \dO_{\Property, \RedBlue}(2\myn)~ \leq \dO_{\Property, \Matching}(2\myn), \label{eq:1}\\
     2\,\dO_{\Property, \RedBlue}(\myn) & \leq \dO_{\Property, \Cycle}(3\myn) \leq \dO_{\Property, \Matching}(3\myn), \label{eq:2}\\
     2\,\dO_{\Property, \RedBlue}(\myn) & \leq \dO_{\Property, \Tree}(3\myn) \leq \dO_{\Property, \Matching}(3\myn). \label{eq:3}
  \end{align}
\end{theorem}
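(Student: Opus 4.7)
The plan is to handle each of the four statements as a pair of inequalities, invoking Lemma~\ref{lem:reductionProperty} whenever the source graph property is strictly more restrictive than the target one, and using the replicability of $\Property$ to simulate flip sequences otherwise. In every simulation, the template is the same: start from a worst-case instance for the stronger property together with its longest flip sequence, replace each vertex $v$ by several copies within distance $\epsilon$, distribute its incident segments among the copies, and check that every original flip lifts to one or more flips on the new instance, with crossings preserved for $\epsilon$ small enough (which holds because the number of crossing configurations is finite).

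For Equation~(\ref{eq:0}), the inequality $\dO_{\Property,\Matching}(\myn) \leq \dO_{\Property,\Multigraph}(\myn)$ is immediate from Lemma~\ref{lem:reductionProperty}. For the reverse inequality I would take a worst-case multigraph $G$ and replace each vertex $v$ by $\degree(v)$ copies, one per incident segment, turning $G$ into a matching with the same $\myn$ segments on a point set that still satisfies $\Property$; the entire flip sequence then lifts verbatim. The right inequalities of~(\ref{eq:1})--(\ref{eq:3}) follow in the same spirit: the red-blue case uses Lemma~\ref{lem:reductionProperty} directly, while the cycle and tree cases use the same vertex-replication, turning a cycle or tree with $3\myn$ segments into a matching with $3\myn$ segments without altering the flip sequence length.

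For the left inequality of~(\ref{eq:1}), I would take a worst-case matching $M$ with $\myn$ segments and replicate each endpoint $p$ into a red copy $p^R$ and a blue copy $p^B$ within distance $\epsilon$, so that every segment $pq$ becomes two near-parallel red-blue segments $p^Rq^B$ and $p^Bq^R$. The resulting red-blue matching $M'$ has $2\myn$ segments. The verification to perform is that a single flip of a crossing pair $\{pq,p'q'\}$ in $M$, whether of diagonal type (outputting $\{pq',p'q\}$) or of parallel type (outputting $\{pp',qq'\}$), can be realised by exactly two consecutive forced red-blue flips in $M'$: in the diagonal case one pairs same-orientation copies, and in the parallel case one pairs opposite-orientation copies. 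In both cases the two forced red-blue outputs are exactly the two near-parallel copies of the new matching segments, which yields $\dO_{\Property,\RedBlue}(2\myn) \geq 2\,\dO_{\Property,\Matching}(\myn)$.

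The main obstacle lies in the left inequalities of~(\ref{eq:2}) and~(\ref{eq:3}). A cycle flip is combinatorially rigid: removing two cycle edges breaks the cycle into two paths, and the unique re-pairing that rebuilds a single Hamiltonian cycle is the ``parallel'' one $\{r_ir_j,b_ib_j\}$, which does not match the ``diagonal'' red-blue swap $\{r_ib_j,r_jb_i\}$. To bypass this rigidity I would augment a worst-case red-blue matching with $\myn$ auxiliary points placed far outside the matching and $2\myn$ short auxiliary segments that stitch these points and the matching segments into a Hamiltonian cycle (respectively a spanning tree) of $3\myn$ segments, chosen so that no auxiliary segment crosses any matching segment; replicability ensures that $\Property$ is preserved. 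A red-blue flip of $\{r_ib_i,r_jb_j\}$ is then simulated by two consecutive cycle (or tree) flips: the first rewires one matching edge with an adjacent auxiliary edge so as to route the change through the auxiliary backbone, and the second recombines another pair of edges, installing $\{r_ib_j,r_jb_i\}$ and restoring the cyclic (or tree) structure. The technical core is the case analysis verifying that, regardless of the indices $i,j$ and of the position of the two crossing matching segments within the backbone, this two-step lifting is feasible and leaves the auxiliary segments crossing-free throughout.
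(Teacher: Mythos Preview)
Your treatment of Equation~(\ref{eq:0}), Equation~(\ref{eq:1}), and the right-hand inequalities of (\ref{eq:2}) and (\ref{eq:3}) matches the paper's proof essentially verbatim.

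There is a genuine gap in your proposal for the \emph{left} inequalities of (\ref{eq:2}) and (\ref{eq:3}). You want to add $\myn$ auxiliary points ``placed far outside the matching'' and claim that ``replicability ensures that $\Property$ is preserved''. But replicability does \emph{not} allow this: by definition it only lets you replace each existing point $\myp$ by $k(\myp)$ copies within distance $\epsilon$ of $\myp$. It does not let you introduce brand-new points at arbitrary locations. Concretely, if $\Property=\Convex$ your construction can fail outright, since $\myn$ points placed far outside a convex configuration need not stay in convex position with it. (There is also a tension in your description: auxiliary points ``far outside'' cannot be joined to matching endpoints by ``short'' auxiliary segments.)

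The paper avoids this by a different and cleaner construction that stays strictly within replicability: each \emph{red} point $\myr_\myi$ is duplicated into two nearby copies $\myr_\myi,\myr'_\myi$, both kept adjacent to $\myb_\myi$, and the cycle is taken as $\myr_1,\myb_1,\myr'_1,\myr_2,\myb_2,\myr'_2,\ldots,\myr_\myn,\myb_\myn,\myr'_\myn$ (for the tree case one simply drops the closing edge $\myr'_\myn\myr_1$). A single red-blue flip on $\{\myr_\myi\myb_\myi,\myr_\myj\myb_\myj\}$ then lifts to two cycle (or tree) flips, first on $\{\myr_\myi\myb_\myi,\myr'_\myj\myb_\myj\}$ and then on $\{\myr'_\myi\myb_\myi,\myr_\myj\myb_\myj\}$, which together install $\{\myr_\myi\myb_\myj,\myr'_\myi\myb_\myj,\myr_\myj\myb_\myi,\myr'_\myj\myb_\myi\}$ while preserving the cyclic order. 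No assumption that the connector edges $\myr'_\myi\myr_{\myi+1}$ be crossing-free is needed, since for a lower bound on $\dO$ one only has to \emph{exhibit} a flip sequence of length $2\nbflips$, and one is free to flip only the doubled matching crossings.
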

\begin{proof}

Equality~\eqref{eq:0} can be rewritten $\dO_{\Property, \Multigraph}(\myn) \leq \dO_{\Property, \Matching}(\myn) \leq  \dO_{\Property, \Multigraph}(\myn)$. Hence, we have to prove eight inequalities.
The right-side inequalities are trivial, since the left-side property is stronger than the right-side property (using the equality~\eqref{eq:0} for inequalities~\eqref{eq:2} and~\eqref{eq:3}).

The proofs of the remaining inequalities follow the same structure: given a flip sequence of the left-side version, we build a flip sequence of the right-side version, having similar length and number of points.

\paragraph{Proving $\dO_{\Property, \Multigraph}(\myn) \leq \dO_{\Property, \Matching}(\myn)$~\eqref{eq:0}.}
We prove the left inequality of~\eqref{eq:0}. 
A point of degree $\degree$ larger than $1$ can be replicated as $\degree$ points that are arbitrarily close to each other in order to produce a matching of $2\myn$ points.
This replication preserves the crossing pairs of segments, possibly creating new crossings (Figure~\ref{fig:simulateFlip}(a)). Thus, for any flip sequence in the $\Multigraph$ version, there exists a flip sequence in the $\Matching$ version of equal length, yielding $\dO_{\Property, \Multigraph}(\myn) \leq \dO_{\Property, \Matching}(\myn)$.

\begin{figure}[!ht]
    \centering\hspace*{\stretch{1}}%
    \pbox[b]{\textwidth}{\centering\includegraphics[scale=\graphicsScale,page=1]{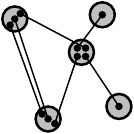}\newline(a)}\hspace*{\stretch{2}}%
    \pbox[b]{\textwidth}{\centering\includegraphics[scale=\graphicsScale,page=1]{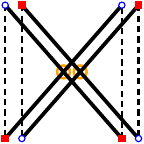}\newline(b)}\hspace*{\stretch{2}}%
    \pbox[b]{\textwidth}{\centering\includegraphics[scale=\graphicsScale,page=1]{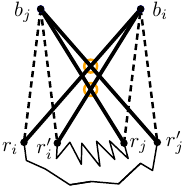}\newline(c)}\hspace*{\stretch{2}}%
    \pbox[b]{\textwidth}{\centering\includegraphics[scale=\graphicsScale,page=2]{MMTSP}\newline(d)}\hspace*{\stretch{1}}%
    \caption{(a) A crossing-free multigraph transformed into a matching by replacing multi-degree points by clusters of points. (b) Two flips in the $\RedBlue$ version simulating one flip in the $\Matching$ version. (c) Two flips in the $\Cycle$ version simulating one flip in the $\RedBlue$ version. (d) Two flips in the $\Tree$ version simulating one flip in the $\RedBlue$ version.}%
  \label{fig:simulateFlip}%
\end{figure}

\paragraph{Proving $2\,\dO_{\Property, \Matching}(\myn) \leq \dO_{\Property, \RedBlue}(2\myn)$~\eqref{eq:1}.}
The left inequality of~\eqref{eq:1} is obtained by duplicating the monochromatic points of the matching $\myS$ into two arbitrarily close points, one red and the other blue. Then each segment of $\myS$ is also duplicated into two red-blue segments. We obtain a red-blue matching $\myS'$ with $2\myn$ segments. 
A crossing in $\myS$ corresponds to four crossings in $\myS'$. Flipping this crossing in $\myS$ amounts to choose which of the two possible pairs of segments replaces the crossing pair. It is simulated by flipping the two crossings in $\myS'$ such that the resulting pair of double segments corresponds to the resulting pair of segments of the initial flip. These two crossings always exist and it is always possible to flip them one after the other as they involve disjoint pairs of segments.
\Figure~\ref{fig:simulateFlip}(b) shows this construction.
A sequence of $\nbflips$ flips on $\myS$ provides a sequence of $2\nbflips$ flips on $\myS'$.
Hence, $2 \dO_{\Property, \Matching}(\myn) \leq \dO_{\Property, \RedBlue}(2\myn)$.

\paragraph{Proving $2\,\dO_{\Property, \RedBlue}(\myn) \leq \dO_{\Property, \Cycle}(3\myn)$~\eqref{eq:2}.}
To prove the left inequality of~\eqref{eq:2}, we start from a red-blue matching $\myS$ with $2\myn$ points and $\myn$ segments and build a cycle $\myS'$ with $3\myn$ points and $3\myn$ segments. We then show that the flip sequence of length $\nbflips$ on $\myS$ provides a flip sequence of length $2\nbflips$ on $\myS'$. We build $\myS'$ in the following way. Given a red-blue segment $\sgt{\myr}{\myb} \in \myS$, the red point $\myr$ is duplicated in two arbitrarily close points $\myr$ and $\myr'$ which are adjacent to $\myb$ in $\myS'$. We still need to connect the points $\myr$ and $\myr'$ in order to obtain a cycle $\myS'$. 
We define $\myS'$ as the cycle
$\myr_1,\myb_1,\myr'_1,\ldots, \myr_\myi,\myb_\myi,\myr'_\myi,\ldots,\myr_\myn,\myb_\myn,\myr'_\myn\ldots$
where $\myr_\myi$ is matched to $\myb_\myi$ in $\myS$ (\Figure~\ref{fig:simulateFlip}(c)).

We now show that a flip sequence of $\myS$ with length $\nbflips$ provides a flip sequence of $\myS'$ with length $2\nbflips$. For a flip on $\myS$ removing the pair of segments $\pair{\sgt{\myr_\myi}{\myb_\myi}}{\sgt{\myr_\myj}{\myb_\myj}}$ and inserting the pair of segments $\pair{\sgt{\myr_\myi}{\myb_\myj}}{\sgt{\myr_\myj}{\myb_\myi}}$, we perform the following two successive flips on $\myS'$.
\begin{itemize}
    \item The first flip removes $\pair{\sgt{\myr_\myi}{\myb_\myi}}{\sgt{\myr'_\myj}{\myb_\myj}}$ and inserts $\pair{\sgt{\myr_\myi}{\myb_\myj}}{\sgt{\myr'_\myj}{\myb_\myi}}$.
    \item The second flip removes $\pair{\sgt{\myr'_\myi}{\myb_\myi}}{\sgt{\myr_\myj}{\myb_\myj}}$ and inserts $\pair{\sgt{\myr'_\myi}{\myb_\myj}}{\sgt{\myr_\myj}{\myb_\myi}}$.
\end{itemize}

The cycle then becomes
$
\myr_1,\myb_1,\myr'_1,\ldots, \myr_\myi,\myb_\myj,\myr'_\myi,\ldots, \myr_\myj,\myb_\myi,\myr'_\myj,\ldots,\myr_\myn,\myb_\myn,\myr'_\myn,\ldots
$
on which we can apply the next flips in the same way. Hence, $2 \dO_{\Property, \Matching}(\myn) \leq \dO_{\Property, \Cycle}(3\myn)$.

\paragraph{Proving $2\,\dO_{\Property, \RedBlue}(\myn) \leq \dO_{\Property, \Tree}(3\myn)$~\eqref{eq:3}.}
The proof of the left inequality of~\eqref{eq:3} follows the exact same construction as in the proof of the left inequality of~\eqref{eq:2}. The only difference is that, in order for $\myS'$ to form a tree and not a cycle, we omit the segment $\sgt{\myr'_\myn}{\myr_1}$, yielding a polygonal line which is a tree (\Figure~\ref{fig:simulateFlip}(d)).
\end{proof}

\subsection{Upper Bound for Near Convex Position}
\label{sec:ConvexToGeneral}

In this section, we bridge the gap between the $\OO(\myn^2)$ bound on the length of untangle sequences for a set $\myP$ of points in convex position and the $\OO(\myn^3)$ bound for $\myP$ in general position.
We prove the following theorem in the $\Matching$ version; the translation to the multigraph version, follows from the reductions in Theorem~\ref{thm:Oreductions} and then other graph properties follow from Lemma~\ref{lem:reductionProperty}.

\begin{theorem}\label{thm:ConvexToGeneralBis}
    Consider a multiset $\myS$ of $\myn$ segments with endpoints $\myP$ partitioned into $\myP = \myC \cup \myT$ where $\myC$ is in convex position. Let $\myt$ be the sum of the degrees of the points in $\myT$. Any untangle sequence of $\myS$ has length at most 
    \[\dO(\myn,\myt) = \OO(\myt \myn^2).\]
\end{theorem}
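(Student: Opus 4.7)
The plan is to prove the bound in the $\Matching$ version; the bounds for multigraphs, TSP tours, trees, and red-blue matchings then follow from Theorem~\ref{thm:Oreductions} and Lemma~\ref{lem:reductionProperty}. In a matching every endpoint has degree one, so $\myt = \card{\myT}$. The task reduces to constructing a non-negative integer potential $\Phi(\myS)$ on matchings with endpoints in $\myP$ such that every flip strictly decreases $\Phi$ by at least one and $\Phi$ is initially $\OO(\myt\myn^2)$; the bound on the length of any untangle sequence is then immediate.

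The design of $\Phi$ is guided by two classical potentials at the extremes of the parameter range. When $\myT = \emptyset$ and $\myP = \myC$ is in convex position, the number of pairwise crossings in $\myS$ is a potential of initial value $\OO(\myn^2)$ that decreases by at least one per flip by the standard convex-position argument. When $\myP$ is in general position, the $\OO(\myn^3)$ bound of~\cite{VLe81} follows from summing, over each pair of endpoints $\{\myp, \myq\} \subseteq \myP$, the number of segments in $\myS$ that cross the chord $\sgt{\myp}{\myq}$. A natural candidate for the near-convex regime combines the convex crossing count among $\myC\myC$-segments with the chord-crossing sum restricted to pairs $\{\myp, \myq\}$ having at least one endpoint in $\myT$. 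There are $\OO(\myt\myn)$ such pairs, and each contributes at most $\myn$ crossings, so the restricted sum has initial value $\OO(\myt\myn^2)$; combined with a suitable weighting of the convex crossing term (of initial value $\OO(\myn^2)$), the total initial value of $\Phi$ is $\OO(\myt\myn^2)$.

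The main obstacle is verifying monotonicity of $\Phi$ under every flip. A flip removes two crossing segments $\mys_1, \mys_2$ and inserts two non-crossing segments $\mys_1', \mys_2'$ on the same four endpoints, which form a convex quadrilateral in which $\mys_1, \mys_2$ are the diagonals and $\mys_1', \mys_2'$ are a pair of opposite sides. The change in $\Phi$ is then computed by a case analysis on the number of $\myT$-endpoints among the four flipped endpoints, and on the positions of the remaining $\myT$-points relative to the quadrilateral. Flips with all four endpoints in $\myC$ are controlled by the convex crossing count (the standard argument applies, since $\mys_1', \mys_2'$ are still $\myC\myC$-segments), whereas flips with at least one $\myT$-endpoint are controlled by the $\myT$-restricted chord-crossing sum, using the cyclic order around the quadrilateral to identify pairs that separate $\mys_1', \mys_2'$ from each other. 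The crux is to tune the weights of the two components and to handle the boundary interactions so that the net change of $\Phi$ is at most $-1$ in every case; once this case analysis is completed, the $\OO(\myt\myn^2)$ bound on the initial value of $\Phi$ yields the claimed bound on the length of any untangle sequence.
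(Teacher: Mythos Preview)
Your plan matches the paper's: reduce to matchings via Theorem~\ref{thm:Oreductions}, then build a potential that interpolates between the convex crossing count and the Van Leeuwen--Schoone line sum. The paper takes $\PotXL = \PotCrossings + \PotLine_{\myL}$, where $\PotCrossings$ is the \emph{total} number of crossings and $\myL$ is a set of $\OO(\myt\myn)$ lines, so the discrepancy is only in the exact choice of ingredients.

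The specific pair you propose---$\myC\myC$-crossings plus the chord sum over pairs with a $\myT$-endpoint (call the latter set $\myL_1$)---has a hole that no weighting repairs. Take a flip with $\myp_1\in\myT$ and $\myp_2,\myp_3,\myp_4\in\myC$, inserting the $\myC\myC$-segment $\sgt{\myp_2}{\myp_3}$ and the $\myC\myT$-segment $\sgt{\myp_1}{\myp_4}$. The removed crossing $\sgt{\myp_1}{\myp_3}\times\sgt{\myp_2}{\myp_4}$ is not a $\myC\myC$-crossing, so your first term gets no free decrement; and if $\myp_2,\myp_4$ are adjacent on the hull of $\myC$ while $\myp_3$ is far, the inserted $\sgt{\myp_2}{\myp_3}$ may cross arbitrarily many $\myC\myC$-segments that the removed $\sgt{\myp_2}{\myp_4}$ did not. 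Each such new crossing has its witness point $\myq$ in the triangle $\trgl{\myp_2}{\ppx}{\myp_3}$ (with $\ppx$ the intersection of the removed diagonals); when $\myq\in\myC$, the only $f$-critical lines through $\myq$ are $\lineT{\myq}{\myp_2}$ and $\lineT{\myq}{\myp_3}$, neither of which lies in $\myL_1$. So your second term need not drop enough to offset the rise in the first. The paper closes exactly this gap by adjoining a second family $\myL_2$: the $\OO(\card{\myC})$ lines through \emph{consecutive} hull vertices of $\myC$. When $\myq,\myp_2,\myp_3\in\myC$, the two $\myL_2$-lines through $\myq$ are automatically $f$-critical or $f$-dropping, and $\card{\myL_2}=\OO(\myn)$ keeps the initial potential at $\OO(\myt\myn^2)$. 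The paper also retains the full crossing count $\PotCrossings$ rather than restricting to $\myC\myC$-pairs, which makes the case split (``$\PotCrossings$ decreases'' versus ``$\PotCrossings$ rises by $k$, hence $k{+}1$ witnesses'') uniform over all flips. A minor additional point: use lines, not chords---the non-increase of $\PotLine_\myl$ in~\cite{VLe81} is stated for lines, and a chord with an endpoint inside the flip quadrilateral can gain crossings.
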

\begin{proof}
The proof strategy is to combine the potential $\PotCrossings$ used in~\cite{BMS19} with the potential $\PotLine_\myL$ used in~\cite{LS80}.
Given a matching $\myS$, the potential $\PotCrossings(\myS)$ is defined as the number of crossing pairs of segments in $\myS$. Since there are $\myn$ segments in $\myS$, $\PotCrossings(\myS) \leq \binom{\myn}{2} = \OO(\myn^2)$.
Unfortunately, with points in non-convex position, a flip $\flip$ might \emph{increase} (or leave unchanged) $\PotCrossings$, i.e. $\PotCrossings(\flip(\myS)) \geq \PotCrossings(\myS)$.

The potential $\PotLine_\myL$ is derived from the line potential introduced in~\cite{VLe81} but instead of using the set of all the $\OO(\myn^2)$ lines through two points of $\myP$, we use a subset of $\OO(\myt\myn)$ lines in order to take into account that only $\myt$ points are in non-convex position. 
More precisely, let the potential $\PotLine_\myl(\myS)$ of a line $\myl$ be the number of segments of $\myS$ crossing $\myl$. Note that $\PotLine_\myl(\myS) \leq \myn$. The potential $\PotLine_\myL(\myS)$ is then defined as follows: $\PotLine_\myL(\myS) = \sum_{\myl \in \myL} \PotLine_\myl(\myS)$.

We now define the set of lines $\myL$ as the union of $\myL_1$ and $\myL_2$, defined hereafter. Let $\myC$ be the subset containing the $2\myn-\myt$ points of $\myP$ which are in convex position. Let $\myL_1$ be the set of the $\OO(\myt\myn)$ lines through two points of $\myP$, at least one of which is not in $\myC$. Let $\myL_2$ be the set of the $\OO(\myn)$ lines through two points of $\myC$ which are consecutive on the convex hull boundary of $\myC$.

Let the potential $\PotXL(\myS) = \PotCrossings(\myS) + \PotLine_\myL(\myS)$.
We have the following bounds: $0 \leq \PotXL(\myS) \leq \OO(\myt\myn^2)$.
To complete the proof of Theorem~\ref{thm:ConvexToGeneralBis}, we show that any flip decreases $\PotXL$ by at least $1$.

We consider an arbitrary flip $\flip$ removing the pair of segments $\pair{\sgt{\myp_1}{\myp_3}}{\sgt{\myp_2}{\myp_4}}$ and inserting the pair of segments $\pair{\sgt{\myp_1}{\myp_4}}{\sgt{\myp_2}{\myp_3}}$.
Let $\ppx$ be the point of intersection of $\sgt{\myp_1}{\myp_3}$ and $\sgt{\myp_2}{\myp_4}$.
It is shown in~\cite{VLe81} that $\flip$ never increases the potential $\PotLine_\myl$ of a line $\myl$. More precisely, we have the following three cases:
\begin{itemize}
    \item The potential $\PotLine_\myl$ decreases by $1$ if the line $\myl$ separates the final segments $\sgt{\myp_1}{\myp_4}$ and $\sgt{\myp_2}{\myp_3}$ and exactly one of the four flipped points belongs to $\myl$. We call these lines \emph{$\flip$-critical} (\Figure~\ref{fig:criticalPotentialDrop}(a)).
    \item The potential $\PotLine_\myl$ decreases by $2$ if the line $\myl$ strictly separates the final segments $\sgt{\myp_1}{\myp_4}$ and $\sgt{\myp_2}{\myp_3}$. We call these lines \emph{$\flip$-dropping} (\Figure~\ref{fig:criticalPotentialDrop}(b)). 
    \item The potential $\PotLine_\myl$ remains stable in the remaining cases.
\end{itemize}
Notice that, if a point $\myq$ lies in the triangle $\trgl{\myp_1}{\ppx}{\myp_4}$, then the two lines $\lineT{\myq}{\myp_1}$ and $\lineT{\myq}{\myp_4}$ are $\flip$-critical (\Figure~\ref{fig:criticalPotentialDrop}(a)). 

\begin{figure}[!ht]
    \centering\hspace*{\stretch{1}}%
    \pbox[b]{\textwidth}{\centering\includegraphics[scale=\graphicsScale,page=1]{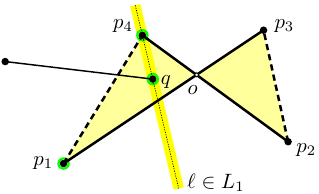}\newline(a)}\hspace*{\stretch{2}}%
    \pbox[b]{\textwidth}{\centering\includegraphics[scale=\graphicsScale,page=2]{criticalPotentialDrop}\newline(b)}\hspace*{\stretch{1}}%
  \caption{(a) An $\flip$-critical line $\myl$ for a flip $\flip$ removing $\pair{\sgt{\myp_1}{\myp_3}}{\sgt{\myp_2}{\myp_4}}$ and inserting $\pair{\sgt{\myp_1}{\myp_4}}{\sgt{\myp_2}{\myp_3}}$. This situation corresponds to case (2a) with $\myl \in \myL_1$. (b) An $\flip$-dropping line $\myl$. This situation corresponds to case (2b) with $\myl \in \myL_2$.} 
  \label{fig:criticalPotentialDrop}%
\end{figure}

To prove that $\PotXL$ decreases, we have the following two cases.

\textbf{Case 1.} If $\PotCrossings$ decreases, as the other term $\PotLine_\myL$ does not increase, then their sum $\PotXL$ decreases as desired. 

\textbf{Case 2.} If not, then $\PotCrossings$ increases by an integer $\myk$ with $0 \leq k \leq \myn-1$, and we know that there are $\myk+1$ new crossings after the flip $\flip$. Each new crossing involves a distinct segment with one endpoint, say $\myq_\myi$ ($0 \leq i \leq \myk$), inside the non-simple polygon $\myp_1,\myp_4,\myp_2,\myp_3$ (\Figure~\ref{fig:criticalPotentialDrop}). 
Next, we show that each point $\myq \in \{\myq_0,\ldots,\myq_\myk\}$ maps to a distinct line in $\myL$ which is either $\flip$-dropping or $\flip$-critical, thus proving that the potential $\PotLine_\myL$ decreases by at least $\myk+1$.

We assume without loss of generality that $\myq$ lies in the triangle $\myp_1\ppx\myp_4$.
We consider the two following cases.

\textbf{Case 2a.} If at least one among the points $\myq, \myp_1, \myp_4$ is not in $\myC$, then either $\myq\myp_1$ or $\myq\myp_4$ is an $\flip$-critical line $\myl \in \myL_1$ (\Figure~\ref{fig:criticalPotentialDrop}(a)).

\textbf{Case 2b.} If not, then $\myq, \myp_1, \myp_4$ are all in $\myC$, and the two lines through $\myq$ in $\myL_2$ are both either $\flip$-dropping (the line $\myl$ in \Figure~\ref{fig:criticalPotentialDrop}(b)) or $\flip$-critical (the line $\myq\myp_4$ in \Figure~\ref{fig:criticalPotentialDrop}(b)).
Consequently, there are more lines $\myl \in \myL_2$ that are either $\flip$-dropping or $\flip$-critical than there are such points $\myq \in \myC$ in the triangle $\trgl{\myp_1}{\ppx}{\myp_4}$, and the theorem follows.
\end{proof}

\subsection{Upper Bound without Multiplicity}
\label{sec:distinct}

In this section, we prove the following theorem. We recall that two flips are \emph{distinct} if the set of the two removed and the two inserted segments of one flip is distinct from the set of the other flip.

\begin{theorem}
    \label{thm:distinct}
    Consider a multiset $\myS$ of $\myn$ segments with endpoints $\myP$.  
    Any untangle sequence of $\myS$ has $\OO(\myn^{{8}/{3}})$ \emph{distinct} flips.
\end{theorem}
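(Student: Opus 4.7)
The plan is to use an incidence-geometric argument based on the Szemer\'edi--Trotter theorem, whose exponent $4/3$ multiplied by the $\OO(\myn^2)$ scale of candidate segments gives precisely the $\OO(\myn^{8/3})$ target. A distinct flip is determined by the unordered quadruple $\{\myp_1,\myp_2,\myp_3,\myp_4\} \subseteq \myP$ of its four endpoints in convex position together with a direction bit (which pair of opposite sides is inserted), so the trivial upper bound is $\OO(\myn^4)$, and the goal is to save a factor of $\myn^{4/3}$.

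My approach is to charge each distinct flip to incidences in a bipartite point--line structure with $\OO(\myn^2)$ lines and $\OO(\myn^2)$ points, then apply Szemer\'edi--Trotter. The natural set of lines is the $\OO(\myn^2)$ lines through pairs of points of $\myP$, since every removed segment is supported on such a line. For each distinct flip $\flip$ with removed segments $\sgt{\myp_1}{\myp_3}$ and $\sgt{\myp_2}{\myp_4}$ crossing at a point $\myq_\flip$, the two corresponding supporting lines both pass through $\myq_\flip$, and this pair of incidences almost determines $\flip$ (up to the direction bit and a labeling of the four endpoints). Hence bounding the number of incidences between the set $\{\myq_\flip\}$ and the $\OO(\myn^2)$ lines by $\OO(\myn^{8/3})$ immediately bounds the number of distinct flips.

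The main obstacle is that the set $\{\myq_\flip\}$ of crossing points of distinct flips may a priori be as large as the number of distinct flips itself, which would render the Szemer\'edi--Trotter bound vacuous. To control this, I would exploit the length-decreasing property of flips (the sum of the two removed segment lengths strictly exceeds the sum of the two inserted ones): since the total length is strictly monotone along the untangle sequence, the distinct crossings $\myq_\flip$ are forced to exhibit enough regularity to be packed into a subset of $\OO(\myn^2)$ points of the line arrangement, possibly after splitting the sequence into phases on which a finer charging is available. As an alternative route, I would bound, for each fixed segment $\mys$, the number of distinct flips in which $\mys$ is removed by $\OO(\myn^{2/3})$ via an incidence count on the crossing partners of $\mys$, and then sum over the $\OO(\myn^2)$ candidate segments to obtain $\OO(\myn^{8/3})$. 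Either route hinges on combining the combinatorial structure of the untangle sequence with a planar incidence bound, which is the crux of the proof.
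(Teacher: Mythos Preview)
Your proposal identifies the correct target exponent but does not constitute a proof, and the approach is quite different from the paper's.

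The paper does not use Szemer\'edi--Trotter at all. Instead it uses a balancing argument on the line-crossing potential $\PotLine_\myL(\myS) = \sum_{\myl \in \myL} \PotLine_\myl(\myS)$, where $\myL$ is the set of all $\OO(\myn^2)$ lines through pairs of points of $\myP$. Since $\PotLine_\myL$ is an integer in $[0,\OO(\myn^3)]$ and is nonincreasing along any flip sequence, at most $\OO(\myn^3/\myk)$ flips (distinct or not) can have potential drop at least $\myk$. For the complementary regime, the paper shows a purely combinatorial fact independent of the sequence: the number of \emph{distinct} flips whose potential drop is less than $\myk$ is $\OO(\myn^2 \myk^2)$, because once you fix one inserted segment $\sgt{\myp_1}{\myp_4}$, each of the other two vertices $\myp_2,\myp_3$ must lie among the $2\myk$ points angularly closest to the ray $\myp_1\myp_4$ (otherwise $\myk$ lines of $\myL$ would witness a drop of at least $\myk$). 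Balancing at $\myk = \myn^{1/3}$ gives $\OO(\myn^{8/3})$.

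Your sketch has a genuine gap at exactly the point you flag: the crossing-point set $\{\myq_\flip\}$ has no a priori bound smaller than the number of distinct flips, so the Szemer\'edi--Trotter inequality is circular as stated. Your two proposed escapes are not proofs. Invoking the total-length monotonicity gives a real-valued decreasing quantity with no combinatorial granularity, so it does not force the crossing points into $\OO(\myn^2)$ cells of the arrangement, nor does it yield a useful phase decomposition. Your Route~2 claim that each fixed segment is removed in at most $\OO(\myn^{2/3})$ distinct flips is unsupported: there is no evident incidence configuration with $\OO(\myn)$ points and $\OO(\myn)$ lines attached to a single segment that would produce this exponent, and the paper's own construction (Figure~\ref{fig:flipReuse} in the conclusion) shows a single segment can participate in $\Theta(\myn)$ flips in one sequence. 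The missing ingredient is precisely a monotone \emph{integer} potential of size $\OO(\myn^3)$ whose per-flip drop can be read off locally from the four endpoints; $\PotLine_\myL$ provides this, Euclidean length does not.
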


The proof of Theorem~\ref{thm:distinct} is based on a balancing argument from~\cite{ELSS73} and is decomposed into two lemmas that consider a flip $\flip$ and two sets of segments $\myS$ and $\myS' = \flip(\myS)$. Similarly to~\cite{VLe81}, let $\myL$ be the set of lines defined by all pairs of points in $\binom{\myP}{2}$. For a line $\myl \in \myL$, let $\PotLine_\myl(\myS)$ be the number of segments of $\myS$ crossed by $\myl$ and $\PotLine_\myL(\myS)=\sum_{\myl \in \myL} \PotLine_\myl(\myS)$. Notice that $\PotLine_\myL(\myS) - \PotLine_\myL(\myS')$ depends only on the flip $\flip$. The following lemma follows immediately from the fact that $\PotLine_\myL(\myS)$ takes integer values between $0$ and $\OO(\myn^3)$.

\begin{lemma}
  \label{lem:bigDrops}
  For any integer $\myk$, the number of flips $\flip$ in a flip sequence with $\PotLine_\myL(\myS) - \PotLine_\myL(\myS') \geq \myk$ is $\OO(\myn^3/\myk)$.
\end{lemma}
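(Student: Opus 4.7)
The plan is to use the standard telescoping argument on a monotone, integer-valued, polynomially-bounded potential. The relevant potential is $\PotLine_\myL$ itself, so no auxiliary construction is needed.

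First, I would record the three basic facts about $\PotLine_\myL$ on which the argument hinges. (i) It is non-negative integer-valued: for each $\myl \in \myL$, $\PotLine_\myl(\myS)$ counts segments, and $\PotLine_\myL$ is a finite sum of such counts. (ii) It is polynomially bounded: $|\myL| = \OO(\myn^2)$ and $\PotLine_\myl(\myS) \leq \myn$ for every line, so $\PotLine_\myL(\myS) = \OO(\myn^3)$ uniformly in $\myS$. (iii) It is non-increasing along any flip sequence: this is exactly the statement from~\cite{VLe81} that was already quoted in the proof of Theorem~\ref{thm:ConvexToGeneralBis}, where the three cases (critical, dropping, stable) for $\PotLine_\myl$ under a flip are listed, none of which increases $\PotLine_\myl$.

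Next I would telescope. Consider a flip sequence $\myS_0 \to \myS_1 \to \cdots \to \myS_\myN$, and let $\flip_\myi$ denote the $\myi$-th flip, taking $\myS_{\myi-1}$ to $\myS_\myi$. Writing $\Delta_\myi = \PotLine_\myL(\myS_{\myi-1}) - \PotLine_\myL(\myS_\myi)$, fact (iii) gives $\Delta_\myi \geq 0$, and the telescoping sum yields
\[
\sum_{\myi=1}^{\myN} \Delta_\myi \;=\; \PotLine_\myL(\myS_0) - \PotLine_\myL(\myS_\myN) \;\leq\; \PotLine_\myL(\myS_0) \;=\; \OO(\myn^3),
\]
using facts (i) and (ii).

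Finally, I would isolate the contribution of the flips under consideration. Let $\myI \subseteq \{1,\ldots,\myN\}$ be the indices of flips satisfying $\Delta_\myi \geq \myk$. Restricting the sum to $\myI$ and using $\Delta_\myi \geq 0$ for the remaining indices,
\[
\myk \cdot |\myI| \;\leq\; \sum_{\myi \in \myI} \Delta_\myi \;\leq\; \sum_{\myi=1}^{\myN} \Delta_\myi \;=\; \OO(\myn^3),
\]
so $|\myI| = \OO(\myn^3/\myk)$, which is the claimed bound. There is essentially no obstacle here: the only subtle point is that fact (iii) must be applied per flip (not just at the end of the sequence) so that the non-negative terms $\Delta_\myi$ can be summed; this is already established in~\cite{VLe81} and re-used in Theorem~\ref{thm:ConvexToGeneralBis}, so the lemma follows immediately.
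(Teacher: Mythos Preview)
Your proof is correct and takes essentially the same approach as the paper, which simply states that the lemma ``follows immediately from the fact that $\PotLine_\myL(\myS)$ takes integer values between $0$ and $\OO(\myn^3)$.'' You have just spelled out the standard telescoping argument that the paper leaves implicit.
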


Lemma~\ref{lem:bigDrops} bounds the number of flips (distinct or not) that produce a large potential drop in a flip sequence. Next, we bound the number of distinct flips that produce a small potential drop. The bound considers all possible flips on a fixed set of points and does not depend on a particular flip sequence.

\begin{lemma}
  \label{lem:smallDrops}
  For any integer $\myk$, the number of \emph{distinct} flips $\flip$ with $\PotLine_\myL(\myS) - \PotLine_\myL(\myS') < \myk$ is $\OO(\myn^2\myk^2)$.
\end{lemma}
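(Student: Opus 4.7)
The plan is to adapt the line-potential analysis of~\cite{VLe81} already used in Lemma~\ref{lem:bigDrops}. Write $\flip$ as the flip removing $\sgt{\myp_1}{\myp_3}, \sgt{\myp_2}{\myp_4}$ (crossing at $\ppx$) and inserting $\sgt{\myp_1}{\myp_4}, \sgt{\myp_2}{\myp_3}$, and call the \emph{lens} of $\flip$ the region $\trgl{\myp_1}{\ppx}{\myp_4} \cup \trgl{\myp_2}{\ppx}{\myp_3}$. First I would show that every $\myq \in \myP$ in the lens contributes at least $2$ to the drop $\PotLine_\myL(\myS) - \PotLine_\myL(\myS')$: if $\myq \in \trgl{\myp_1}{\ppx}{\myp_4}$, both lines $\lineT{\myq}{\myp_1}$ and $\lineT{\myq}{\myp_4}$ lie in $\myL$, each passes through exactly one flipped endpoint, and each separates the inserted segments $\sgt{\myp_1}{\myp_4}$ and $\sgt{\myp_2}{\myp_3}$ (the situation illustrated in \Figure~\ref{fig:criticalPotentialDrop}(a) and handled in Case~2a of the proof of Theorem~\ref{thm:ConvexToGeneralBis}), so both are $\flip$-critical and each decreases $\PotLine_\myL$ by $1$; the case $\myq \in \trgl{\myp_2}{\ppx}{\myp_3}$ is symmetric. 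As the two lens triangles are disjoint,
\[
    \PotLine_\myL(\myS) - \PotLine_\myL(\myS') \;\geq\; 2\,\card{\myP \cap (\trgl{\myp_1}{\ppx}{\myp_4} \cup \trgl{\myp_2}{\ppx}{\myp_3})},
\]
so any flip with drop less than $\myk$ has fewer than $\myk/2$ points of $\myP$ in its lens.

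A distinct flip is entirely determined by its pair of inserted non-crossing segments $\{\sgt{\myp_1}{\myp_4}, \sgt{\myp_2}{\myp_3}\}$, since the removed pair is forced to be the unique crossing diagonals on those four endpoints. I would then fix $\sgt{\myp_1}{\myp_4} \in \binom{\myP}{2}$, which gives $\OO(\myn^2)$ choices, and bound by $\OO(\myk^2)$ the number of admissible partners $\sgt{\myp_2}{\myp_3}$ whose associated lens contains fewer than $\myk/2$ points of $\myP$. I would parameterize the partners by the intersection point $\ppx$, which is a vertex of the arrangement of the $2(\myn-1)$ lines through $\myp_1$ or through $\myp_4$ and a point of $\myP \setminus \{\myp_1,\myp_4\}$; in general position each such vertex uniquely determines $(\myp_2, \myp_3)$ as the second $\myP$-intersection of the $\myp_4$-line (resp.\ $\myp_1$-line) through $\ppx$. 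The small-lens constraint gives two simultaneous shallow-depth conditions on $\ppx$, saying that both $\trgl{\myp_1}{\ppx}{\myp_4}$ and $\trgl{\myp_2}{\ppx}{\myp_3}$ are ``thin''; combining them via the balancing argument of~\cite{ELSS73} would yield $\OO(\myk^2)$ such vertices. Summing over $\sgt{\myp_1}{\myp_4}$ then gives the claimed $\OO(\myn^2\myk^2)$ bound.

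The first (charging) step is a routine adaptation of the analysis in~\cite{VLe81}. The technical heart of the proof is the $\OO(\myk^2)$ count in the second step: using only one shallow-depth condition via a Clarkson-Shor $\le \myk$-level bound would yield just $\OO(\myn\myk)$ admissible vertices, which is insufficient. Extracting the quadratic-in-$\myk$ bound requires exploiting both shallow-depth conditions simultaneously together with the pencil structure of the arrangement (all $\myp_1$-lines concur at $\myp_1$ and all $\myp_4$-lines at $\myp_4$), which is exactly the type of balancing argument that~\cite{ELSS73} develops in the context of $\le \myk$-sets.
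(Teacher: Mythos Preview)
Your overall plan matches the paper's: fix one inserted segment $\sgt{\myp_1}{\myp_4}$ (giving the $\OO(\myn^2)$ factor) and then argue there are only $\OO(\myk^2)$ compatible choices for the remaining endpoints. Where you diverge is in how you obtain the $\OO(\myk^2)$ factor, and this is where your proposal has a gap.

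The paper does \emph{not} parameterize by the crossing point $\ppx$ or invoke any arrangement or \cite{ELSS73}-style balancing for this step. Instead it bounds $\myp_2$ and $\myp_3$ \emph{independently}, each by $2\myk$ candidates, via a simple angular sweep: order the points of $\myP\setminus\{\myp_4\}$ by angle around $\myp_1$ starting from the ray $\ray{\myp_1}{\myp_4}$, and let $\myq_{-\myk},\ldots,\myq_{-1},\myq_1,\ldots,\myq_\myk$ be the first $\myk$ on each side. If $\myp_3$ lies beyond $\myq_\myk$, then each of the $\myk$ lines $\lineT{\myp_1}{\myq_1},\ldots,\lineT{\myp_1}{\myq_\myk}\in\myL$ separates $\myp_4$ from $\{\myp_2,\myp_3\}$ and hence has $\PotLine_\myl(\myS)>\PotLine_\myl(\myS')$, forcing a drop of at least $\myk$. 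So $\myp_3$ is one of at most $2\myk$ points; symmetrically for $\myp_2$ sweeping around $\myp_4$. This gives $4\myk^2$ pairs directly, with no arrangement machinery.

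The key difference is in the charging: the paper's angular-sweep charging is strictly stronger than your lens charging. A point $\myq$ contributes a critical line $\lineT{\myp_1}{\myq}$ whenever it lies in the angular wedge between $\ray{\myp_1}{\myp_4}$ and $\ray{\myp_1}{\myp_3}$, not merely when it lies inside the triangle $\trgl{\myp_1}{\ppx}{\myp_4}$. Your weaker condition $\card{\text{lens}}<\myk/2$ does not obviously restrict $(\myp_2,\myp_3)$ to $\OO(\myk^2)$ pairs, and your appeal to a two-condition balancing argument is left as a black box: the second condition $\card{\trgl{\myp_2}{\ppx}{\myp_3}\cap\myP}<\myk/2$ involves the very points $\myp_2,\myp_3$ you are trying to count, so it is not a constraint on $\ppx$ alone, and it is not clear how an \cite{ELSS73}-type argument would close this. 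Switching to the angular charging decouples $\myp_2$ from $\myp_3$ and makes the whole argument a two-line computation.
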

\begin{proof}
Let $\FlipSet$ be the set of flips with $\PotLine_\myL(\myS) - \PotLine_\myL(\myS') < \myk$ where $\myS'=f(\myS)$. We need to show that $\card{\FlipSet} = \OO(\myn^2\myk^2)$. Consider a flip $\flip \in \FlipSet$ removing the pair of segments $\pair{\sgt{\myp_1}{\myp_3}}{\sgt{\myp_2}{\myp_4}}$ and inserting the pair of segments $\pair{\sgt{\myp_1}{\myp_4}}{\sgt{\myp_2}{\myp_3}}$. Next, we show that there are at most $4\myk^2$ such flips with a fixed final segment $\myp_1\myp_4$. Since there are $\OO(\myn^2)$ possible values for $\myp_1\myp_4$, the lemma follows. We show only that there are at most $2\myk$ possible values for $\myp_3$. The proof that there are at most $2\myk$ possible values for $\myp_2$ is analogous.

\begin{figure}[!ht]
  \centering%
  \includegraphics[scale=\graphicsScale]{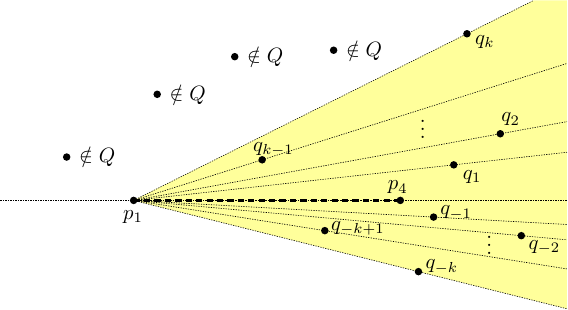}%
  \caption{Illustration for the proof of Lemma~\ref{lem:smallDrops}.}%
  \label{fig:smallDrops}%
\end{figure}

We sweep the points in $\myP \setminus \{\myp_4\}$ by angle from the ray $\myp_1\myp_4$. As shown in Figure~\ref{fig:smallDrops}, let $\myq_1,\ldots,\myq_\myk$ be the first $\myk$ points produced by this sweep in one direction, $\myq_{-1}\ldots,\myq_{-\myk}$ in the other direction and $\myQ=\{\myq_{-\myk}\ldots,\myq_{-1},\myq_1,\ldots,\myq_\myk\}$. To conclude the proof, we show that $\myp_3$ must be in $\myQ$. 
Suppose $\myp_3 \notin \myQ$ for the sake of a contradiction and assume without loss of generality that $\myp_3$ is on the side of $\myq_\myi$ with positive $\myi$. Then, consider the lines $\myL' = \{\myp_1\myq_1,\ldots,\myp_1\myq_\myk\}$.
Notice that $\myL' \subseteq \myL$, $\card{\myL'} = \myk$, and for each $\myl \in \myL'$ we have $\PotLine_\myl(\myS) > \PotLine_\myl(\myS')$, which contradicts the hypothesis that $\PotLine_\myL(\myS) - \PotLine_\myL(\myS') < \myk$. 
\end{proof}

Theorem~\ref{thm:distinct} is a consequence of Lemmas~\ref{lem:bigDrops} and~\ref{lem:smallDrops} with $\myk = \myn^{{1}/{3}}$.

\section{Untangling with Removal Choice}
\label{cha:R}
In this section, we devise strategies for removal choice to untangle multisets of segments with endpoints $\myP = \myC \cup \myT$ (where $\myC$ is in convex position), therefore providing upper bounds for several versions of $\dR$. Recall that such removal strategies choose which pair of crossing segments is removed, \emph{but not} which pair of segments with the same endpoints is subsequently inserted. We start with a point set in convex position, followed by $1$ point inside or outside the convex, then $1$ point inside and $1$ outside the convex, $2$ points inside the convex, and $2$ points outside the convex. As only removal choice is used, all results also apply to all versions.

\subsection{Upper Bounds for Convex Position}
\label{sec:RConvex}

Let $\myP = \myC = \{\myp_1,\ldots,\myp_{\card{\myC}}\}$ be a set of points in convex position sorted in counterclockwise order along the convex hull boundary and consider a set of segments $\myS$ with endpoints $\myP$. Given a segment $\sgt{\myp_\mya}{\myp_\myb}$ and assuming without loss of generality that $\mya<\myb$, we define the \emph{crossing depth} $\crossingDepth(\sgt{\myp_\mya}{\myp_\myb})$ as the number of points in $\myp_{\mya+1},\ldots,\myp_{\myb-1}$ that are an endpoint of a segment in $\myS$ that crosses any other segment in $\myS$ (not necessarily $\myp_\mya\myp_\myb$). We use the crossing depth to prove an $\OO(\myn \log \myn)$ bound in the convex $\Multigraph$ version.

\begin{figure}[htb]
    \hspace*{\stretch{1}}%
    \pbox[b]{\textwidth}{\centering\includegraphics[scale=\graphicsScale,page=1]{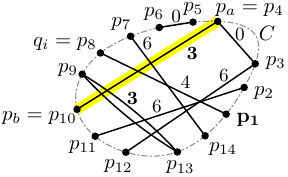}\newline(a)}\hspace*{\stretch{2}}%
    \pbox[b]{\textwidth}{\centering\includegraphics[scale=\graphicsScale,page=2]{convexR}\newline(b)}\hspace*{\stretch{2}}%
    \pbox[b]{\textwidth}{\centering\includegraphics[scale=\graphicsScale,page=3]{convexR}\newline(c)}\hspace*{\stretch{1}}%
    \caption{Proof of Theorem~\ref{thm:RUpperConvex}. (a) The segments of a convex multigraph are labeled with the crossing depth. (b,c) Two possible pairs of inserted segments, with one segment of the pair having crossing depth $\floor{\frac{3}{2}}=1$.}
    \label{fig:convexR}
\end{figure}

\begin{theorem}\label{thm:RUpperConvex}
Consider a multiset $\myS$ of $\myn$ segments with endpoints $\myP = \myC$ in satisfying the property $\Convex$.
There exists a removal strategy $\myR$ such that any untangle sequence of $\myR$ has length at most
\[\dR_{\Convex}(\myn) = \OO(\myn \log \card{\myC}) = \OO(\myn \log \myn).\]
\end{theorem}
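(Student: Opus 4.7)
The plan is to design a removal strategy together with a potential function whose initial value is $O(\myn \log \card{\myC})$ and which drops by at least a constant at every flip. I would use the potential
\[
    \PotDepth(\myS) \;=\; \sum_{\mys \in \myS} \bigl\lceil \log_2\bigl(\crossingDepth(\mys)+1\bigr) \bigr\rceil,
\]
which is always non-negative, is initially at most $\myn \lceil \log_2(\card{\myC}-1) \rceil = O(\myn \log \card{\myC})$, and vanishes as soon as no segment lies in a crossing (so at the latest at the end of the untangling).

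The strategy I would describe is the following \emph{min-depth median} rule. At each step, pick a segment $\mys = \sgt{\myp_\mya}{\myp_\myb}$ (with $\mya < \myb$) currently in some crossing whose crossing depth $\myd = \crossingDepth(\mys)$ is minimum over all segments in a crossing. Among the active endpoints in $\{\myp_{\mya+1}, \dots, \myp_{\myb-1}\}$, let $\myp_\myc$ be the one sitting at the median of their indices, and let $\mys'$ be a segment with endpoint $\myp_\myc$ that crosses $\mys$; such an $\mys'$ exists because (using minimality of $\myd$) any segment incident to $\myp_\myc$ that crosses something must cross $\mys$ itself, for otherwise it would have depth strictly smaller than $\myd$. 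The removal strategy selects the pair $\{\mys, \mys'\}$, leaving the insertion to the adversary.

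The core claim is then: regardless of which of the two insertion options the adversary picks among $\{\sgt{\myp_\mya}{\myp_\myc},\sgt{\myp_\myb}{\myp_{\myd'}}\}$ and $\{\sgt{\myp_\mya}{\myp_{\myd'}},\sgt{\myp_\myb}{\myp_\myc}\}$ (where $\mys' = \sgt{\myp_\myc}{\myp_{\myd'}}$ and $a<c<b<d'$), exactly one of the two inserted segments has its interior arc contained in $(\mya,\myc)$ or in $(\myc,\myb)$. By the median choice, that arc contains at most $\lfloor\myd/2\rfloor$ active endpoints after the flip, so this new segment contributes at most $\lceil \log_2(\lfloor \myd/2 \rfloor +1)\rceil \leq \lceil \log_2(\myd+1)\rceil - 1$ to $\PotDepth$, while $\mys$ contributed $\lceil \log_2(\myd+1)\rceil$. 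To finish, one must argue the rest of the potential change is non-positive: the other inserted segment's arc is contained in the union of the arcs of $\mys$ and $\mys'$, and since $\crossingDepth(\mys') \geq \myd$ its new depth is at most $\crossingDepth(\mys')+O(1)$, while the depths of all unmodified segments cannot increase because the set of active endpoints only shrinks (no new endpoints of segments are created by the flip).

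The step I expect to be the main obstacle is precisely this last bookkeeping: a flip can simultaneously destroy old crossings, create new ones among previously crossing-free segments, and toggle the active/inactive status of several endpoints, so one has to argue carefully that the only \emph{net} increase in $\PotDepth$ comes from the replacement of $\mys'$, which is strictly dominated by the halving drop from $\mys$. If a straightforward per-segment accounting turns out to be insufficient, I would refine the potential to sum only over segments currently in a crossing (as deactivating a segment can only decrease $\PotDepth$), or replace $\log_2(\crossingDepth(\mys)+1)$ by $\lceil \log_2(\crossingDepth(\mys)+2)\rceil$ so that the $-1$ drop from the small new segment always strictly dominates any $O(1)$ slack in the comparison between $\mys'$ and its replacement.
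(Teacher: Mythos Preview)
Your removal strategy is exactly the one the paper uses: pick a segment $\mys=\sgt{\myp_\mya}{\myp_\myb}$ of minimum crossing depth among those with crossings, and flip it with a segment incident to the median active point in the arc $(\mya,\myb)$. Your observation that such a crossing segment $\mys'$ must exist by minimality of $\myd$ is also the paper's argument.

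Where your proposal diverges is in the analysis, and there the bookkeeping does not close. The claim ``the other inserted segment's arc is contained in the union of the arcs of $\mys$ and $\mys'$, and since $\crossingDepth(\mys')\ge \myd$ its new depth is at most $\crossingDepth(\mys')+O(1)$'' is false as stated. In the adversarial insertion (say $a<c<b<d'$ and the adversary inserts $\sgt{\myp_\mya}{\myp_{\myd'}}$ and $\sgt{\myp_\myb}{\myp_\myc}$), the large segment $\sgt{\myp_\mya}{\myp_{\myd'}}$ has arc $(a,d')=(a,c)\cup\{c\}\cup(c,d')$, so its crossing depth is $\lceil \myd/2\rceil + \crossingDepth(\mys')$, not $\crossingDepth(\mys')+O(1)$. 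With your ceiling potential this gives a net change of at most $0$ but not strictly negative: the small segment drops $\lceil\log_2(\myd+1)\rceil$ to $\lceil\log_2(\lfloor\myd/2\rfloor+1)\rceil$ (a gain of exactly $1$ in the worst case), while the large segment can climb from $\lceil\log_2(\crossingDepth(\mys')+1)\rceil$ to $\lceil\log_2(2\crossingDepth(\mys')+1)\rceil$ (a loss of exactly $1$). Your two proposed patches do not help: restricting the sum to segments with crossings still faces the same tie when both inserted segments have crossings, and replacing $+1$ by $+2$ already fails at $\myd=2$, where $\lceil\log_2 3\rceil=\lceil\log_2 4\rceil=2$.

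The paper sidesteps the potential entirely with a cleaner amortization: each flip produces a segment whose crossing depth is at most half the current minimum, so after at most $\log_2\card{\myC}$ flips some segment reaches crossing depth $0$; such a segment is crossing-free and (in convex position) stays crossing-free forever, so the count of crossing-free segments, which is at most $\myn$, strictly increases every $O(\log\card{\myC})$ flips. If you want to rescue the potential route, drop the ceilings: the real-valued potential $\sum_{\mys}\log_2(\crossingDepth(\mys)+1)$ does strictly decrease, since in the adversarial case the product $(\lfloor\myd/2\rfloor+1)(\lceil\myd/2\rceil+\crossingDepth(\mys')+1)$ is at most $\tfrac{8}{9}(\myd+1)(\crossingDepth(\mys')+1)$ (worst at $\myd=\crossingDepth(\mys')=2$), giving a uniform drop of at least $\log_2(9/8)$ per flip.
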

\begin{proof}
We repeat the following procedure until there are no more crossings.
Let $\myp_\mya\myp_\myb \in \myS$ be a segment \emph{with crossings} (hence, crossing depth at least one) and $\mya<\myb$ minimizing $\crossingDepth(\myp_\mya\myp_\myb)$ (Figure~\ref{fig:convexR}(a)). Let $\myq_1,\ldots,\myq_{\crossingDepth(\myp_\mya\myp_\myb)}$ be the points defining $\crossingDepth(\myp_\mya\myp_\myb)$ in order and let $\myi = \ceil{\crossingDepth(\myp_\mya\myp_\myb)/2}$. Since $\myp_\mya\myp_\myb$ has minimum crossing depth, the point $\myq_\myi$ is the endpoint of segment $\myq_\myi\myp_\myc$ that crosses $\myp_\mya\myp_\myb$. When flipping $\myq_\myi\myp_\myc$ and $\myp_\mya\myp_\myb$, we obtain a segment $\mys$ (either $\mys=\myq_\myi\myp_\mya$ or $\mys=\myq_\myi\myp_\myb$) with $\crossingDepth(\mys)$ at most half of the original value of $\crossingDepth(\myp_\mya\myp_\myb)$ (Figure~\ref{fig:convexR}(b,c)). Hence, this operation always divides the value of the smallest positive crossing depth by at least two. As the crossing depth is an integer smaller than $\card{\myC}$, after performing this operation $\OO(\log \card{\myC})$ times, it produces a segment of crossing depth $0$. As the segments of crossing depth $0$ can no longer participate in a flip, the claimed bound follows.
\end{proof}

\subsection{Upper Bound for One Point Inside or Outside a Convex}
\label{sec:1InsideOutsideR}
In this section, we prove an upper bound on the number of flips to untangle a multiset of segments with all but one endpoint in convex position.
We first state a lemma used to prove Theorem~\ref{thm:1InsideOutsideR}.

\begin{lemma}
    \label{lem:farthestFirst}
    Consider a set $\myC$ of points in convex position, and a multiset $\myS$ of $\myn$ crossing-free segments with endpoints in $\myC$. Consider the multiset $\myS \cup \{\mys\}$ where $\mys$ is an extra segment with one endpoint in $\myC$ and one endpoint $\myq$ anywhere in the plane.
    There exists a removal strategy $R$ such that any untangle sequence in $R$ starting starting from $\myS \cup \{\mys\}$ has length at most $\myn - 1 = \OO(\myn)$.
\end{lemma}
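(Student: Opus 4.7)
The plan is to analyze the \emph{farthest-first} removal strategy: at every step, let $\mys$ denote the unique segment incident to $\myq$ (it is unique because flips preserve vertex degrees and $\myq$ starts with degree~$1$, so the property persists throughout), and flip $\mys$ with the chord $\mys_\myk\in\myS$ whose intersection with $\mys$ is farthest from $\myq$ along $\mys$. The overall claim is that after every flip of this strategy, the multiset of segments other than the current $\myq$-segment is crossing-free, so every crossing in the current configuration involves the $\myq$-segment.

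I plan to prove this invariant by induction on the flip count. Write $\mys=\sgt{\myq}{\myp}$ and suppose $\mys$ currently crosses chords $\mys_1,\ldots,\mys_\myk$ in this order from $\myq$, with $\mys_\myk=\sgt{\myp_1}{\myp_2}$. The flip inserts either $\{\sgt{\myq}{\myp_1},\sgt{\myp}{\myp_2}\}$ or $\{\sgt{\myq}{\myp_2},\sgt{\myp}{\myp_1}\}$; in both cases the new $\myq$-segment has the form $\mys'=\sgt{\myq}{\myp_j}$ and the new chord is $\myc=\sgt{\myp}{\myp_{j'}}$ with $\{j,j'\}=\{1,2\}$. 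Because $\mys_1,\ldots,\mys_\myk$ are pairwise non-crossing chords that all cross $\mys$, they are nested with respect to $\mys$ with $\mys_\myk$ innermost; equivalently, $\myp_1$ and $\myp_2$ are the endpoints closest to $\myp$ along the two hull arcs into which $\mys$ splits the boundary of the convex hull of $\myC$. A cyclic-interleaving argument on the hull then gives that $\myc$ crosses none of $\mys_1,\ldots,\mys_{\myk-1}$ (its endpoints appear in non-alternating cyclic order with those of each $\mys_i$) and none of the chords of $\myS$ that did not cross $\mys$ (such a chord, having both endpoints on one side of $\mys$, must lie entirely in the small ear cut off by $\myc$ or entirely outside it).

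Given the invariant, I would bound the number of flips using the potential $\PotCrossings=$ the number of crossings of the current $\myq$-segment. I plan to show that each flip decreases $\PotCrossings$ by exactly one. The drop is at least one since the crossing $\mys\cap\mys_\myk$ disappears. For the matching upper bound, the new $\myq$-segment $\mys'$ still crosses every $\mys_i$ with $i<\myk$: the endpoints of $\mys_i$ lie on opposite sides of both $\mys$ and $\mys'$ because $\myp_j$ is ``between'' $\myp$ and the endpoint of $\mys_i$ on its arc, and $\myq,\myp_j$ lie on opposite sides of the line through $\mys_i$, inherited from $\mys$ crossing $\mys_i$ and $\myp,\myp_j$ sharing the same hull arc relative to $\mys_i$. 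Moreover, $\mys'$ acquires no new crossing with a chord of $\myS$ that did not cross $\mys$, by the same small-ear analysis used for $\myc$. Since $\PotCrossings\le\myn$ initially, the strategy terminates after $\OO(\myn)$ flips.

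The main obstacle I foresee is the geometric case analysis, specifically ruling out a ``straddling'' chord $\mys''\in\myS$ with one endpoint between $\myp$ and $\myp_j$ on a hull arc and the other endpoint further along the same arc, which could a priori create a new crossing with $\mys'$ or $\myc$. The key observation is that such a chord would cyclically interleave with $\mys_\myk$ on the hull boundary and therefore cross it, contradicting the crossing-freeness of $\myS$. This nesting structure of $\mys_1,\ldots,\mys_\myk$ with respect to $\mys$ is the same whether $\myq$ lies inside or outside the convex hull of $\myC$, so a single argument handles both regimes in the statement.
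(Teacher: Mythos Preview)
Your approach is correct and essentially the same as the paper's: both use the farthest-first removal strategy and rely on the invariant that the $\myC\myC$-segments remain pairwise crossing-free throughout. The paper's bookkeeping is a bit leaner---instead of tracking $\PotCrossings$, it simply observes that the line through the newly inserted $\myC\myC$-segment is crossing-free, so that segment is uncrossable and never flipped again, which already yields the $\myn-1$ bound; your extra claim that the new $\myq$-segment still crosses every $\mys_i$ with $i<\myk$ is true but not needed for the bound.
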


\begin{figure}[htb]
    \hspace*{\stretch{1}}%
    \pbox[b]{\textwidth}{\centering\includegraphics[scale=\graphicsScale,page=1]{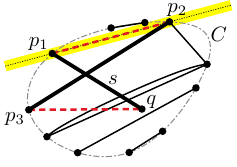}\newline(a)}\hspace*{\stretch{2}}%
    \pbox[b]{\textwidth}{\centering\includegraphics[scale=\graphicsScale,page=2]{1InsideOutsideConvexR}\newline(b)}\hspace*{\stretch{1}}%
    \caption{Proof of Lemma~\ref{lem:farthestFirst} with $\myq$ (a) inside, and (b) outside the convex hull of $\myC$.}
    \label{fig:farthestFirst}
\end{figure}

\begin{proof}
    Iteratively flip the segment $\mys=\myq\myp_1$ with the segment $\myp_2\myp_3 \in \myS$ crossing $\myq\myp_1$ the farthest from $\myq$ (Figure~\ref{fig:farthestFirst}).
    This flip inserts a $\myC\myC$-segment, say $\myp_1\myp_2$, which is impossible to flip again, because the line $\myp_1\myp_2$ is crossing free. The flip does not create any crossing between $\myC\myC$-segments of the multiset $\myS$. The last flip inserts two crossing-free segments instead of one, hence the $-1$ in the lemma statement.
\end{proof}

We are now ready to state and prove the theorem.

\begin{theorem}\label{thm:1InsideOutsideR}
Consider a multiset $\myS$ of $\myn$ segments such that the endpoints $\myP$ satisfy the property $\Property$ of being partitioned into $\myP = \myC \cup \myT$ where $\myC$ is in convex position, and $\myT = \{\myq\}$.
Let $\myt$ be the sum of the degrees of the points in $\myT$.
Let $\dR_{\Convex,\Gamma}(\myn)$ be the number of flips to untangle any multiset of at most $\myn$ segments with endpoints in convex position, a graph property $\Gamma$, and removal choice.
There exists a removal strategy $ R $ such that any untangle sequence in $ R $ has length at most
\[\dR_{\Property,\Gamma}(\myn) \leq \myt \cdot (\myn-1) + \dR_{\Convex,\Gamma}(\myn) = \OO(\myt\myn + \dR_{\Convex,\Gamma}(\myn)).\]

In particular, we have the following upper bounds for different graph properties:
\[\dR_{\Property}(\myn,\myt) = \OO(\myn \log \card{\myC} + \myt\myn) = \OO(\myn \log \myn + \myt\myn) \text{ and}\]
\[\dR_{\Property, \Cycle}(\myn,\myt) , \dR_{\Property, \RedBlue}(\myn,\myt) = \OO(\myt\myn).\]

Furthermore, if $\myS$ has no pair of $\myC\myC$-segments that cross each other, then any untangle sequence in $ R $ starting from $\myS$ has length at most $\myt\cdot(\myn-1)=\OO(\myt\myn)$.
\end{theorem}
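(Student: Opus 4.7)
The plan is to untangle $\myS$ in two sequential phases. Phase~1 handles all crossings between pairs of $\myC\myC$-segments using a convex removal strategy, and Phase~2 iteratively untangles each of the $\myt$ segments incident to $\myq$ via Lemma~\ref{lem:farthestFirst}.

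For Phase~1, note that flipping a pair of crossing $\myC\myC$-segments produces a pair of $\myC\myC$-segments (all four endpoints remain in $\myC$), so running a convex removal strategy restricted to $\myC\myC$-vs-$\myC\myC$ flips is equivalent to running one on the sub-multiset of $\myC\myC$-segments alone. By the definition of $\dR_{\Convex,\Gamma}(\myn)$, this requires at most $\dR_{\Convex,\Gamma}(\myn)$ flips, which specializes to $\OO(\myn \log \card{\myC})$ for general graph properties by Theorem~\ref{thm:RUpperConvex} and to $\OO(\myn)$ for $\Cycle$ and $\RedBlue$ by the tight bounds of~\cite{BMS19,OdW07,WCL09}. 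The $\myq$-segments are structurally untouched during Phase~1; any new $\myq$-vs-$\myC\myC$ crossings created are deferred to Phase~2.

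For Phase~2, enumerate the segments at $\myq$ as $\mys_1, \ldots, \mys_\myt$ and process them in order, applying Lemma~\ref{lem:farthestFirst} to each $\mys_i$ to iteratively flip it with the $\myC\myC$-segment crossing it farthest from $\myq$ until $\mys_i$ has no crossings. The lemma contributes at most $\myn-1$ flips per $\myq$-segment, for a total of $\myt(\myn-1)$. Its hypothesis that the $\myC\myC$-segments are crossing-free among themselves holds at the start of Phase~2 by Phase~1 and is preserved across iterations, since the lemma's proof shows that each newly inserted $\myC\myC$-segment lies on a crossing-free line and hence cannot participate in any crossing. Moreover, two $\myq$-segments share the endpoint $\myq$ and so never cross by general position. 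Summing the two phases yields $\dR_{\Property,\Gamma}(\myn) \leq \myt(\myn-1) + \dR_{\Convex,\Gamma}(\myn)$, and the furthermore clause is immediate: if $\myS$ has no $\myC\myC$-$\myC\myC$ crossings initially, Phase~1 is vacuous and only the $\myt(\myn-1)$ flips of Phase~2 remain.

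The main obstacle will be verifying the composition of Phase~2's iterations. Specifically, when handling some $\mys_i$, each newly inserted $\myC\myC$-segment must not cross any previously untangled $\mys_j$ with $j < i$, for otherwise the multiset could still contain crossings after all $\myt$ iterations. This is granted by the strong claim in Lemma~\ref{lem:farthestFirst}'s proof that the supporting line of each newly inserted $\myC\myC$-segment is crossing-free within the current multiset: since no current segment, whether a $\myC\myC$-segment or a $\myq$-segment, crosses that line, the new $\myC\myC$-segment is uncrossable, preserving both the invariant for the next iteration and the untangled status of every earlier $\mys_j$.
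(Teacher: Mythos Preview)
Your two-phase approach matches the paper's proof exactly, and you have correctly identified the ``main obstacle'': showing that processing $\mys_i$ does not re-tangle an earlier $\mys_j$. However, your justification for that step has a genuine gap.

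You claim that Lemma~\ref{lem:farthestFirst}'s proof establishes that the supporting line of each newly inserted $\myC\myC$-segment is crossing-free within the \emph{full} current multiset, including all $\myq$-segments. That is not what the lemma proves: its setup has only a \emph{single} segment incident to $\myq$, and the crossing-free-line claim is relative to that restricted multiset $\myS \cup \{\mys\}$. In the full setting with several $\myq$-segments, the line through the new $\myC\myC$-segment $\sgt{\myp_1}{\myp_2}$ can certainly be crossed by some other $\myq$-segment $\sgt{\myq}{\myp_j}$ (take $\myq$ inside and $\myp_j$ on the arc of $\myC$ from $\myp_1$ to $\myp_2$ not containing $\myp_3$). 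So the new $\myC\myC$-segment is not ``uncrossable'' in the sense you invoke, and your argument for preserving the untangled status of earlier $\mys_j$ does not go through as written.

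What \emph{is} true---and what must actually be argued---is the weaker statement that a \emph{previously crossing-free} $\mys_j$ cannot acquire a new crossing when you process $\mys_i$. The paper handles this by a case analysis on the position of $\myq$. When $\myq$ is outside the convex hull of $\myC$, one shows (via the splitting lemma) that the crossing-free $\mys_j$ is uncrossable. When $\myq$ is inside, a flip could introduce a crossing on $\mys_j$ only if $\myq$ lies strictly inside the convex quadrilateral whose diagonals are the two removed segments; but in Phase~2 one of those removed segments is always incident to $\myq$, so $\myq$ is a vertex of that quadrilateral, not an interior point. Your proof needs this (or an equivalent) argument in place of the over-reading of Lemma~\ref{lem:farthestFirst}.
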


\begin{figure}[htb]
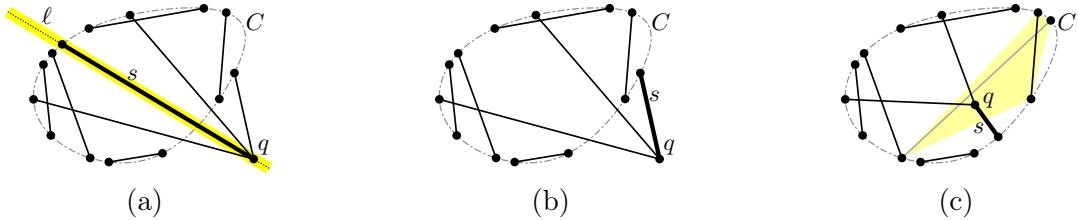

    \hspace*{\stretch{1}}%
    \pbox[b]{\textwidth}{\centering\includegraphics[scale=\graphicsScale,page=3]{1InsideOutsideConvexR}\newline(a)}\hspace*{\stretch{2}}%
    \pbox[b]{\textwidth}{\centering\includegraphics[scale=\graphicsScale,page=4]{1InsideOutsideConvexR}\newline(b)}\hspace*{\stretch{2}}%
    \pbox[b]{\textwidth}{\centering\includegraphics[scale=\graphicsScale,page=5]{1InsideOutsideConvexR}\newline(c)}\hspace*{\stretch{1}}%
    \caption{Proof of Theorem~\ref{thm:1InsideOutsideR}. (a) Case~1. (b) Case~2. (c) Case~3. The existence of the the gray segment is assumed for a contradiction.}
    \label{fig:1InsideOutsideR}
\end{figure}

\begin{proof}
We start by untangling the segment with both endpoints in $\myC$ using $\dR_{\Convex,\Gamma}(\myn)$ flips.
For each segment $\mys$ with endpoint $\myq$ with crossing, we apply Lemma~\ref{lem:farthestFirst} to $\mys$ and the $\myC\myC$-segments in $\myS$ crossing $\mys$.
Once a segment $\mys$ incident to $\myq$ is crossing free, it is impossible to flip it again as we fall in one of the following cases (Figure~\ref{fig:1InsideOutsideR}).
Let $\myl$ be the line containing $\mys$.

\textbf{Case~1:} If $\myl$ is crossing free, then $\myl$ splits (see Lemma~\ref{lem:splitting}) the multigraph in three partitions: the segments on one side of $\myl$, the segments on the other side of $\myl$, and the segment $\mys$ itself.

\textbf{Case~2:} If $\myl$ is not crossing free and $\myq$ is outside the convex hull of $\myC$, then $\mys$ is uncrossable (see the splitting lemma, i.e., Lemma~\ref{lem:splitting}).

\textbf{Case~3:} If $\myq$ is inside the convex hull of $\myC$, then introducing a crossing on $\mys$ would require that $\myq$ lies in the interior of the convex quadrilateral (shaded in yellow in Figure~\ref{fig:1InsideOutsideR}(c)) whose diagonals are the two segments removed by a flip. The procedure excludes this possibility by ensuring that there are no crossing pair of $\myC\myC$-segments in $\myS$, and, therefore, that one of the removed segment already has $\myq$ as an endpoint.

Therefore, we need at most $\myn - 1$ flips for each of the $\myt$ segments incident to $\myq$.
\end{proof}

\subsection[Upper Bound for One Point In and One Out a Convex]{Upper Bound for One Point Inside and One Point Outside a Convex}
\label{sec:1Inside1OutsideR}

Given an endpoint $\myp$, let $\degree(\myp)$ denote the degree of $\myp$, that is, the number of segments incident to $\myp$. The following lemma is used to prove Theorem~\ref{thm:1Inside1OutsideR}.

\begin{lemma} \label{lem:1Inside1OutsideR}
Consider a multiset $\myS$ of $\myn$ segments with endpoints $\myP$ partitioned into $\myP = \myC \cup \myT$ where $\myC$ is in convex position, and $\myT = \{\myq,\myq'\}$ such that $\myq$ is outside the convex hull of $\myC$ and $\myq'$ is inside the convex hull of $\myC$.
Consider that $\myq$ is the endpoint of a single segment $\mys$ in $\myS$ and that all the crossings of $\myS$ are on $\mys$.
We define the parameter $\myt$ as the sum of the degrees of the points in $\myT$, i.e., $ \myt = \degree ( \myq ' ) + 1 $.

There exists a removal strategy $\myR$ such that any flip sequence of $\myR$ starting at $\myS$ and ending in a multiset of segments where all crossings (if any) are on the segment $\myq\myq'$ (if $\sgt{\myq}{\myq'} \notin \myS$ then there are no crossings) has length $\OO(\myt\myn)$.
\end{lemma}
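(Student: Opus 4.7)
The plan is to use a farthest-first removal strategy on the unique segment $\mys$ currently incident to $\myq$. Since $\degree(\myq)=1$ is preserved under flips, $\mys$ is well-defined throughout the execution. The strategy $\myR$ halts when $\mys$ is crossing free or $\mys=\sgt{\myq}{\myq'}$; otherwise, it flips $\mys$ with the crossing segment $\mys_\myk$ whose intersection with $\mys$ is farthest from $\myq$.

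The first step is to prove, by induction on the number of flips, that the invariant ``every crossing in the current multiset lies on $\mys$'' is maintained. Following the geometric reasoning of Lemma~\ref{lem:farthestFirst}, after a flip both newly inserted segments lie in the two triangles cut off by pieces of $\mys$ and $\mys_\myk$ near their intersection point $\myx$; the farthest-first choice of $\mys_\myk$ together with the inductive hypothesis implies that no other segment in the current multiset meets the interior of these triangles. Consequently, the new segment incident to $\myq$ inherits only those crossings that previously lay on $\mys$ strictly closer to $\myq$ than $\myx$, while the second inserted segment is crossing free.

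The counting step partitions the flips into type (a), where $\mys_\myk$ is a $\myC\myC$-segment, and type (b), where $\mys_\myk$ is a $\myC\myq'$-segment. Each maximal run of consecutive type (a) flips is precisely an instance of Lemma~\ref{lem:farthestFirst} applied to the current $\mys$ and the $\myC\myC$-segments (the $\myq'$-segments playing no active role during the run), so each run has length at most $\OO(\myn)$. A type (b) flip is either terminating (producing $\mys=\sgt{\myq}{\myq'}$) or performs an endpoint swap: the old $\myC$-endpoint $\myp_1$ of $\mys$ migrates to the $\myq'$-fan as an endpoint of a new $\myC\myq'$-segment, and a $\myC$-endpoint of $\mys_\myk$ becomes the new endpoint of $\mys$. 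A potential on the $\myq'$-fan (for instance, the number of $\myq'$-segments whose $\myC$-endpoint lies in a specific open half-plane determined by the line through $\sgt{\myq}{\myq'}$) is invariant under type (a) flips, thanks to the crossing-on-$\mys$ invariant, and strictly decreases at each non-terminating type (b) flip, bounding the type (b) count by $\OO(\myt)$. Combining gives $\OO(\myt \myn)$ flips.

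The main obstacle is establishing the $\OO(\myt)$ bound on type (b) flips under adversarial insertion. A non-terminating type (b) flip modifies the $\myq'$-fan by swapping two $\myC$-endpoints, while intervening type (a) flips may rearrange the $\myC\myC$-segments in ways that alter which $\myC\myq'$-segment becomes the farthest crossing at the next type (b) step. The potential must therefore be robust to type (a) activity while shrinking at every non-terminating type (b) flip; the first-step invariant is crucial to isolate the $\myq'$-fan from type (a) effects, and a careful geometric argument using that $\myq'$ is strictly inside the convex hull of $\myC$ while $\myq$ is strictly outside is needed to exhibit a monotone potential.
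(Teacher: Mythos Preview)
There is a genuine gap in your invariant claim. After a type~(a) flip of $\mys=\sgt{\myq}{\myp_1}$ with a $\myC\myC$-segment $\mys_\myk=\sgt{\myp_2}{\myp_3}$ (crossing at $\myx$), the adversary may insert the $\myC\myC$-segment $\sgt{\myp_1}{\myp_3}$. Your empty-triangle argument correctly shows that no segment crosses the sides $\sgt{\myx}{\myp_1}$ or $\sgt{\myx}{\myp_3}$, so any segment crossing $\sgt{\myp_1}{\myp_3}$ must have an endpoint in the open triangle $\trgl{\myx}{\myp_1}{\myp_3}$. No point of $\myC$ lies there --- but $\myq'$ can. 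When $\myq'$ lies inside $\trgl{\myx}{\myp_1}{\myp_3}$, any $\myC\myq'$-segment exiting the triangle must do so through $\sgt{\myp_1}{\myp_3}$, so the inserted $\myC\myC$-segment is \emph{not} crossing free and the invariant ``all crossings lie on $\mys$'' fails. The new $\mys$ may even be crossing free while a crossing between $\sgt{\myp_1}{\myp_3}$ and a $\myC\myq'$-segment persists, so your strategy halts without reaching the required terminal configuration. Your run-length bound via Lemma~\ref{lem:farthestFirst} and your type~(b) potential both rely on this invariant and therefore collapse.

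This is precisely the configuration the paper isolates as a separate subcase of its Case~1: when the line $\myl$ through $\mys_\myk$ separates $\myq$ from $\myq'$. The paper does not attempt to maintain the invariant there; instead it observes that $\myl$ is crossing free after the flip, splits along $\myl$ (Lemma~\ref{lem:splitting}), and finishes each side with Theorem~\ref{thm:1InsideOutsideR} in $\OO(\myt\myn)$ flips. Only in the complementary subcase (where $\myq'$ lies on the $\myq$-side of $\myl$, hence outside both cut-off triangles) does the paper recurse on $\myn'\leq\myn-1$ segments. For type~(b) flips the paper's argument is also more direct than the half-plane potential you sketch: the inserted segment $\sgt{\myq'}{\myp_1}$ is itself crossing free (here the relevant triangle is $\trgl{\myq'}{\myx}{\myp_1}$, which has $\myq'$ as a vertex rather than an interior point), bounding the number of consecutive Case~3 flips by $\OO(\myt)$ and yielding the recurrence $\nbf(\myn)\leq \OO(\myt)+\nbf(\myn-1)$.
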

\begin{proof}
We proceed as follows, while $\mys$ has crossings. For induction purpose, let $\nbf(\myn)$ be the length of the flip sequence in the lemma statement for $\myn$ segments.
Let $\mys'$ be the segment that crosses $\mys$ at the point farthest from $\myq$. We flip $\mys$ and $\mys'$, arriving at one of the three cases below (Figure~\ref{fig:1Inside1OutsideR}).

\begin{figure}[!ht]
    \hspace*{\stretch{1}}%
    \pbox[b]{\textwidth}{\centering\includegraphics[scale=\graphicsScale,page=1]{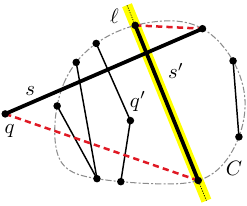}\newline Case~1}\hspace*{\stretch{2}}%
    \pbox[b]{\textwidth}{\centering\includegraphics[scale=\graphicsScale,page=2]{1Inside1OutsideR}\newline Case~2}\hspace*{\stretch{2}}%
    \pbox[b]{\textwidth}{\centering\includegraphics[scale=\graphicsScale,page=3]{1Inside1OutsideR}\newline Case~3}\hspace*{\stretch{1}}%
    \caption{The three cases in the proof of Lemma~\ref{lem:1Inside1OutsideR}.}
    \label{fig:1Inside1OutsideR}%
\end{figure}

\paragraph{Case~1 ($\mathbf{\myC\myT \times \myC\myC)}$.} In this case, the segment $\mys'$ is a $\myC\myC$-segment. Notice that the line $\myl$ containing $\mys'$ becomes crossing free after the flip. There are segments on both sides of $\myl$. 
If $\myl$ separates $\myq,\myq'$, then we untangle both sides independently (see  Lemma~\ref{lem:splitting}) using $\OO(\myn)$ and $\OO(\myt \myn)$ flips (Theorem~\ref{thm:1InsideOutsideR}). Otherwise, the segments on one side of $\myl$ are already crossing free (because of the specific choice of $\mys'$) and we inductively untangle the $\myn' \leq \myn-1$ segments on the other side of $\myl$ using $\nbf(\myn')$ flips.

\paragraph{Case~2 ($\mathbf{\myC\myT \times \myC\myT \rightarrow \myC\myC,\myT\myT}$).} If $\mys'$ is a $\myC\myT$-segment and one of the inserted segments is the $\myT\myT$-segment $\myq\myq'$, then the procedure is over as all crossings are on $\myq\myq'$.

\paragraph{Case~3 ($\mathbf{\myC\myT \times \myC\myT \rightarrow \myC\myT,\myC\myT}$).} In this case two $\myC\myT$-segments are inserted. Let $\myp \in \myC$ be an endpoint of $\mys = \myq\myp$. Since the inserted $\myC\myT$-segment $\myq'\myp$ is crossing free, Case~3 only happens $\OO(\myt)$ times before we arrive at Case~1 or Case~2.

Putting the three cases together, we obtain the recurrence
\[\nbf(\myn) \leq \OO(\myt) + \nbf(\myn')\text{, with } \myn' \leq \myn-1,\]
which solves to $\nbf(\myn) = \OO(\myt\myn)$, as claimed.
\end{proof}

We are now ready to prove the theorem.

\begin{theorem}\label{thm:1Inside1OutsideR}
Consider a multiset $\myS$ of $\myn$ segments with endpoints $\myP$, such that $\myP$ satisfies the property $\Property$ that $\myP$ is partitioned into $\myP = \myC \cup \myT$ where $\myC$ is in convex position, and $\myT = \{\myq,\myq'\}$ with $\myq$ outside the convex hull of $\myC$ and $\myq'$ inside the convex hull of $\myC$.
Let $\myt$ be the sum of the degrees of the points in $\myT$, i.e., $\myt = \degree(\myq) + \degree(\myq')$.
Let $\dR_{\Convex,\Gamma}(\myn)$ be the number of flips to untangle any multiset of at most $\myn$ segments with endpoints in convex position, a graph property $\Gamma$, and removal choice.
There exists a removal strategy $\myR$ such that any untangle sequence of $\myR$ for the graph property $\Gamma$ has length
\[\dR_{\Property,\Gamma}(\myn,\myt) = \OO(\degree(\myq)\degree(\myq')\myn + \dR_{\Convex,\Gamma}(\myn)) = \OO(\myt^2\myn + \dR_{\Convex,\Gamma}(\myn)).\]

In particular, we have the following upper bounds for different graph properties:
\[\dR_{\Property}(\myn,\myt) = \OO(\degree(\myq)\degree(\myq')\myn + \myn \log \myn) = \OO(\myt^2\myn + \myn \log \myn) \text{ and }\]
\[\dR_{\Property,\Cycle}(\myn) , \dR_{\Property,\RedBlue}(\myn) = \OO(\myn) .\]
\end{theorem}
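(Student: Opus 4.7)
The plan is to split the untangling into two phases: first address all $\myC\myC$-segment crossings, then handle segments incident to $\myq$ one at a time using the machinery of Lemma~\ref{lem:1Inside1OutsideR}. The total flip count will then decompose cleanly into $\dR_{\Convex,\Gamma}(\myn)$ for the first phase and $\OO(\degree(\myq)\degree(\myq')\myn)$ for the second.

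In the first phase, I invoke the convex removal strategy realizing $\dR_{\Convex,\Gamma}(\myn)$ applied to the sub-multiset of $\myC\myC$-segments of $\myS$, eliminating all crossings among them. Since two segments sharing an endpoint cannot cross (by general position), after this phase every remaining crossing involves at least one segment with an endpoint in $\myT = \{\myq, \myq'\}$; moreover, two such segments that share a $\myT$-endpoint cannot cross, so each remaining crossing pair involves at most one shared or no shared $\myT$-endpoint.

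In the second phase, I enumerate the $\degree(\myq)$ segments $\mys_1, \ldots, \mys_{\degree(\myq)}$ incident to $\myq$ and process them sequentially. Before processing $\mys_i$, I apply Theorem~\ref{thm:1InsideOutsideR} (treating $\myq'$ as the sole non-convex point) to the sub-multiset of segments not incident to $\myq$, so that all crossings at that moment lie on segments incident to $\myq$; then I isolate $\mys_i$ by viewing the other $\myq$-incident segments as crossing-free (which they are, once the above cleanup has been performed, since they pairwise share $\myq$). Lemma~\ref{lem:1Inside1OutsideR} applied to $\mys_i$ uses $\OO((\degree(\myq')+1)\myn) = \OO(\degree(\myq')\myn)$ flips to eliminate the crossings on $\mys_i$ or relegate them to $\myq\myq'$. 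Summing over all $\degree(\myq)$ segments yields $\OO(\degree(\myq)\degree(\myq')\myn)$, which combined with the first phase gives the claimed bound.

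The main obstacle will be verifying that the preconditions of Lemma~\ref{lem:1Inside1OutsideR} can be legitimately re-established between iterations without blowing up the flip count: repeated invocations of Theorem~\ref{thm:1InsideOutsideR} for cleanup could naively contribute $\degree(\myq)$ copies of $\dR_{\Convex,\Gamma}(\myn)$. To avoid this, I would argue that the only crossings reintroduced between iterations are those created by flips of type Case~1 or Case~3 in Lemma~\ref{lem:1Inside1OutsideR} applied to the previous $\mys_{i-1}$, and these are accounted for by the $\OO(\degree(\myq')\myn)$ budget of that iteration (in particular, each Case~3 uses up one $\myC\myT_{\myq'}$-segment permanently, since the inserted $\myq'\myp$ is crossing-free by the farthest-first choice). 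Finally, the specialized bounds for $\Cycle$ and $\RedBlue$ follow by substituting $\dR_{\Convex,\Cycle}(\myn) = \dR_{\Convex,\RedBlue}(\myn) = \OO(\myn)$ and observing that both $\degree(\myq)$ and $\degree(\myq')$ are $\OO(1)$ in these settings.
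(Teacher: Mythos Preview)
Your overall structure matches the paper's: first untangle the $\myC\myC$-segments, then handle the segments incident to $\myq$ one at a time via Lemma~\ref{lem:1Inside1OutsideR}. The paper organises this into four phases: Phase~1 untangles $\myC\myC\times\myC\myC$; Phase~2 applies Theorem~\ref{thm:1InsideOutsideR} (with $\myq'$ as the single non-convex point) once to remove $\myC\myC\times \myC\myq'$ crossings; Phase~3 applies Lemma~\ref{lem:1Inside1OutsideR} to each $\myC\myq$-segment; Phase~4 cleans up the remaining crossings on $\myq\myq'$.

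There is a genuine gap in your plan: you have no analogue of Phase~4. Each application of Lemma~\ref{lem:1Inside1OutsideR} may terminate in its Case~2, replacing the processed segment by $\myq\myq'$ with crossings against $\myC\myC$-segments. After you have processed all $\degree(\myq)$ segments, several copies of $\myq\myq'$ may remain, each still crossing $\myC\myC$-segments, and nothing in your procedure resolves these (Lemma~\ref{lem:1Inside1OutsideR} does not even apply when the $\myq$-segment is $\myq\myq'$). The paper's Phase~4 observes that at this point the endpoints of the segments still involved in crossings, together with $\myq'$, are in convex position, so the only non-convex endpoint is $\myq$ and Theorem~\ref{thm:1InsideOutsideR} finishes the job in $\OO(\myt\myn)$ flips.

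Your worry about the cost of the inter-iteration cleanups is misplaced, and your proposed fix for it is vague. The correct observation (which the paper uses implicitly via the splitting lemma) is that after applying Lemma~\ref{lem:1Inside1OutsideR} to the sub-multiset $\{\mys_i\}\cup\{\text{non-}\myq\text{ segments}\}$, that lemma's conclusion forces every crossing in the sub-multiset to lie on $\myq\myq'$; hence the non-$\myq$ segments remain pairwise crossing-free and the next cleanup is a no-op. So only \emph{one} cleanup is ever needed---your step~2a before $\mys_1$, which is exactly the paper's Phase~2---and it costs $\OO(\degree(\myq')\myn)$ by the ``furthermore'' clause of Theorem~\ref{thm:1InsideOutsideR}. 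Also, your parenthetical ``which they are'' about the other $\myq$-incident segments being crossing-free is false: they may well cross $\myC\myC$- or $\myC\myq'$-segments. What makes the isolation work is that they are \emph{excluded} from the sub-multiset to which Lemma~\ref{lem:1Inside1OutsideR} is applied, not that they are crossing-free.
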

\begin{proof}
We prove the theorem in the most general case where $ \Gamma = \Multigraph$.
The untangle sequence is decomposed into four phases.

\textbf{Phase~1 ($\mathbf{\myC\myC\times \myC\myC}$).} In this phase, we remove all crossings between pairs of $\myC\myC$-segments in $\myS$ using $\dR_{\Convex,\Gamma}(\myn)$ flips. Throughout all the phases, the invariant that no pair of $\myC\myC$-segments in $\myS$ crosses is preserved.

\textbf{Phase~2 ($\mathbf{\myC\myq' \times \myC\myC}$).} In this phase, we remove all crossings between pairs composed of a $\myC\myC$-segment and a $\myC\myT$-segment incident to $\myq'$ (the point inside the convex hull of $\myC$) using $\OO(\myt\myn)$ flips by Theorem~\ref{thm:1InsideOutsideR}.

\textbf{Phase~3 ($\mathbf{\myC\myq}$).}
At this point, all crossings involve a segment incident to $\myq$. In this phase, we deal with all remaining crossings except the crossings involving the segment $\myq\myq'$. Lemma~\ref{lem:splitting} allows us to remove the crossings in each $\myC\myT$-segment $\mys$ incident to $\myq$ independently, which we do using $\OO(\degree(\myq') \myn)$ flips using Lemma~\ref{lem:1Inside1OutsideR}.
As there are $\degree(\myq)$ $\myC\myT$-segments adjacent to $\myq$ in $\myS$, the total number of flips is $\OO(\degree(\myq) \degree(\myq') \myn) = \OO(\myt^2\myn)$.

\textbf{Phase~4 ($\mathbf{\myC\myC \times \myT\myT}$).} At this point, all crossings involve the $\myT\myT$-segment $\myq\myq'$. The endpoints in $\myC$ that are adjacent to segments with crossings, together with $\myq'$, are all in convex position. Hence, the only endpoint not in convex position is $\myq$, and we apply Theorem~\ref{thm:1InsideOutsideR} using $\OO(\myt\myn)$ flips. 

After the $\dR_{\Convex,\Gamma}(\myn)$ flips in Phase~1, the number of flips is dominated by Phase~3 with $\OO(\degree(\myq) \degree(\myq') \myn) = \OO(\myt^2\myn)$ flips.
\end{proof}

Notice that, in certain cases (for example in the red-blue case with $\myq,\myq'$ having different colors) a flip between two $\myC\myT$-segments never produces two $\myC\myT$-segments. Consequently, Case~3 of the proof of Lemma~\ref{lem:1Inside1OutsideR} never happens, and the bound in Theorem~\ref{thm:1Inside1OutsideR} decreases to $\OO(\dR_{\Convex,\Gamma}(\myn) + \myt\myn)$.

\subsection{Upper Bound for Two Points Inside a Convex}
\label{sec:2insideR}

We prove a similar theorem for two points inside the convex hull of $\myC$.

\begin{theorem}\label{thm:2InsideR}
Consider a multiset $\myS$ of $\myn$ segments with endpoints $\myP$, such that $\myP$ satisfies the property $\Property$ that $\myP$ is partitioned into $\myP = \myC \cup \myT$ where $\myC$ is in convex position, and $\myT = \{\myq,\myq'\}$ with $\myq,\myq'$ inside the convex hull of $\myC$.
Let $\myt$ be the sum of the degrees of the points in $\myT$, i.e., $\myt = \degree(\myq) + \degree(\myq')$.
Let $\dR_{\Convex,\Gamma}(\myn)$ be the number of flips to untangle any multiset of at most $\myn$ segments with endpoints in convex position, a graph property $\Gamma$, and removal choice.
There exists a removal strategy $\myR$ such that any untangle sequence of $\myR$ for the graph property $\Gamma$ has length
\[\dR_{\Property,\Gamma}(\myn,\myt) = \OO(\myt\myn + \dR_{\Convex,\Gamma}(\myn)).\]

In particular, we have the following upper bounds for different graph properties:
\[\dR_{\Property}(\myn,\myt) = \OO(\myt\myn + \myn \log \myn) \text{ and }\]
\[\dR_{\Property,\Cycle}(\myn) , \dR_{\Property,\RedBlue}(\myn) = \OO(\myn) .\]
\end{theorem}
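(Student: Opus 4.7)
The plan is to mirror the phased removal strategy of Theorem~\ref{thm:1Inside1OutsideR} while exploiting the symmetric position of $\myq$ and $\myq'$ inside the convex hull of $\myC$. The improved linear-in-$\myt$ dependence (in place of the quadratic $\myt^2$ of Theorem~\ref{thm:1Inside1OutsideR}) becomes possible because both inside points satisfy Case~3 of the proof of Theorem~\ref{thm:1InsideOutsideR}, which lets each $\myT$-incident segment be handled in $\OO(\myn)$ flips without an extra multiplicative factor from the other $\myT$-point's degree.

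First I would execute Phase~1 exactly as in Theorem~\ref{thm:1Inside1OutsideR}: untangle all pairs of crossing $\myC\myC$-segments using $\dR_{\Convex,\Gamma}(\myn)$ flips, and maintain throughout the subsequent phases the invariant that no pair of $\myC\myC$-segments crosses. Under this invariant, every subsequent flip involves at least one segment incident to $\myT$, since two $\myC\myC$-segments cannot cross. In Phase~2, I would process each $\myT$-incident segment $\mys$ using the farthest-first strategy of Lemma~\ref{lem:farthestFirst}: letting $\myu \in \{\myq, \myq'\}$ denote the $\myT$-endpoint of $\mys$, I repeatedly flip $\mys$ with the segment crossing $\mys$ farthest from $\myu$ until $\mys$ is crossing-free. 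Each such flip strictly reduces the number of crossings on $\mys$, so $\mys$ is cleared in $\OO(\myn)$ flips, yielding $\OO(\myt\myn)$ flips across all $\myT$-incident segments. The $\myq\myq'$-segment (if present) is handled within this phase by treating it as incident to either $\myq$ or $\myq'$.

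The main obstacle will be to show that once a $\myT$-incident segment $\mys$ is cleared, it is not re-crossed by subsequent flips; without this, we would have to reprocess segments and incur an additional factor of $\myt$. Following Case~3 of the proof of Theorem~\ref{thm:1InsideOutsideR}, a re-introduced crossing on $\mys$ requires the $\myT$-endpoint $\myu$ of $\mys$ to lie strictly inside the convex quadrilateral of a subsequently flipped pair. The Phase~1 invariant forces this flipped pair to include at least one $\myT$-incident segment; if that segment is incident to $\myu$, then $\myu$ is a vertex of the quadrilateral rather than an interior point, and no new crossing is created on $\mys$. The delicate subcase is when the flipped pair consists of a segment incident to the other $\myT$-point $\myu' \neq \myu$ together with a $\myC\myC$-segment, since $\myu$ could then be strictly inside the quadrilateral. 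I would resolve this via a careful ordering of the processing of $\myT$-incident segments---for instance jointly in angular order around $\myq$ and $\myq'$---so that the farthest-first rule never triggers such a problematic flip once $\mys$ has been cleared. Verifying this ordering and completing the corresponding case analysis is the crux of the proof; combined with the per-segment $\OO(\myn)$ bound, it yields the claimed total of $\OO(\myt\myn + \dR_{\Convex,\Gamma}(\myn))$ flips.
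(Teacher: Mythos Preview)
Your proposal has genuine gaps that go beyond the ``crux'' you defer.

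First, the per-segment bound is unsupported. Lemma~\ref{lem:farthestFirst} does \emph{not} say that each farthest-first flip reduces the number of crossings on $\mys$; it says the inserted $\myC\myC$-segment lies on a crossing-free line and is therefore uncrossable. With a second inside point $\myq'$, that line may cross $\myC\myT$-segments incident to $\myq'$, so the inserted $\myC\myC$-segment is no longer uncrossable and the $\OO(\myn)$-per-segment argument breaks. Second, your Phase~1 invariant (no $\myC\myC\times\myC\myC$ crossings) is not preserved: if the farthest segment crossing $\mys=\sgt{\myq}{\myp}$ is a $\myC\myT$-segment $\sgt{\myq'}{\myp'}$, one of the two possible insertions is the pair $\sgt{\myq}{\myq'},\sgt{\myp}{\myp'}$, and with removal choice only you cannot forbid it; the new $\myC\myC$-segment $\sgt{\myp}{\myp'}$ can cross existing $\myC\myC$-segments. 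Third, the ``angular ordering'' you propose to prevent re-crossing is a hope, not an argument; you give no reason why such an order exists or why farthest-first respects it.

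The paper's proof is structurally different and considerably more elaborate: it first eliminates $\myC\myT\times\myC\myT$ crossings (two convex subproblems on either side of the line $\lineT{\myq}{\myq'}$), then $\myC\myC\times\myC\myC$ crossings, and then distinguishes \emph{central}, \emph{peripheral}, and \emph{outermost} $\myC\myC$-segments relative to the segment $\sgt{\myq}{\myq'}$. Phases~3 and~4 each use a dedicated potential that is allowed to increase a bounded number of times (whenever a $\myT\myT$-segment is created), together with a minimum-out-depth rule and an ``ear'' argument with a specific priority order to restore the invariants after each step. These mechanisms are precisely what handle the interactions between the two inside points that your sketch leaves open.
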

\begin{proof}
The untangle sequence is decomposed in five phases. At the end of each phase, a new type of crossings is removed, and types of crossings removed in the previous phases are not present, even if they may temporarily appear during the phase.

\textbf{Phase~1 ($\mathbf{\myC\myT\times \myC\myT}$).} In this phase, we remove all crossings between pairs of $\myC\myT$-segments in $\myS$ using $\OO(\dR_{\Convex,\Gamma}(\myt)) = \OO(\dR_{\Convex,\Gamma}(\myn))$ flips. We separately solve two convex sub-problems defined by the $\myC\myT$-segments in $\myS$, one on each side of the line $\myq\myq'$.

\textbf{Phase~2 ($\mathbf{\myC\myC\times \myC\myC}$).} In this phase, we remove all crossings between pairs of $\myC\myC$-segments in $\myS$ using $\OO(\dR_{\Convex,\Gamma}(\myn))$ flips. As no $\myC\myT$-segment has been created, there is still no crossing between a pair of $\myC\myT$ segments. Throughout, our removal will preserve the invariant that no pair of $\myC\myC$-segments in $\myS$ crosses.

\textbf{Phase~3 ($\mathbf{\myC\myT \times \textbf{non-central } \myC\myC}$).} 
We distinguish between a few types of $\myC\myC$-segments. The \textit{central} $\myC\myC$-segments cross the segment $\myq\myq'$ (regardless of $\myq\myq'$ being in $\myS$ or not), while the \emph{non-central} do not. The \textit{peripheral} $\myC\myC$-segments cross the line $\myq\myq'$ but not the segment $\myq\myq'$, while the \emph{outermost} $\myC\myC$-segments do not cross either. In this phase, we remove all crossings between $\myC\myT$-segments in $\myS$ and non-central $\myC\myC$-segments in $\myS$.

Given a non-central $\myC\myC$-segment $\myp\myp'$, let the \emph{out-depth} of the segment $\sgt{\myp}{\myp'}$ be the number of points of $\myC$ that are contained inside the halfplane bounded by the line $\myp\myp'$ and not containing $\myT$. Also, let $\PotCNC$ be the number of crossings between the non-central $\myC\myC$-segments and the $\myC\myT$-segments in $\myS$. At the end of each step the two following invariants are preserved. (i) No pair of $\myC\myC$-segments in $\myS$ crosses. (ii) No pair of $\myC\myT$-segments in $\myS$ crosses.

At each step, we choose to flip a non-central $\myC\myC$-segment $\myp\myp'$ of minimum out-depth that crosses a $\myC\myT$-segment. We flip $\myp\myp'$ with the $\myC\myT$-segment $\myq''\myp''$ (with $\myq'' \in \{\myq,\myq'\}$)
that crosses $\myp\myp'$ at the point closest to $\myp$ (Figure~\ref{fig:2InsideR-3-1}(a) and Figure~\ref{fig:2InsideR-3-2}(a)).
One of the possibly inserted pairs may contain a $\myC\myT$-segment $\mys$ that crosses another $\myC\myT$-segment $\mys'$, violating the invariant (ii) (Figure~\ref{fig:2InsideR-3-1}(b) and Figure~\ref{fig:2InsideR-3-2}(b)). If there are multiple such segments $\mys'$, then we consider $\mys'$ to be the segment whose crossing with $\mys$ is closer to $\myq''$. We flip $\mys$ and $\mys'$ and obtain either two $\myC\myT$-segments (Figure~\ref{fig:2InsideR-3-1}(c) and Figure~\ref{fig:2InsideR-3-2}(c)) or a $\myC\myC$-segment and the segment $\myq\myq'$ (Figure~\ref{fig:2InsideR-3-1}(d) and Figure~\ref{fig:2InsideR-3-2}(d)). The analysis is divided in two main cases.

\begin{figure}[htb]
    \hspace*{\stretch{1}}%
    \pbox[b]{\textwidth}{\centering\includegraphics[scale=\graphicsScale,page=2]{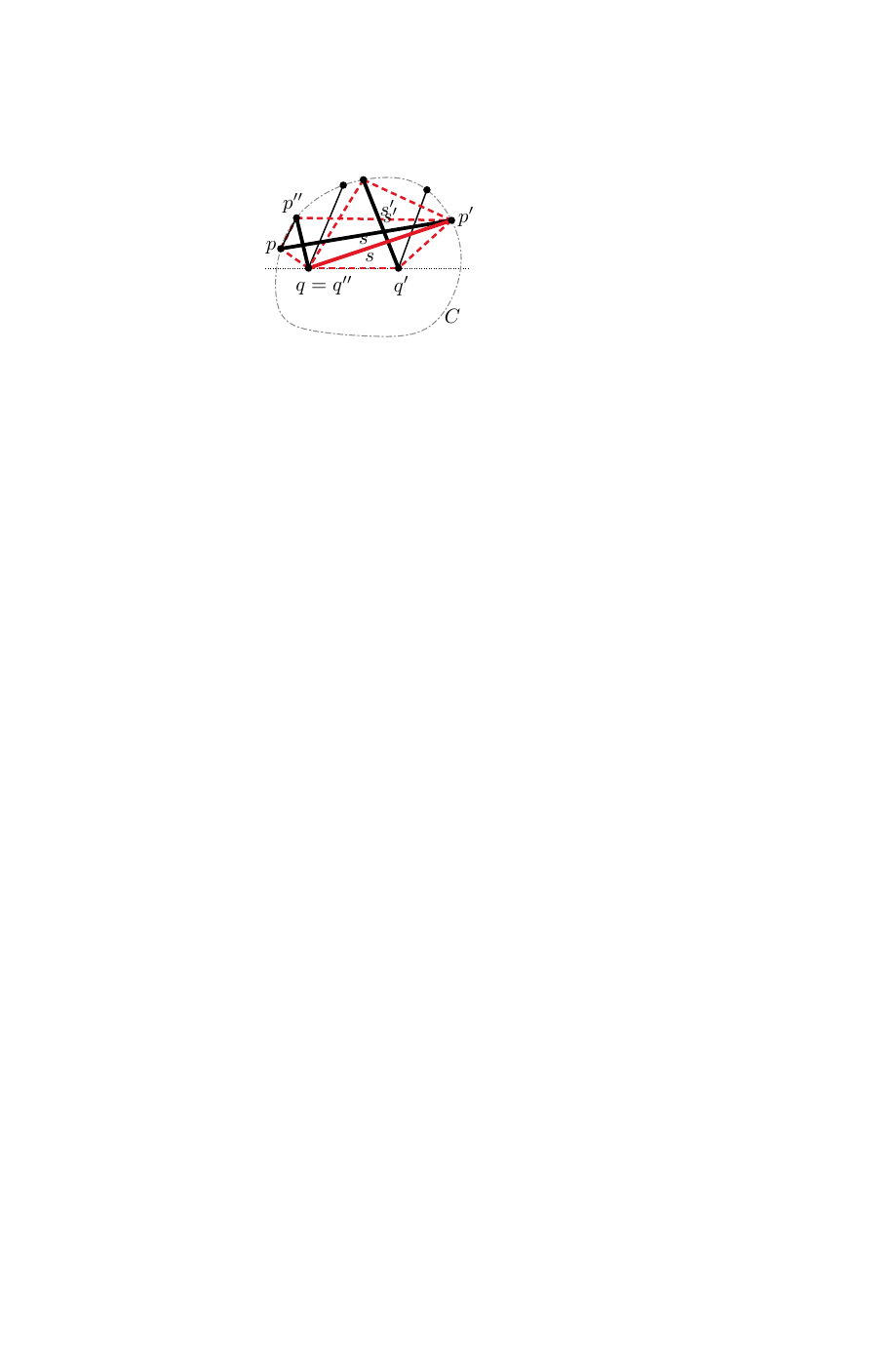}\newline(a)}\hspace*{\stretch{2}}%
    \pbox[b]{\textwidth}{\centering\includegraphics[scale=\graphicsScale,page=3]{2InsideR}\newline(b)}\hspace*{\stretch{2}}%
    \pbox[b]{\textwidth}{\centering\includegraphics[scale=\graphicsScale,page=5]{2InsideR}\newline(c)}\hspace*{\stretch{2}}%
    \pbox[b]{\textwidth}{\centering\includegraphics[scale=\graphicsScale,page=4]{2InsideR}\newline(d)}\hspace*{\stretch{1}}%
    \caption{Theorem~\ref{thm:2InsideR}, Phase~3 when $\myp\myp'$ is an outermost segment.}
    \label{fig:2InsideR-3-1}
\end{figure}

If $\myp\myp'$ is an outermost $\myC\myC$-segment (see Figure~\ref{fig:2InsideR-3-1}), then case analysis shows that the two invariants are preserved and $\PotCNC$ decreases. 

\begin{figure}[htb]
    \hspace*{\stretch{1}}%
    \pbox[b]{\textwidth}{\centering\includegraphics[scale=\graphicsScale,page=2]{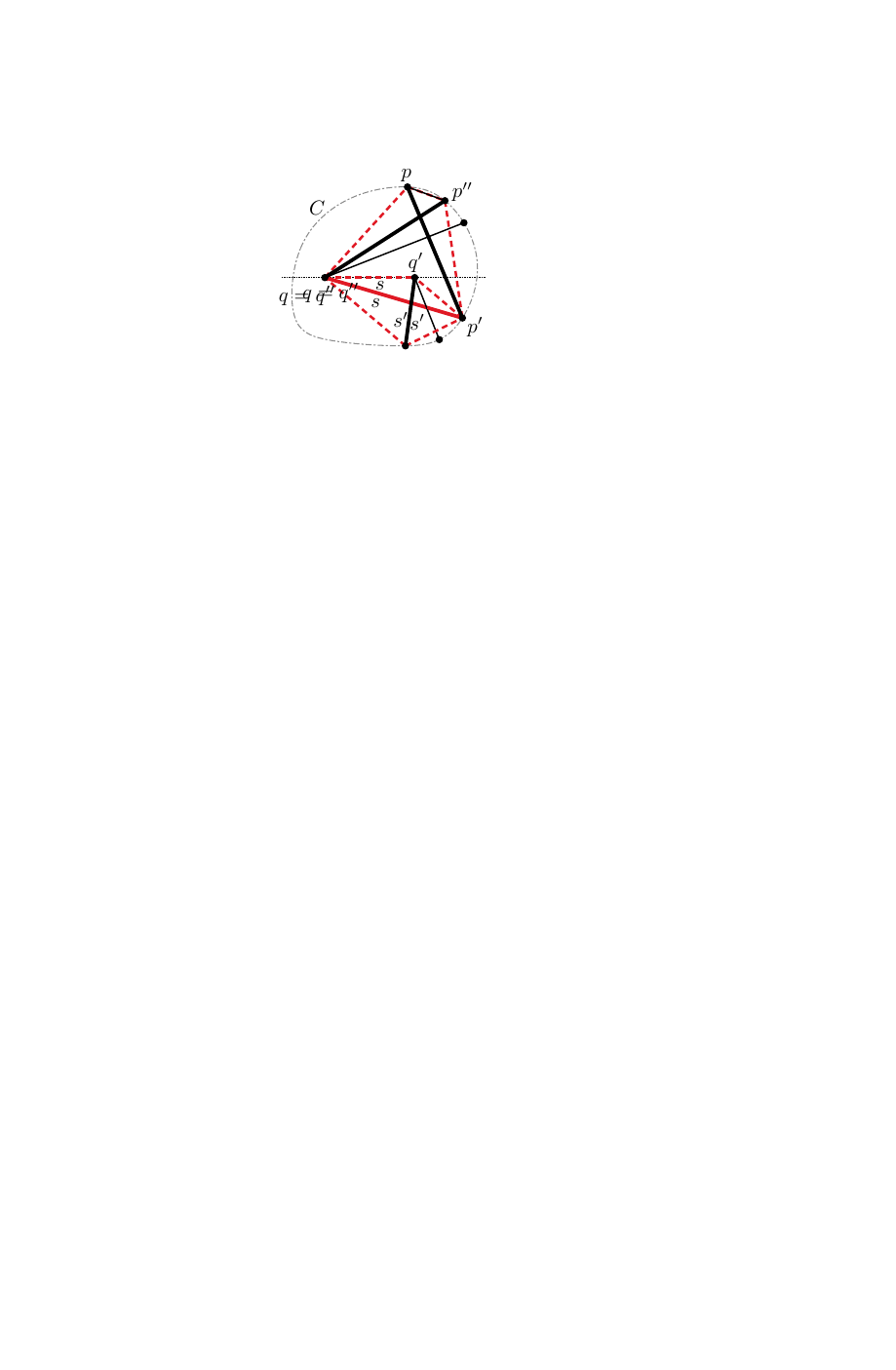}\newline(a)}\hspace*{\stretch{2}}%
    \pbox[b]{\textwidth}{\centering\includegraphics[scale=\graphicsScale,page=3]{2InsideRcase2}\newline(b)}\hspace*{\stretch{2}}%
    \pbox[b]{\textwidth}{\centering\includegraphics[scale=\graphicsScale,page=5]{2InsideRcase2}\newline(c)}\hspace*{\stretch{2}}%
    \pbox[b]{\textwidth}{\centering\includegraphics[scale=\graphicsScale,page=4]{2InsideRcase2}\newline(d)}\hspace*{\stretch{1}}%
    \caption{Theorem~\ref{thm:2InsideR}, Phase~3 when $\myp\myp'$ is a peripheral segment.}
    \label{fig:2InsideR-3-2}
\end{figure}

If $\myp\myp'$ is a peripheral $\myC\myC$-segment (see Figure~\ref{fig:2InsideR-3-2}), then a case analysis shows that the two invariants are preserved and $\PotCNC$ has the following behavior. If no $\myC\myC$-segment is inserted, then $\PotCNC$ decreases. Otherwise a $\myC\myC$-segment and a $\myT\myT$-segment are inserted and $\PotCNC$ may increase by $\OO(\myt)$ (Figure~\ref{fig:2InsideR-3-2}(d)). Notice that the number of times the $\myT\myT$-segment $\myq\myq'$ is inserted is $\OO(\myt)$, which bounds the total increase by $\OO(\myt^2)$. 

As $\PotCNC = \OO(\myt\myn)$, the total increase is $\OO(\myt^2)$, and $\PotCNC$ decreases at all but $\OO(\myt)$ steps, we have that the number of flips in Phase~3 is $\OO(\myt\myn)$.

\textbf{Phase~4 ($\mathbf{\myC\myT \times \myC\myC\central}$).} 
At this point, each crossing involves a central $\myC\myC$-segment and either a $\myC\myT$-segment or the $\myT\myT$-segment $\myq\myq'$.
In this phase, we remove all crossings between $\myC\myT$-segments and central $\myC\myC$-segments in $\myS$, ignoring the $\myT\myT$-segments.
This phase ends with crossings only between $\myq\myq'$ and central $\myC\myC$-segments in $\myS$.

\begin{figure}[htb]
    \hspace*{\stretch{1}}%
    \pbox[b]{\textwidth}{\centering\includegraphics[scale=\graphicsScale,page=1]{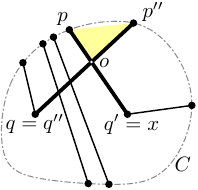}\newline(a)}\hspace*{\stretch{2}}%
    \pbox[b]{\textwidth}{\centering\includegraphics[scale=\graphicsScale,page=2]{2InsideRphase4}\newline(b)}\hspace*{\stretch{2}}%
    \pbox[b]{\textwidth}{\centering\includegraphics[scale=\graphicsScale,page=3]{2InsideRphase4}\newline(c)}\hspace*{\stretch{2}}%
    \pbox[b]{\textwidth}{\centering\includegraphics[scale=\graphicsScale,page=4]{2InsideRphase4}\newline(d)}\hspace*{\stretch{1}}%
    \caption{Theorem~\ref{thm:2InsideR}, Phase~4. (a) A pair of $\myC\myT$-segments with an ear. (b) A $\myC\myC$-segment and a $\myC\myT$-segment with an ear. (c) Flipping an ear that produces crossing pairs of $\myC\myT$-segments. (d) Flipping an ear that inserts a non-central $\myC\myC$-segment with crossings.}
    \label{fig:2InsideR-4}
\end{figure}

Given four endpoints $\myq'' \in \myT$, $\myp,\myp'' \in \myC$, and $\myx \in \myC \cup \myT$, we say that a pair of segments $\myp''\myq'',\myx\myp \in \myS$ crossing at a point $\myo$ contains an \emph{ear} $\widehat{\myp\myp''}$ if the interior of the triangle $\myp\myp''\myo$ intersects no segment of $\myS$ (see Figure~\ref{fig:2InsideR-4}(a) and~\ref{fig:2InsideR-4}(b)).
Every set of segments with endpoints in $\myC \cup \myT$ with $\card{\myT} = 2$ that has crossings (not involving the $\myT\myT$-segment) contains an ear (adjacent to the crossing that is farthest from the line $\myq\myq'$).

At each \emph{step}, we flip a pair of segments $\myp''\myq'',\myx\myp$ that contains an ear $\widehat{\myp\myp''}$, prioritizing pairs where both segments are $\myC\myT$-segments. Notice that, even though initially we did not have crossing pairs of $\myC\myT$-segments in $\myS$, they may be produced in the flip (Figure~\ref{fig:2InsideR-4}(c)).
If the flip inserts a non-central $\myC\myC$-segment which crosses some $\myC\myT$-segments in $\myS$ (Figure~\ref{fig:2InsideR-4}(d)), then, we perform the following \emph{while loop}. Assume without loss of generality that $\myq\myq'$ is horizontal and $\mys$ is closer to $\myq'$ than to $\myq$. While there exists a non-central $\myC\myC$-segment $\mys$ with crossings, we flip $\mys$ with the $\myC\myT$-segment $\mys'$ crossing $\mys$ that comes first according to the following order. As a first criterion, a segment incident to $\myq$ comes before a segment incident to $\myq'$. As a second tie-breaking criterion, a segment whose crossing point with $\mys$ that is farther from the line $\myq\myq'$ comes before one that is closer.

Let $\PotCTCNC = \OO(\myt\myn)$ be the number of crossings between central $\myC\myC$-segments and $\myC\myT$-segments in $\myS$ plus the number of crossings between $\myC\myT$-segments in $\myS$.
A case analysis shows that the value of $\PotCTCNC$ decreases at each step. If no non-central $\myC\myC$-segment is inserted, then the corresponding step consists of a single flip. As $\PotCTCNC$ decreases, there are $\OO(\myt\myn)$ steps that do not insert a non-central $\myC\myC$-segment.

However, if a non-central $\myC\myC$-segment is inserted, at the end of the step we inserted a $\myC\myC$-segment that is uncrossable (see Lemma~\ref{lem:splitting}). As the number of $\myC\myC$-segments in $\myS$ is $\OO(\myn)$, we have that the number of times the while loop is executed is $\OO(\myn)$. Since each execution of the while loop performs $\OO(\myt)$ flips, we have a total of $\OO(\myt\myn)$ flips in this phase.

\textbf{Phase~5 ($\mathbf{\myT\myT \times \myC\myC\central}$).} In this phase, we remove all crossings left, which are between the possibly multiple copies of the $\myT\myT$-segment $\myq\myq'$ and central $\myC\myC$-segments in $\myS$. The endpoints of the segments with crossings are in convex position and all other endpoints are outside their convex hull. Hence, by Lemma~\ref{lem:splitting}, it is possible to obtain a crossing-free multigraph using $\OO(\dR_{\Convex,\Gamma}(\myn))$ flips.
\end{proof}

\subsection{Upper Bound for Two Points Outside a Convex}
\label{sec:2OutsideR}

In this section, we consider the case of two endpoints $\myq,\myq'$ outside the convex hull of the remaining endpoints $\myC$, which are in convex position. We prove a theorem with a bound that is quadratic in $\myt = \degree(\myq) + \degree(\myq')$ but linear or near-linear in $\myn$. 

\begin{theorem}\label{thm:2OutsideR}
Consider a multiset $\myS$ of $\myn$ segments with endpoints $\myP$, such that $\myP$ satisfies the property $\Property$ that $\myP$ is partitioned into $\myP = \myC \cup \myT$ where $\myC$ is in convex position, and $\myT = \{\myq,\myq'\}$ with $\myq,\myq'$ outside the convex hull of $\myC$.
Let $\myt$ be the sum of the degrees of the points in $\myT$, i.e., $\myt = \degree(\myq) + \degree(\myq')$.
Let $\dR_{\Convex,\Gamma}(\myn)$ be the number of flips to untangle any multiset of at most $\myn$ segments with endpoints in convex position, a graph property $\Gamma$, and removal choice.
There exists a removal strategy $\myR$ such that any untangle sequence of $\myR$ for the graph property $\Gamma$ has length at most
\[\dR_{\Property,\Gamma}(\myn,\myt) = \OO(\degree(\myq) \degree(\myq') \myn ) + \dR_{\Convex,\Gamma}(\myn) = \OO(\myt^2\myn + \dR_{\Convex,\Gamma}(\myn)).\]

In particular, we have the following upper bounds for different graph properties:
\[\dR_{\Property}(\myn,\myt) = \OO(\myt^2\myn + \myn \log \myn) \text{ and }\]
\[\dR_{\Property,\Cycle}(\myn) , \dR_{\Property,\RedBlue}(\myn) = \OO(\myn) .\]
\end{theorem}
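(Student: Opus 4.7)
The plan is to follow the phased approach of Theorem~\ref{thm:1Inside1OutsideR}, adapted to the case where both $\myq$ and $\myq'$ lie outside the convex hull of $\myC$. I prove the theorem for $\Gamma = \Multigraph$; the bounds for the other graph properties then follow from Lemma~\ref{lem:reductionProperty}. The untangle sequence is decomposed into four phases that resolve successively more specific types of crossings, while preserving throughout the invariant that no two $\myC\myC$-segments ever cross.

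Phase~1 removes every $\myC\myC \times \myC\myC$ crossing by the convex removal strategy in $\dR_{\Convex,\Gamma}(\myn)$ flips. Phase~2 invokes Theorem~\ref{thm:1InsideOutsideR} with the single exterior point $\myq'$, ignoring every $\myq$-incident segment, and thereby eliminates every $\myC\myq' \times \myC\myC$ crossing in $\OO(\degree(\myq')\myn)$ flips; this is legitimate because the $\myC\myC$-invariant is preserved, so each call of Lemma~\ref{lem:farthestFirst} on a single $\myC\myq'$-segment is applicable. After Phase~2, every remaining crossing involves a segment containing $\myq$ as an endpoint, possibly including the $\myT\myT$-segment $\sgt{\myq}{\myq'}$.

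Phase~3 processes the $\degree(\myq)$ segments incident to $\myq$ one at a time using a variant of Lemma~\ref{lem:1Inside1OutsideR} tailored to the configuration in which both $\myq$ and $\myq'$ are outside the convex hull. For a chosen $\myC\myq$-segment $\mys$, we iteratively flip $\mys$ with the segment crossing it farthest from $\myq$, producing three cases parallel to those in Lemma~\ref{lem:1Inside1OutsideR}: (i) a flip against a $\myC\myC$-segment inserts an uncrossable $\myC\myC$-segment together with a strictly smaller sub-instance to recurse on; (ii) a flip against a $\myC\myq'$-segment inserts the $\myT\myT$-segment $\sgt{\myq}{\myq'}$ and terminates the processing of $\mys$; or (iii) a flip against a $\myC\myq'$-segment inserts two new $\myC\myT$-segments, changing the $\myC$-endpoint of $\mys$. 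Since case~(iii) replaces the $\myC$-endpoint of $\mys$ by the former $\myC$-endpoint of a $\myC\myq'$-segment, it can occur at most $\degree(\myq')$ times before case~(i) or case~(ii) must apply. Solving the resulting recurrence as in Lemma~\ref{lem:1Inside1OutsideR} gives $\OO(\degree(\myq')\myn)$ flips per $\myC\myq$-segment and thus $\OO(\degree(\myq)\degree(\myq')\myn) = \OO(\myt^2 \myn)$ over the whole phase. The main obstacle is precisely this case analysis: with $\myq'$ outside rather than inside the convex hull, one must re-derive the geometric argument ensuring that in case~(iii) the inserted $\myC\myq'$-segment is crossing-free, and verify that the farthest-first rule preserves both the $\myC\myC$-invariant and the crossing-free status of previously processed $\myC\myq$-segments.

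Finally, Phase~4 resolves the remaining crossings, which all lie between copies of the $\myT\myT$-segment $\sgt{\myq}{\myq'}$ and $\myC\myC$-segments. The endpoints adjacent to these crossings together with $\myq'$ are in convex position, so Theorem~\ref{thm:1InsideOutsideR} with $\myq$ as the single exterior point removes them in $\OO(\degree(\myq)\myn)$ flips. Summing the four phases yields $\OO(\dR_{\Convex,\Gamma}(\myn) + \myt^2 \myn)$, as claimed. The stated bounds for $\dR_{\Property}$, $\dR_{\Property,\Cycle}$, and $\dR_{\Property,\RedBlue}$ follow by substituting Theorem~\ref{thm:RUpperConvex} for the multigraph convex bound and the $\OO(\myn)$ convex bound from~\cite{BMS19,OdW07,WCL09} for TSP tours and red-blue matchings, using $\myt = \OO(1)$ in these two cases.
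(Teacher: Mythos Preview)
Your plan mirrors the proof of Theorem~\ref{thm:1Inside1OutsideR}, but the paper does \emph{not} prove Theorem~\ref{thm:2OutsideR} that way; it uses a delete-and-add-back strategy. All segments incident to $\myq$ are first removed, the remaining instance (with the single outside point $\myq'$) is untangled via Theorem~\ref{thm:1InsideOutsideR}, and then the $\myq$-segments are re-inserted one at a time, each re-insertion being followed by a dedicated routine (Routine~$\myT\myT$ or Routine~$\myC\myT$) that re-untangles using $\OO(\degree(\myq')\myn)$ flips. The flip count of Routine~$\myC\myT$ is established by a separate technical lemma (Lemma~\ref{lem:flipRoutine-CT}).

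There is a genuine gap in your Phase~3. Your bound of $\degree(\myq')$ on the number of case~(iii) occurrences implicitly relies on the inserted segment $\sgt{\myq'}{\myp}$ being crossing-free, exactly as in Lemma~\ref{lem:1Inside1OutsideR}. There, the argument works because $\myq'$ lies \emph{inside} the convex hull of $\myC$: the crossing point $\myo$ then also lies inside, the triangle $\trgl{\myp}{\myo}{\myq'}$ is entirely contained in the hull, and so no point of $\myC$ other than $\myp$ can be in its interior; since the sides $\sgt{\myo}{\myp}$ and $\sgt{\myo}{\myq'}$ are crossing-free, $\sgt{\myq'}{\myp}$ is as well. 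When $\myq'$ is \emph{outside}, the triangle $\trgl{\myp}{\myo}{\myq'}$ may extend beyond the hull and contain vertices of $\myC$, so a $\myC\myC$-segment can cross $\sgt{\myq'}{\myp}$. Then the newly inserted $\myC\myq'$-segment is not crossing-free, can be hit by a later version of $\mys$, and your counting argument (``replaces the $\myC$-endpoint of $\mys$ by the former $\myC$-endpoint of a $\myC\myq'$-segment'') no longer bounds the number of case~(iii) flips by $\degree(\myq')$: after a few steps the $\myC$-endpoint of $\mys$ can return to a previous value. You flag this as ``the main obstacle'' but do not resolve it, and the paper's very different proof structure suggests the direct adaptation does not go through.
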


\begin{figure}[htb]
    \centering
    \begin{tabularx}{\textwidth}{CCCC}
        \includegraphics[scale=\graphicsScale,page=1]{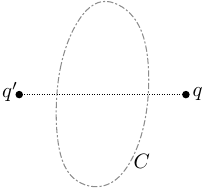} &%
        \includegraphics[scale=\graphicsScale,page=2]{2OutsideRMultigraphRevised} &%
        \includegraphics[scale=\graphicsScale,page=3]{2OutsideRMultigraphRevised} &%
        \includegraphics[scale=\graphicsScale,page=4]{2OutsideRMultigraphRevised} %
        \\
        (a) & (b) & (c) & (d)
    \end{tabularx}
 \caption{Illustration of the four case of relative position of the points $\myq,\myq'$ and the convex hull of $\myC$ for the proof of Theorem~\ref{thm:2OutsideR}.
 Without loss of generality, the case (d) is not considered.}% 
 \label{fig:2OutsideR_WLOG}
\end{figure}

Without loss of generality, we assume that $\myq'$ is not in the interior of the convex hull of $\myC \cup \{\myq\}$ (Figure~\ref{fig:2OutsideR_WLOG}).
The strategy starts with the following preprocessing phase. We \emph{delete} from $\myS$ all segments incident to $\myq$. We then untangle $\myS$ using Theorem~\ref{thm:1InsideOutsideR}, which uses at most $(\myn-1) \degree(\myq') + \dR_{\Convex,\Gamma}(\myn) = \OO(\myn \degree(\myq')) + \dR_{\Convex,\Gamma}(\myn)$ flips.

Next, we \emph{add back} to $\myS$ all segments incident to $\myq$, one by one, in any order. When adding back a $\myT\myT$-segment (Figure~\ref{fig:2OutsideR_TT}(a)), we call Routine~$\myT\myT$, which will untangle $\myS$ using $\OO(\myn)$ flips (Figure~\ref{fig:2OutsideR_TT}(d)). When adding back a $\myC\myT$-segment (Figure~\ref{fig:2OutsideR_CT1}(a) and Figure~\ref{fig:2OutsideR_CT2}(a)), we call Routine~$\myC\myT$, which will untangle $\myS$ using $\OO(\myn \degree(\myq'))$ flips. Since the number of segments that we add back is $\degree(\myq)$, the bound stated in Theorem~\ref{thm:2OutsideR} holds. Next, we describe Routine~$\myT\myT$, followed by Routine~$\myC\myT$.

\begin{figure}[htb]
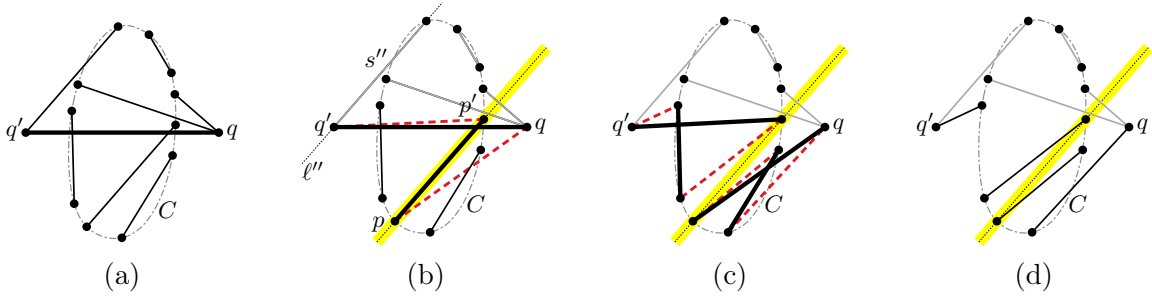

    \centering
    \begin{tabularx}{\textwidth}{CCCC}
        \includegraphics[scale=\graphicsScale,page=5]{2OutsideRMultigraphRevised} &%
        \includegraphics[scale=\graphicsScale,page=6]{2OutsideRMultigraphRevised} &%
        \includegraphics[scale=\graphicsScale,page=7]{2OutsideRMultigraphRevised} &%
        \includegraphics[scale=\graphicsScale,page=8]{2OutsideRMultigraphRevised}%
        \\
        (a) & (b) & (c) & (d)
    \end{tabularx}
 \caption{Illustration for Routine~$\myT\myT$ in the proof of Theorem~\ref{thm:2OutsideR}. (a) A $\myT\myT$-segment (in bold) is added back to $\myS$. (b) The first flip (the flipped segments are in bold, the inserted segments are dashed in red). The segments of $\myS$ that have been removed from $\myS'$ are grayed. The splitting line is highlighted. (c) The remaining flips. (d) $\myS$ is crossing free at the end of Routine~$\myT\myT$.}% 
 \label{fig:2OutsideR_TT}
\end{figure}

\paragraph{Routine~$\myT\myT$.}
If $\sgt{\myq}{\myq'}$ has no crossing, then we are done as $\myS$ is crossing free. Otherwise, we know that there is a single copy of $\sgt{\myq}{\myq'}$, because $\myS \setminus \{\sgt{\myq}{\myq'}\}$ is crossing free.
We start by flipping $\sgt{\myq}{\myq'}$ with an arbitrary segment $\sgt{\myp}{\myp'}$ that crosses $\sgt{\myq}{\myq'}$. Assume without loss of generality that the flip inserted $\sgt{\myp}{\myq}$ and $\sgt{\myp'}{\myq'}$ (Figure~\ref{fig:2OutsideR_TT}(b)).

We now construct a splitting partition of $\myS$ in order to apply Lemma~\ref{lem:splitting}. Let $\myS'$ be the set of segments that have not been assigned a partition yet. Initially $\myS' = \myS$.  We say that a segment is \emph{$\myS'$-uncrossable} if it is not crossed by any segment defined by two endpoints of segments in $\myS'$. While there is an $\myS'$-uncrossable segment $\myu \in \myS'$, we create a new singleton partition containing $\myu$, removing $\myu$ from $\myS'$. We claim that at the end of this loop, the line $\lineT{\myp}{\myp'}$ is crossing free (Figure~\ref{fig:2OutsideR_TT}(b)). We then create two separate partitions $\myS_1,\myS_2$ with the segments of $\myS'$ on each side of $\lineT{\myp}{\myp'}$, as well as a partition with the copies of the segment $\sgt{\myp}{\myp'}$ that are still in $\myS'$ (if any). We are now ready to apply Lemma~\ref{lem:splitting}. The only partitions that may have crossings are $\myS_1,\myS_2$. Each of $\myS_1,\myS_2$ have all crossings in the single $\myC\myT$-segment. Hence, we apply Theorem~\ref{thm:1InsideOutsideR} with $\myt=1$ to untangle each of $\myS_1,\myS_2$ using $\OO(\myn)$ flips, and consequently untangling $\myS$ using $\OO(\myn)$ flips (Figure~\ref{fig:2OutsideR_TT}(c)).

We say that a $\myC\myT$-segment is a \emph{$\myC\myT\inner$-segment} if it intersects the interior of the convex hull of $\myC$ and a \emph{$\myC\myT\outter$-segment} otherwise.
To prove the claim, first note that, since the segment $\sgt{\myq}{\myq'}$ intersects the convex hull of $\myC$, all the $\myC\myT\outter$-segments are initially (i.e., when $\myS'=\myS$) uncrossable, and thereby removed first from $\myS'$.
Second, after all the $\myC\myT\outter$-segments have been removed from $\myS'$, consider the $\myC\myT\inner$-segment $\mys''$ of $\myS'$ whose endpoint in $\myC$ is the farthest away from the line $\lineT{\myq}{\myq'}$.
The line $\myl''$ containing $\mys''$ is crossing free (Figure~\ref{fig:2OutsideR_TT}(b)).
Thus, by Lemma~\ref{lem:splitting}, we partition $\myS'$ in three: the segments on each side of the line $\myl''$ and the segments on the line $\myl''$ itself.
The segments on the side of $\myl''$ which does not contain the open segment $\sgt{\myq}{\myq'}$ are crossing free, and are thus removed from $\myS'$.
The claim follows, since at the end of the loop, all the $\myC\myT$-segments have been removed from $\myS'$.

\begin{figure}[htb]
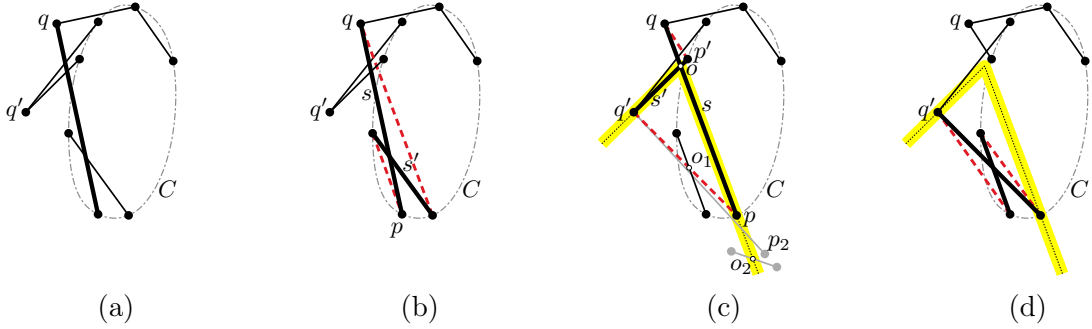

    \centering
    \begin{tabularx}{\textwidth}{CCCC}
        \includegraphics[scale=\graphicsScale,page=9]{2OutsideRMultigraphRevised} &%
        \includegraphics[scale=\graphicsScale,page=10]{2OutsideRMultigraphRevised} &%
        \includegraphics[scale=\graphicsScale,page=11]{2OutsideRMultigraphRevised} &%
        \includegraphics[scale=\graphicsScale,page=12]{2OutsideRMultigraphRevised}%
        \\
        (a) & (b) & (c) & (d)
    \end{tabularx}
 \caption{Illustrations for Routine~$\myC\myT$ in the proof of Theorem~\ref{thm:2OutsideR}. (a)~A $\myC\myT$-segment (in bold) is added back to $\myS$. (b)~The case where $\mys'$ is a $\myC\myC$-segment. (c)~The case where $\mys'$ is a $\myC\myT$-segment. The existence of the two grayed segments is assumed for a contradiction. (d)~Illustration of the subroutine Restore-Loop-Invariant.}% 
 \label{fig:2OutsideR_CT1}
\end{figure}

\paragraph{Routine~$\myC\myT$.}
This routine untangles $\myS$ in the case where the segment $\mys$ that we add back to $\myS$ is a $\myC\myT$-segment $\mys=\sgt{\myp}{\myq}$ (Figure~\ref{fig:2OutsideR_CT1}(a) and Figure~\ref{fig:2OutsideR_CT2}(a)).
This routine consists of the following while loop and calls a subroutine called Restore-Loop-Invariant to maintain the invariant that removing a single $\myC\myT$-segment from $\myS$ would remove all the crossings.

While the segment $\mys$ crosses some segment in $\myS$:
\begin{enumerate}
\item Let $\mys'$ be the segment of $\myS$ that crosses $\mys$ \emph{the farthest away from $\myq$}.
\item\label{line:flip} Flip $\mys$ with $\mys'$ (see Figure~\ref{fig:2OutsideR_CT1}(b,c) and Figure~\ref{fig:2OutsideR_CT2}(b,c,d) for five different cases of this flip).
\item If $\mys'$ is not a $\myC\myC$-segment, then call Restore-Loop-Invariant.
\item Set $\mys$ to be the segment incident to $\myq$ that was inserted by the flip of $\mys$ with $\mys'$.
\end{enumerate}

\paragraph{Restore-Loop-Invariant.} 

Let $\myo$ be the crossing point of $\mys=\sgt{\myq}{\myp}$ and $\mys'=\sgt{\myp'}{\myq'}$.
\begin{enumerate}
    \item\label{case:qq'NotInserted} If the inserted pair is $\sgt{\myq}{\myp'},\sgt{\myp}{\myq'}$ (Figure~\ref{fig:2OutsideR_CT1}(c) and Figure~\ref{fig:2OutsideR_CT2}(b)):
    \begin{enumerate}
        \item\label{case:qq'NotInserted(a)} If $\sgt{\myp}{\myq'}$ has crossings, then untangle the convex sector defined by the rays $\ray{\myo}{\myq'}$ and $\ray{\myo}{\myp}$, i.e., the rays with origin $\myo$ and directed by the vectors $\vect{\myo}{\myq'}$ and $\vect{\myo}{\myp}$ (Figure~\ref{fig:2OutsideR_CT1}(d)). We will show that these rays are crossing free.
        We then apply Theorem~\ref{thm:1InsideOutsideR} in the case where the $\myC\myC$-segments are already crossing free and $\myt=1$ (as the other segments incident to $\myq'$ are already crossing free), using $\OO(\myn)$ flips.
    \end{enumerate}
    \item\label{case:qq'Inserted} If the inserted pair is $\sgt{\myq}{\myq'},\sgt{\myp}{\myp'}$ (we will show that the following cases are mutually exclusive):
    \begin{enumerate}
        \item\label{case:qq'HasCrossings} If $\sgt{\myq}{\myq'}$ has crossings (Figure~\ref{fig:2OutsideR_CT2}(c)), then call Routine~$\myT\myT$ (at most $\myn$ flips).
        \item\label{case:pp'HasCrossings} If $\sgt{\myp}{\myp'}$ has crossings (Figure~\ref{fig:2OutsideR_CT2}(d)), then $\sgt{\myp}{\myp'}$ must cross another $\myC\myC$-segment. While there is a crossing between two $\myC\myC$-segments, we flip them, in any order. This procedure does not create crossings involving $\myC\myT$-segments or $\myT\myT$-segments, hence the number of crossings strictly decreases. Consequently, the number of flips performed is at most the number of crossings of the segment $\sgt{\myp}{\myp'}$ when the routine was called, which is $\OO(\myn)$.
    \end{enumerate}
\end{enumerate}

Next, we show that the rays $\ray{\myo}{\myq'}$ and $\ray{\myo}{\myp}$ in step~\ref{case:qq'NotInserted(a)} are crossing free, which ensures that flips performed in this convex sector do not insert segments that cross the remaining segments. The proof breaks the two rays into two parts each.

Since the segment $\mys'=\sgt{\myp'}{\myq'}$ is crossing free, then the segment $\sgt{\myo}{\myq'}$ is crossing free.
Since $\mys'$ is the segment that crosses $\mys$ the farthest away from $\myq$, then the segment $\sgt{\myo}{\myp}$ is crossing free.

At this point, we know that all the segments of $\myS$ crossing the segment $\sgt{\myp}{\myq'}$ are $\myC\myC$-segments. Let $\myo_1$ be the intersection point between $\sgt{\myp}{\myq'}$ and an arbitrary $\myC\myC$-segment.
The remaining part of the ray $\ray{\myo}{\myq'}$ (that is, the ray $\ray{\myo}{\myq'}$ deprived of the segment $\sgt{\myo}{\myq'}$) is crossing free because, by definition, $\myq'$ is not in the interior of the convex hull of $\myC \cup \{\myq\}$.
    
It only remains to prove that the ray $\ray{\myo}{\myp}$ deprived of the segment $\sgt{\myo}{\myp}$ is also crossing free. For the sake of a contradiction, we assume that this is not the case. If a $\myC\myC$-segment crosses the ray $\ray{\myo}{\myp}$ at a point $\myo_2$, then $\myp$ is in the interior of the triangle $\trgl{\myo_1}{\myo_2}{\myp'}$ (Figure~\ref{fig:2OutsideR_CT1}(c)). If a $\myC\myT$-segment $\sgt{\myp_2}{\myq'}$ crosses the ray $\ray{\myo}{\myp}$, then $\myp$ is in the interior of the triangle $\trgl{\myo_1}{\myp_2}{\myp'}$ (Figure~\ref{fig:2OutsideR_CT1}(c)). Both cases contradicts the fact that $\myp \in \myC$.

\begin{figure}[htb]
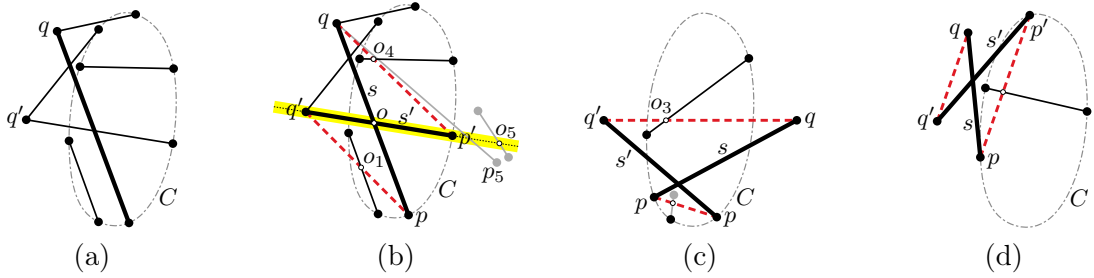

    \centering
    \begin{tabularx}{\textwidth}{CCCC}
        \includegraphics[scale=\graphicsScale,page=13]{2OutsideRMultigraphRevised} &%
        \includegraphics[scale=\graphicsScale,page=14]{2OutsideRMultigraphRevised} &%
        \includegraphics[scale=\graphicsScale,page=15]{2OutsideRMultigraphRevised} &%
        \includegraphics[scale=\graphicsScale,page=16]{2OutsideRMultigraphRevised}%
        \\
        (a) & (b) & (c) & (d)
    \end{tabularx}
 \caption{Illustrations for Routine~$\myC\myT$ in the proof of Theorem~\ref{thm:2OutsideR}. The existence of the grayed segments is assumed for a contradiction. (a) A $\myC\myT$-segment (in bold) is added back to $\myS$ (in an alternative context compared to Figure~\ref{fig:2OutsideR_CT1}). (b) Illustration for the proof of Lemma~\ref{lem:flipRoutine-CT}; in the case where $\mys'$ is a $\myC\myT$-segment and the inserted segment $\sgt{\myp'}{\myq}$ crosses a $\myC\myC$-segment at $\myo_4$. (c) \& (d) The two cases where the inserted segments are $\sgt{\myq}{\myq'}$ and $\sgt{\myp}{\myp'}$, whether the segment $\sgt{\myq}{\myq'}$ intersects the convex hull of $\myC$.}% 
 \label{fig:2OutsideR_CT2}
\end{figure}

\medskip

We now show that the two cases of step~\ref{case:qq'Inserted} are mutually exclusive.

For the segment $\sgt{\myp}{\myp'}$ to have crossings, at least one point of $\myP$ has to be in the interior of the triangle $\trgl{\myp}{\myp'}{\myo}$. 
We examine the following two cases, which are mutually exclusive by definition, and show that the two cases of step~\ref{case:qq'Inserted} respectively imply case~\ref{case:qq'HasCrossings}' and case~\ref{case:pp'HasCrossings}'.

\textbf{Case~\ref{case:qq'HasCrossings}':} In this case, there exists a point $\myo_3$ of the $\myT\myT$-segment $\sgt{\myq}{\myq'}$ that is in the interior of the convex hull of $\myC$ (Figure~\ref{fig:2OutsideR_CT2}(c)). By convexity, the interior of the triangle $\trgl{\myp}{\myp'}{\myo_3}$ does not contain any point of $\myP$.
The triangle $\trgl{\myp}{\myp'}{\myo_3}$ contains the triangle $\trgl{\myp}{\myp'}{\myo}$, and the segments $\sgt{\myo}{\myp}$ and $\sgt{\myo}{\myp'}$ are crossing free. 
Therefore, the $\myC\myC$-segment $\sgt{\myp}{\myp'}$ is crossing free. 

\textbf{Case~\ref{case:pp'HasCrossings}':} In this case, the $\myT\myT$-segment $\sgt{\myq}{\myq'}$ does not intersects the interior of the convex hull of $\myC$ (Figure~\ref{fig:2OutsideR_CT2}(d)). Consequently, the $\myT\myT$-segment $\sgt{\myq}{\myq'}$ is crossing free. In this case, all the crossings of $\myS$ involve the $\myC\myC$-segment $\sgt{\myp}{\myp'}$ and other $\myC\myC$-segments.

\paragraph{Number of flips.} We now analyze the number of flips performed by Routine~$\myC\myT$, which includes the flips in the subroutine Restore-Loop-Invariant.

\begin{lemma}\label{lem:flipRoutine-CT}
    Each call to Routine~$\myC\myT$ performs $ \OO(\myn \degree(\myq') $ flips.
\end{lemma}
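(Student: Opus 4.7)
The plan is to bound separately the flips performed in step~\ref{line:flip} of the while loop and the flips performed inside calls to Restore-Loop-Invariant, obtaining $\OO(\myn)$ main flips and $\OO(\degree(\myq'))$ calls of cost $\OO(\myn)$ each, for a total of $\OO(\myn\,\degree(\myq'))$.

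First I would show that the while loop performs at most $\myn-1$ main flips. Since $\mys'$ is chosen as the segment crossing $\mys$ at the point farthest from $\myq$, the newly inserted segment incident to $\myq$ (the new value of $\mys$) lies entirely inside the triangle cut off at $\myq$ by $\mys'$. Consequently it can cross only segments that already crossed the previous $\mys$, and it definitely no longer crosses $\mys'$ itself, so the number of crossings of $\mys$ strictly decreases at each iteration. Since $\mys$ initially has at most $\myn-1$ crossings, this caps the number of iterations.

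Next I would verify by direct inspection that each Restore call costs $\OO(\myn)$ flips. Case~\ref{case:qq'NotInserted(a)} untangles a convex sector with a single non-convex endpoint, which by Theorem~\ref{thm:1InsideOutsideR} with $\myt=1$ costs $\OO(\myn)$ flips, using the fact, already established in the excerpt, that the rays $\ray{\myo}{\myq'}$ and $\ray{\myo}{\myp}$ are crossing free. Case~\ref{case:qq'HasCrossings} invokes Routine~$\myT\myT$, whose analysis gives $\OO(\myn)$. Case~\ref{case:pp'HasCrossings} only resolves crossings between pairs of $\myC\myC$-segments, never inserting new $\myC\myT$- or $\myT\myT$-segments, so the crossing count decreases monotonically and at most $\OO(\myn)$ flips occur.

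Finally, I would bound the number of Restore calls by $\OO(\degree(\myq'))$. Restore is invoked exactly when $\mys'$ is not a $\myC\myC$-segment; since $\mys$ and $\mys'$ share no endpoint and $\mys$ is incident to $\myq$, this forces $\mys'$ to be incident to $\myq'$. Subcases~\ref{case:qq'HasCrossings} and~\ref{case:pp'HasCrossings} leave $\myS$ crossing free and thus terminate the main loop, contributing at most one call each. The remaining Restore calls are all of case~\ref{case:qq'NotInserted(a)}; the key structural claim is that after such a call the inserted $\myC\myT$-segment $\sgt{\myp}{\myq'}$ lies inside the crossing-free convex sector and becomes uncrossable in the sense of Lemma~\ref{lem:splitting}, so it cannot play the role of $\mys'$ in any future iteration. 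Hence case~\ref{case:qq'NotInserted(a)} can fire at most once per $\myC\myT$-segment incident to $\myq'$, giving $\OO(\degree(\myq'))$ calls in total. The main obstacle I expect is precisely this last structural claim: one must carefully track which $\myC\myT$-segments incident to $\myq'$ remain susceptible to being selected as $\mys'$ across the interleaving of main flips and Restore calls, and verify that inserted segments incident to $\myq'$ inherit the uncrossability guaranteed by the splitting lemma.
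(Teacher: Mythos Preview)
Your plan has a genuine gap in the very first step. The geometric claim that the new $\mys$ ``can cross only segments that already crossed the previous $\mys$'' is false, so the number of crossings of $\mys$ need not strictly decrease. For a concrete counterexample take $\myq=(0,0)$ outside the convex hull of $\myC=\{p,a,b,e,f\}$ with $p=(10,0)$, $a=(8,5)$, $b=(8,-5)$, $e=(5,10)$, $f=(5,2)$, and the crossing-free set $\{\sgt{a}{b},\sgt{e}{f}\}$. Adding $\mys=\sgt{\myq}{p}$ creates a single crossing, with $\sgt{a}{b}$, so $\mys'=\sgt{a}{b}$. After the flip the adversary (who controls insertion in this removal-only setting) may insert $\sgt{\myq}{a}$ and $\sgt{p}{b}$; the new $\mys=\sgt{\myq}{a}$ now crosses $\sgt{e}{f}$, a segment the old $\mys$ did \emph{not} cross. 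There is simply no ``triangle cut off at $\myq$ by $\mys'$'' that traps the new $\mys$ when you do not control which endpoint of $\mys'$ is joined to $\myq$.

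The paper's proof does \emph{not} bound the total number of line-\ref{line:flip} flips by $\myn-1$. Instead it shows two separate facts: (i) any maximal run of \emph{consecutive} flips with $\mys'$ a $\myC\myC$-segment has length at most $\myn-1$, by the uncrossable-segment argument of Lemma~\ref{lem:farthestFirst} (each such flip makes one $\myC\myC$-segment uncrossable, which is a different mechanism from your crossing-count claim); and (ii) the \emph{total} number of flips with $\mys'$ a $\myC\myT$-segment is at most $\degree(\myq')$, via a splitting argument: after each maximal run of such flips the line $\lineT{\myq'}{\myp'}$ is crossing free, and the partition on the side of $\myp$ permanently absorbs all the segments incident to $\myq'$ consumed in that run. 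Combining (i) and (ii) already gives $\OO(\myn\,\degree(\myq'))$ flips at line~\ref{line:flip}, to which the $\OO(\myn)$ cost of each Restore call adds the same order. Your bound on Restore calls is morally (ii), but your justification (``$\sgt{\myp}{\myq'}$ becomes uncrossable'') is also loose, since that particular segment is typically destroyed during Restore; the crossing-free line $\lineT{\myq'}{\myp'}$ is what makes this step rigorous.
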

\begin{proof}
    We first limit our attention to the sequence $\myQ$ of flips between $\mys$ and $\mys'$ at line~\ref{line:flip} of Routine~$\myC\myT$. There are two types of flips performed, either $\mys'$ is a $\myC\myC$-segment or a $\myC\myT$-segment.
    In the following, we show that:
    \begin{enumerate}
        \item\label{lem:consecutive} The number of consecutive flips in $\myQ$ where $\mys'$ is a $\myC\myC$-segment is at most $\myn-1$.
        \item\label{lem:total} The total number of flips in $\myQ$ where $\mys'$ is a $\myC\myT$-segment is at most $\degree(\myq')$.
    \end{enumerate}
    Lemma~\ref{lem:flipRoutine-CT} follows from~\ref{lem:consecutive} and~\ref{lem:total}, and from the fact that each call to Restore-Loop-Invariant performs $\OO(\myn)$ flips.
    Indeed, the flips performed by Routine~$\myC\myT$ between two flips where $\mys'$ is a $\myC\myT$-segment are either consecutive flips where $\mys'$ is a $\myC\myC$-segment or flips performed by a call to Restore-Loop-Invariant.
    
    To prove~\ref{lem:consecutive}, notice that the proof of Lemma~\ref{lem:farthestFirst} applied to the $\myC\myC$-segments of $\myS$ and $\mys$ ensures that the number of consecutive flips where $\mys'$ is a $\myC\myC$-segment is at most $\myn-1$.

    We now prove~\ref{lem:total}.
    We apply Lemma~\ref{lem:splitting} at the end of a maximal subsequence $\myQ'$ of $\myQ$ consisting of flips where $\mys'$ is a $\myC\myT$-segment. Let $\myk$ be the length of $\myQ'$.
    We first observe that $\myk$ it at most the number of $\myC\myT$-segments crossing $\mys$ just before the first flip of this sequence.
    This observation relies on the correctness of Restore-Loop-Invariant and on the following simple geometric fact.
    Given a triangle $\trgl{\myq}{\myq'}{\myp'}$ and a point $\myo$ on the segment $\sgt{\myp'}{\myq'}$, any segment incident to $\myq'$ and crossing the segment $\sgt{\myp'}{\myq}$ also crosses the segment $\sgt{\myo}{\myq}$.

    We now consider the configuration just after the last flip of $\myQ'$. If $\sgt{\myp'}{\myq}$ is crossing free, then $\myS$ is in fact crossing free and Routine~$\myC\myT$ ends. Otherwise, we have a splitting partition for Lemma~\ref{lem:splitting}, consisting of at most three sets of segments, one for each side of the line $\lineT{\myq'}{\myp'}$, and one for the segments on the line, if any. Next, we show that the line $\lineT{\myq'}{\myp'}$ is crossing free (see Figure~\ref{fig:2OutsideR_CT2}(b) for an illustration of this splitting partition and of the notations introduced in the proof).

    Let $\myo_4$ be the crossing point of $\sgt{\myp'}{\myq}$ and the $\myC\myC$-segment crossing $\sgt{\myp'}{\myq}$ the farthest away from $\myq$.
    For the sake of a contradiction, we assume that the line $\lineT{\myq'}{\myp'}$ is not crossing free. 
    If the line $\lineT{\myq'}{\myp'}$ crosses a $\myC\myC$-segment at a point $\myo_5$, then $\myp'$ is in the interior of the triangle $\trgl{\myo_4}{\myo_5}{\myp}$ (Figure~\ref{fig:2OutsideR_CT2}(b)). 
    If the line $\lineT{\myq'}{\myp'}$ crosses a $\myC\myT$-segment $\sgt{\myq}{\myp_5}$, then $\myp'$ is in the interior of the triangle $\trgl{\myo_4}{\myp_5}{\myp}$ (Figure~\ref{fig:2OutsideR_CT2}(b)). 
    Both cases contradicts the fact that $\myp' \in \myC$. 

    The partition containing the point $\myp$ is crossing free and we claim that it contains at least $\myk$ segments incident to $\myq'$ and it .
    Thus,~\ref{lem:total} follows.
    To prove the claim, 
    it is enough to observe that, at each flip of $\myQ$ where $\mys'$ is a $\myC\myT$-segment, $\mys$ always rotates in the same direction.
    More formally, the sign of the measure between $-\pi$ and $\pi$ of the oriented angle from the vector $\vect{\myp}{\myq}$ to the vector $\vect{\myp'}{\myq}$ is always the same. 
\end{proof}

\subsection{Upper Bound for Two Points Inside or Outside a Convex}
\label{sec:2InsideOutsideR}

The most important results of Theorems~\ref{thm:1Inside1OutsideR}, \ref{thm:2InsideR}, and~\ref{thm:2OutsideR} are summarized in the following theorem.

\begin{theorem}\label{thm:2InsideOutsideR}
Consider a multiset $\myS$ of $\myn$ segments with endpoints $\myP$, such that $\myP$ satisfies the property $\Property$ that $\myP$ is partitioned into $\myP = \myC \cup \myT$ where $\myC$ is in convex position, and $\myT = \{\myq,\myq'\}$.
Let $\myt$ be the sum of the degrees of the points in $\myT$, i.e., $\myt = \degree(\myq) + \degree(\myq')$.
Let $\dR_{\Convex,\Gamma}(\myn)$ be the number of flips to untangle any multiset of at most $\myn$ segments with endpoints in convex position, a graph property $\Gamma$, and removal choice.
There exists a removal strategy $\myR$ such that any untangle sequence of $\myR$ for the graph property $\Gamma$ has length
\[\dR_{\Property,\Gamma}(\myn,\myt) = \OO(\myt^2\myn + \dR_{\Convex,\Gamma}(\myn)).\]

In particular, we have the following upper bounds for different graph properties:
\[\dR_{\Property}(\myn,\myt) = \OO(\myt^2\myn + \myn \log \myn) \text{ and }\]
\[\dR_{\Property,\Cycle}(\myn) , \dR_{\Property,\RedBlue}(\myn) = \OO(\myn) .\]
\end{theorem}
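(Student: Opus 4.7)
The plan is to observe that Theorem~\ref{thm:2InsideOutsideR} is a direct unification of Theorems~\ref{thm:1Inside1OutsideR}, \ref{thm:2InsideR}, and~\ref{thm:2OutsideR}, and all that remains is to do a three-way case analysis on the relative position of $\myq, \myq'$ with respect to the convex hull of $\myC$. I would begin by observing that, since $\myC$ is in convex position and $\myT = \{\myq, \myq'\}$, exactly one of the following holds: (i) both $\myq, \myq'$ are inside the convex hull of $\myC$, (ii) one of $\myq, \myq'$ is inside and the other outside, or (iii) both are outside.

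In case (i), I would directly invoke Theorem~\ref{thm:2InsideR}, obtaining a removal strategy whose longest untangle sequence has length $\OO(\myt\myn + \dR_{\Convex,\Gamma}(\myn))$, which is dominated by $\OO(\myt^2\myn + \dR_{\Convex,\Gamma}(\myn))$. In case (ii), I would invoke Theorem~\ref{thm:1Inside1OutsideR}, which gives a removal strategy of length $\OO(\myt^2\myn + \dR_{\Convex,\Gamma}(\myn))$ directly. In case (iii), I would invoke Theorem~\ref{thm:2OutsideR}, which also yields $\OO(\myt^2\myn + \dR_{\Convex,\Gamma}(\myn))$. Taking the worst of the three cases, we obtain the claimed bound.

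The specializations to particular graph properties then follow by substituting the best known bound for $\dR_{\Convex,\Gamma}(\myn)$: for the general multigraph property we use Theorem~\ref{thm:RUpperConvex} giving $\dR_{\Convex}(\myn) = \OO(\myn \log \myn)$, whereas for $\Cycle$ and $\RedBlue$ we use the tight $\Theta(\myn)$ convex bounds of~\cite{BMS19,OdW07,WCL09} together with the observation that $\myt = \OO(1)$ in those bounded-degree settings, so the $\myt^2\myn$ term reduces to $\OO(\myn)$.

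There is no real obstacle here, as this theorem is merely a consolidating statement; the substantive work is entirely contained in the three feeder theorems. The only minor care needed is to verify that the case split is exhaustive (which it is, given the assumption $\card{\myT}=2$) and that the asymptotic dominance $\OO(\myt\myn) \leq \OO(\myt^2\myn)$ used in case (i) is valid under our convention that factors in the $\OO$ notation are at least $1$.
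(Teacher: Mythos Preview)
Your proposal is correct and matches the paper's approach exactly: the paper presents Theorem~\ref{thm:2InsideOutsideR} explicitly as a summary of Theorems~\ref{thm:1Inside1OutsideR}, \ref{thm:2InsideR}, and~\ref{thm:2OutsideR} and gives no separate proof, so the three-way case split you describe is precisely the intended argument. Your handling of the specializations (substituting the appropriate convex bound and noting $\myt=\OO(1)$ for $\Cycle$ and $\RedBlue$) is also correct.
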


\section{Untangling with Insertion Choice}
\label{cha:I}
In this section, we devise strategies for insertion choice to untangle multisets of segments, therefore providing upper bounds on several versions of $\dI$. Recall that such insertion strategies \emph{do not} choose which pair of crossing segments is removed, but only which pair of segments with the same endpoints is subsequently inserted. We start with a tight bound in $\Convex$ version (where the point set is in convex position), followed by the version where the point set $\myP = \myC \cup \myT$ has some points $\myC$ in convex position and the other points $\myT$ outside the convex hull of $\myC$ and separated from $\myC$ by two parallel lines.

\subsection{Upper Bound for Convex Position}
\label{sec:RUpperConvex}

Let $\myP = \myC= \{\myp_1,\ldots,\myp_{\card{\myC}}\}$ be a set of points in convex position sorted in counterclockwise order along the convex hull boundary (Figure~\ref{fig:convexI}(a)). Given a segment $\myp_\mya\myp_\myb$, we define the \emph{depth} $\PotDepth(\myp_\mya\myp_\myb) = \abs{\myb-\mya}$.\footnote{This definition resembles but is not exactly the same as the depth used in~\cite{BMS19}. It is also similar to the crossing depth defined in Section~\ref{cha:R}.} We use the depth to prove the following theorem.

\begin{figure}[htb]
 \hspace*{\stretch{1}}%
 \pbox[b]{\textwidth}{\centering\includegraphics[scale=\graphicsScale,page=1]{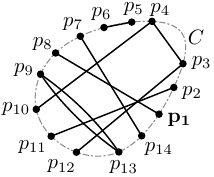}\newline(a)}\hspace*{\stretch{2}}%
 \pbox[b]{\textwidth}{\centering\includegraphics[scale=\graphicsScale,page=2]{convexI}\newline(b)}\hspace*{\stretch{2}}%
 \pbox[b]{\textwidth}{\centering\includegraphics[scale=\graphicsScale,page=3]{convexI}\newline(c)}\hspace*{\stretch{1}}%
 \caption{(a) A multigraph $(\myC,\myS)$ with $\card{\myC}=14$ points in convex position and $\myn=9$ segments. (b) Insertion choice for Case~1 and~2 of the proof of Theorem~\ref{thm:convexI}. (c) Insertion choice for Case~3.}
 \label{fig:convexI}
\end{figure}

\begin{theorem}\label{thm:convexI}
Consider a multiset $\myS$ of $\myn$ segments with endpoints $\myP=\myC$ in convex position.
There exists an insertion strategy $I$ such that any untangle sequence of $I$ has length
\[\dI_{\Convex}(\myn) = \OO(\myn \log \card{\myP}) = \OO(\myn \log \myn).\]
\end{theorem}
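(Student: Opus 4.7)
The plan is to analyze the insertion strategy via a simple logarithmic potential on segment depths. For each segment $\mys = \myp_\mya \myp_\myb$ with $\mya < \myb$, the definition gives $\PotDepth(\mys) = \myb - \mya \in \{1, \ldots, \card{\myC}-1\}$, so $\log_2 \PotDepth(\mys) \in [0, \log_2 \card{\myC})$. I would work with
\[
\Phi(\myS) \; = \; \sum_{\mys \in \myS} \log_2 \PotDepth(\mys),
\]
which satisfies $0 \leq \Phi(\myS) < \myn \log_2 \card{\myC}$, and the plan is to design an insertion strategy $I$ so that $\Phi$ drops by at least $1$ at every flip.

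Fix any crossing between two segments $\sgt{\myp_{i_1}}{\myp_{i_3}}$ and $\sgt{\myp_{i_2}}{\myp_{i_4}}$ with indices in cyclic order $i_1 < i_2 < i_3 < i_4$, and write $\mya = i_2 - i_1$, $\mym = i_3 - i_2$, $\myb = i_4 - i_3$. The two removed depths are $\mya + \mym$ and $\myb + \mym$. The two legal insertion options, both crossing-free thanks to convex position, are:
\begin{itemize}
    \item Option 1: insert $\sgt{\myp_{i_1}}{\myp_{i_2}}$ and $\sgt{\myp_{i_3}}{\myp_{i_4}}$, whose depths multiply to $\mya\myb$;
    \item Option 2: insert $\sgt{\myp_{i_1}}{\myp_{i_4}}$ and $\sgt{\myp_{i_2}}{\myp_{i_3}}$, whose depths multiply to $\mym(\mya + \myb + \mym)$.
\end{itemize}
The key algebraic identity is $\mya\myb + \mym(\mya + \myb + \mym) = (\mya + \mym)(\myb + \mym)$; that is, the two candidate products of new depths sum exactly to the original product of depths. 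In particular $\min(\mya\myb,\ \mym(\mya + \myb + \mym)) \leq \tfrac{1}{2}(\mya + \mym)(\myb + \mym)$.

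The insertion strategy $I$ will therefore always choose the option whose product of depths is smaller. With this rule, the joint contribution of the two flipped segments to $\Phi$ drops by at least $\log_2 2 = 1$, while the contributions from unaffected segments are unchanged, so $\Phi$ strictly decreases by at least $1$ per flip. This immediately yields the claimed $\OO(\myn \log \card{\myC}) = \OO(\myn \log \myn)$ bound on the length of any untangle sequence of $I$. The items to verify are the product identity (a routine expansion) and the fact that both insertion options are always legal in convex position (immediate from the cyclic order of $i_1, i_2, i_3, i_4$). The three cases hinted at in Figure~\ref{fig:convexI}(b,c) presumably enumerate which option wins according to the sign of $\mya\myb - \mym(\mya+\myb+\mym)$ and possibly handle small configurations such as $\mym = 1$ or $\mya = \myb = 1$ separately, but the unified potential argument above covers all of them.
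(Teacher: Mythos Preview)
Your proof is correct and uses the same potential as the paper (the product of segment depths, equivalently its logarithm). The route to the constant-factor drop, however, is genuinely different and cleaner. The paper argues by a three-case analysis: with indices $a<c<b<d$ it compares $\PotDepth(\myp_a\myp_c)$ to $\PotDepth(\myp_c\myp_b)$ and $\PotDepth(\myp_b\myp_d)$ to $\PotDepth(\myp_c\myp_b)$, and in each case shows that one of the two insertion options multiplies the product by at most $3/4$ (the worst case being Case~3, analyzed via the function $g(x,y)=\frac{1+x+y}{(1+x)(1+y)}$ on $x,y\geq 1$). Your argument bypasses all of this with the single identity $\mya\myb + \mym(\mya+\myb+\mym) = (\mya+\mym)(\myb+\mym)$, which says the two candidate products of new depths sum exactly to the old product; hence the smaller option is at most half the old product, giving a factor $1/2$ per flip. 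This is both shorter and yields a better constant. Your guess about what the paper's three cases are is not quite accurate (they split on comparisons of $\mya,\myb$ against $\mym$ rather than on the sign of $\mya\myb - \mym(\mya+\myb+\mym)$), but that is immaterial to the correctness of your argument.
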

\begin{proof}
Let the potential function
\[\PotProduct(\myS) = \prod_{\mys \in \myS} \PotDepth(\mys).\]
As $\PotDepth(\mys) \in \{1,\ldots,\card{\myC}-1\}$, we have that $\PotProduct(\myS)$ is integer, positive, and at most $\card{\myC}^\myn$. Next, we show that for any flipped pair of segments $\myp_\mya\myp_\myb,\myp_\myc\myp_\myd$ there exists an insertion choice that multiplies $\PotProduct(\myS)$ by a factor of at most $3/4$, and the theorem follows.

Consider a flip of a segment $\myp_\mya\myp_\myb$ with a segment $\myp_\myc\myp_\myd$ and assume without loss of generality that $\mya < \myc < \myb < \myd$.
The contribution of the pair of segments $\myp_\mya\myp_\myb,\myp_\myc\myp_\myd$ to the potential $\PotProduct(\myS)$ is the factor $\intFactor=\PotDepth(\myp_\mya\myp_\myb)\PotDepth(\myp_\myc\myp_\myd)$.
Let $\intFactor'$ be the factor corresponding to the pair of inserted segments.

\textbf{Case~1:} If $\PotDepth(\myp_\mya\myp_\myc) \leq \PotDepth(\myp_\myc\myp_\myb)$, then we insert the segments $\myp_\mya\myp_\myc$ and $\myp_\myb\myp_\myd$ and we get $\intFactor'=\PotDepth(\myp_\mya\myp_\myc)\PotDepth(\myp_\myb\myp_\myd)$ (Figure~\ref{fig:convexI}(b)).
We notice $\PotDepth(\myp_\mya\myp_\myb)=\PotDepth(\myp_\mya\myp_\myc)+\PotDepth(\myp_\myc\myp_\myb)$. It follows $\PotDepth(\myp_\mya\myp_\myc) \leq \PotDepth(\myp_\mya\myp_\myb)/2$ and we have $\PotDepth(\myp_\myb\myp_\myd) \leq \PotDepth(\myp_\myc\myp_\myd)$ and then $\intFactor'\leq \intFactor/2$. 

\textbf{Case~2:} If $\PotDepth(\myp_\myb\myp_\myd) \leq \PotDepth(\myp_\myc\myp_\myb)$, then we insert the same segments $\myp_\mya\myp_\myc$ and $\myp_\myb\myp_\myd$ as previously. We have $\PotDepth(\myp_\mya\myp_\myc) \leq \PotDepth(\myp_\mya\myp_\myb)$ and $\PotDepth(\myp_\myb\myp_\myd)\leq \PotDepth(\myp_\myc\myp_\myd)/2$, which gives $\intFactor'\leq \intFactor/2$.

\textbf{Case~3:} If (i) $\PotDepth(\myp_\mya\myp_\myc) > \PotDepth(\myp_\myc\myp_\myb)$ and (ii) $\PotDepth(\myp_\myb\myp_\myd) > \PotDepth(\myp_\myc\myp_\myb)$, then we insert the segments $\myp_\mya\myp_\myd$ and $\myp_\myc\myp_\myb$ (Figure~\ref{fig:convexI}(c)). 
The contribution of the new pair of segments is $\intFactor'=\PotDepth(\myp_\mya\myp_\myd)\PotDepth(\myp_\myc\myp_\myb)$.
We introduce the coefficients $\myx=\frac{\PotDepth(\myp_\mya\myp_\myc)}{\PotDepth(\myp_\myc\myp_\myb)}$ and $\myy=\frac{\PotDepth(\myp_\myb\myp_\myd)}{\PotDepth(\myp_\myc\myp_\myb)}$ so that $\PotDepth(\myp_\mya\myp_\myc) = \myx\PotDepth(\myp_\myc\myp_\myb)$ and $\PotDepth(\myp_\myb\myp_\myd) = \myy\PotDepth(\myp_\myc\myp_\myb)$. It follows that $\PotDepth(\myp_\mya\myp_\myb) = (1+\myx)\PotDepth(\myp_\myc\myp_\myb)$, $\PotDepth(\myp_\myc\myp_\myd)=(1+\myy)\PotDepth(\myp_\myc\myp_\myb)$ and $\PotDepth(\myp_\mya\myp_\myd) = (1+\myx+\myy)\PotDepth(\myp_\myc\myp_\myb)$. The ratio $\intFactor'/\intFactor$ is equal to a function $\myg(\myx,\myy) = \frac{1+\myx+\myy}{(1+\myx)(1+\myy)}$. Due to (i) and (ii), we have that $\myx\geq 1$ and $\myy \geq 1$. 
In other words, we can upper bound the ratio $\intFactor'/\intFactor$ by the maximum of the function $\myg(\myx,\myy)$ with $\myx,\myy \geq 1$. It is easy to show that the function $\myg(\myx,\myy)$ is decreasing with both $\myx$ and $\myy$. Then its maximum 
is obtained for $\myx=\myy=1$ and it is equal to $3/4$, showing that $\intFactor'\leq 3\intFactor/4$.
\end{proof}

%%%%%%%%%%%%%%%%%%%%%%%%%%%%%%%%%%%%%%%%%%%%%%%%%%%%%%%%%%%%%%%%%%%%%%%%%%%%%%%%%%%%%
\subsection{Upper Bound for Points Separated by Two Parallel Lines}
\label{IUpper2Parallels}

In this section, we prove the following theorem, which is a generalization of Theorem~\ref{thm:convexI}.
We extend our standard general position assumptions to also exclude pairs of endpoints with the same $\myy$-coordinate.

\begin{figure}[htb]
 \hspace*{\stretch{1}}%
 \pbox[b]{\textwidth}{\centering\includegraphics[scale=\graphicsScale,page=1]{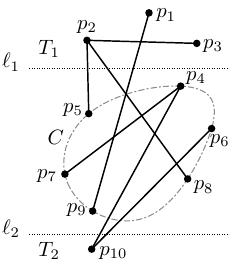}\newline(a)}\hspace*{\stretch{2}}%
 \pbox[b]{\textwidth}{\centering\includegraphics[scale=\graphicsScale,page=2]{separatedI}\newline(b)}\hspace*{\stretch{1}}%
 \caption{(a) Statement of Theorem~\ref{thm:separatedI}. (b) Some insertion choices in the proof of Theorem~\ref{thm:separatedI}.}
 \label{fig:separatedI}
\end{figure}

\begin{theorem}\label{thm:separatedI}
Consider a multiset $\myS$ of $\myn$ segments with endpoints $\myP$, such that $\myP$ satisfies the property $\Property$ that $\myP$ is partitioned into $\myP = \myC \cup \myT_1 \cup \myT_2$ where $\myC$ is in convex position and there exist two horizontal lines $\myl_1,\myl_2$, with $\myT_1$ above $\myl_1$ above $\myC$ above $\myl_2$ above $\myT_2$.
Let $\myt$ be the sum of the degrees of the points in $\myT = \myT_1 \cup \myT_2$.
There exists an insertion strategy $I$ such that any untangle sequence of $I$ has length
\[\dI_{\Property}(\myn,\myt) = \OO(\myt \card{\myP} \log \card{\myC} + \myn \log \card{\myC}) = \OO(\myt\myn \log \myn).\]
\end{theorem}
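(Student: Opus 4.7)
The plan is to extend the multiplicative depth potential $\PotProduct(\myS) = \prod_{\mys \in \myS} \PotDepth(\mys)$ used in the proof of Theorem~\ref{thm:convexI}. The first step is to extend the depth function $\PotDepth$ from $\myC\myC$-segments to all segments with endpoints in $\myP$. Using the fact that $\myT_1$ lies strictly above $\myl_1$, $\myC$ strictly between $\myl_1$ and $\myl_2$, and $\myT_2$ strictly below $\myl_2$, I would insert the points of $\myT_1$ into the ``top arc'' of the cyclic order on $\myC$ (sorted by $\myx$-coordinate, say) and those of $\myT_2$ into the ``bottom arc,'' obtaining a cyclic order on $\myP$ that refines the convex order on $\myC$. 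For a segment $\sgt{\myp}{\myp'}$ define $\PotDepth(\sgt{\myp}{\myp'})$ as the number of points in that extended cyclic order strictly between $\myp$ and $\myp'$ on the shorter arc. This satisfies $\PotDepth(\mys) \leq \card{\myC} - 1$ whenever $\mys$ is a $\myC\myC$-segment and $\PotDepth(\mys) \leq \card{\myP} - 1$ in general.

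Next, I would analyze a flip removing $\sgt{\myp_1}{\myp_3},\sgt{\myp_2}{\myp_4}$ case by case according to the types of the four endpoints. The ``easy'' case is when all four are in $\myC$: Theorem~\ref{thm:convexI}'s analysis applies verbatim and yields an insertion choice multiplying $\PotProduct$ by at most $3/4$. The other cases split into flips with at least one $\myT$-endpoint; here the key geometric input is that the separating lines $\myl_1,\myl_2$ force $\myT_1$-incident and $\myT_2$-incident segments to cross them, which in turn restricts the combinatorial patterns of crossing four-tuples. Because the extended cyclic order still satisfies the identity $\PotDepth(\myp_\mya\myp_\myc) + \PotDepth(\myp_\myc\myp_\myb) = \PotDepth(\myp_\mya\myp_\myb)$ whenever $\myp_\myc$ lies in the arc from $\myp_\mya$ to $\myp_\myb$, the same Case~1/Case~2/Case~3 dichotomy from Theorem~\ref{thm:convexI} goes through, and the insertion choice again guarantees a factor of at most $3/4$, provided all four endpoints lie on a \emph{common arc} in the extended order.

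The third step is to bound $\log \PotProduct$. The initial potential satisfies
\[
\log \PotProduct(\myS) \;\leq\; \sum_{\mys \in \myS \cap \binom{\myC}{2}} \log \card{\myC} \;+\; \sum_{\mys \in \myS,\ \mys\ \text{is $\myT$-incident}} \log \card{\myP} \;=\; \OO\!\bigl(\myn \log \card{\myC} + \myt \log \card{\myP}\bigr),
\]
since at most $\myt$ segments are $\myT$-incident at any time. However, the stated bound is larger, which I interpret as follows: whenever a flip has some of its four endpoints on opposite sides of one of the lines $\myl_1,\myl_2$, the clean Case~3 argument can fail and $\PotProduct$ may increase slightly, by at most a multiplicative factor polynomial in $\card{\myP}$. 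I would charge each such ``bad'' flip to one of the at most $\myt$ $\myT$-endpoints: a careful rotational sweep shows that for a fixed $\myT$-endpoint $\myq$, the partner across $\myl_1$ or $\myl_2$ can change only $\OO(\card{\myP})$ times between two insertion choices that recover a factor-$3/4$ decrease. Summing over the $\myt$ endpoints gives an extra additive contribution of $\OO(\myt \card{\myP} \log \card{\myC})$ to the total number of flips, matching the claimed bound.

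The main obstacle will be step two for the mixed cases: the extended cyclic order is only faithful to crossings when the four flipped endpoints lie on a common arc, and for flips that ``straddle'' the line $\myl_1$ or $\myl_2$ the depth can fail to be subadditive. Establishing that the bad flips are rare enough to be absorbed by the $\myt \card{\myP} \log \card{\myC}$ term—rather than by a stronger per-flip decrease—requires a separate combinatorial amortization keyed on each $\myT$-endpoint, and is where the bulk of the technical work lies.
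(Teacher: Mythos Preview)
Your approach diverges from the paper's, and the divergence is exactly where your argument becomes vague. The paper does \emph{not} try to extend the cyclic depth $\PotDepth$ to $\myT$-incident segments. Instead it uses two potentials in tandem. For flips with at least one endpoint in $\myT$, it sorts \emph{all} of $\myP$ by $y$-coordinate and uses the additive potential $\PotParaLine_\myT(\myS)=\sum_{\mys\ \myT\text{-incident}} |i-j|$ (vertical-rank difference), with the insertion choice ``pair the two top points and the two bottom points.'' This potential is an integer in $[0,\myt|\myP|)$ and strictly decreases at every $\myT$-flip, so there are at most $\OO(\myt|\myP|)$ such flips, full stop. For the remaining pure-$\myC\myC$ flips it uses $\PotProduct$ restricted to $\myC\myC$-segments, exactly as in Theorem~\ref{thm:convexI}; each $\myT$-flip can inflate $\PotProduct$ by at most a factor $\OO(|\myC|^2)$, and since there are $\OO(\myt|\myP|)$ of them the total inflation is $|\myC|^{\OO(\myt|\myP|)}$, which yields the extra $\myt|\myP|\log|\myC|$ term.

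Your plan tries to make a single multiplicative potential do both jobs via an ``extended cyclic order.'' The difficulty you yourself flag is real and not resolved: once $\myT$-points are spliced into the cyclic order, the correspondence ``crossing $\Leftrightarrow$ interleaved in the cycle'' fails (a segment from $\myq\in\myT_1$ can cross a $\myC\myC$-segment without the four endpoints being interleaved in your order), so the Case~1/2/3 trichotomy no longer governs all flips. Your proposed fix---``a careful rotational sweep shows that for a fixed $\myq\in\myT$ the partner can change only $\OO(|\myP|)$ times''---is precisely the statement that there are $\OO(\myt|\myP|)$ bad flips, but you have no potential that witnesses it; this is the missing idea. The vertical-rank sum $\PotParaLine_\myT$ is that witness, and once you have it there is no reason to extend $\PotDepth$ at all: keep $\PotProduct$ on $\myC\myC$-segments only and let $\PotParaLine_\myT$ handle everything else.
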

\begin{proof}
We start by describing the insertion choice for flips involving at least one point in $\myT$.
Let $\myp_1,\ldots,\myp_{\card{\myP}}$ be the points $\myP$ sorted vertically from top to bottom.
Consider a flip involving the points $\myp_\mya,\myp_\myb,\myp_\myc,\myp_\myd$ with $\mya<\myb<\myc<\myd$. The insertion choice is to create the segments $\myp_\mya\myp_\myb$ and $\myp_\myc\myp_\myd$. See Figure~\ref{fig:separatedI}(b). As in~\cite{BoM16} (specifically, in the proof of the upper bound of Theorem~2), we define the potential $\PotParaLine$ of a segment $\myp_\myi\myp_\myj$ as
\[\PotParaLine(\myp_\myi\myp_\myj) = \abs{\myi-\myj}.\]
Notice that $\PotParaLine$ is an integer between $1$ and $\card{\myP}-1$. We define $\PotParaLine_\myT(\myS)$ as the sum of $\PotParaLine(\myp_\myi\myp_\myj)$ for $\myp_\myi\myp_\myj \in \myS$ with $\myp_\myi$ or $\myp_\myj$ in $\myT$. Notice that $0 < \PotParaLine_\myT(\myS) < \myt \card{\myP}$. It is easy to verify that any flip involving a point in $\myT$ decreases $\PotParaLine_\myT(\myS)$ and other flips do not change $\PotParaLine_\myT(\myS)$. Hence, the number of flips involving at least one point in $\myT$ is $\OO(\myt\card{\myP})$.

For the flips involving only points of $\myC$, we use the same choice as in the proof of Theorem~\ref{thm:convexI}.
The potential function 
 \[\PotProduct(\myS) = \prod_{\myp_\myi\myp_\myj \in \myS \;:\; \myp_\myi\in \myC \text{ and } \myp_\myj \in \myC} \PotDepth(\myp_\myi\myp_\myj)\]
is at most $\card{\myC}^\myn$ and decreases by a factor of at most $3/4$ at every flip that involves only points of $\myC$.

However, $\PotProduct(\myS)$ may increase by a factor of $\OO(\card{\myC}^2)$ when performing a flip that involves a point in $\myT$. As such flips only happen $\OO(\myt\card{\myP})$ times, the total increase is at most a factor of $\card{\myC}^{\OO(\myt\card{\myP})}$.

Concluding, the number of flips involving only points in $\myC$ is at most 
\[\log_{4/3}\left(\card{\myC}^{\OO(\myn)} \card{\myC}^{\OO(\myt\card{\myP})} \right) = \OO(\myn \log \card{\myC} + \myt\card{\myP} \log \card{\myC}).\qedhere\]
\end{proof}

\section{Untangling with Both Choices}
\label{cha:RI}
In this section, we devise strategies for insertion and removal choices to untangle a multiset of segments, therefore providing upper bounds on $\dRI$ with some endpoints outside (but not inside) a convex polygon. Recall that such strategies choose which pair of crossing segments is removed \emph{and} which pair of segments with the same endpoints is subsequently inserted.

Throughout this section, we assume that the point set $\myP$ is partitioned into $\myP = \myC \cup \myT$ where $\myC$ is in convex position and the points $\myT$ lie outside the convex hull of $\myC$.
We start with the case where $\myT$ is separated by two parallel lines from $\myC$. Afterwards, we prove an important lemma and apply it to untangle a matching.

\subsection{Upper Bound for Points Separated by Two Parallel Lines}
\label{sec:separated}

In this section, we prove an upper bound when $T$ is separated from $C$ by two parallel lines. In this version, our bound of $\OO(\myn + \myt \card{\myP})$ interpolates the tight convex bound of $\OO(\myn)$ from~\cite{BMS19,FGR24} and the $\OO( \myt \card{\myP})$ bound from~\cite{BoM16} for $\myt$ arbitrary segments.
We extend our standard general position assumptions to also exclude pairs of endpoints with the same $\myy$-coordinate.

\begin{theorem}\label{thm:separatedRI}
Consider a multiset $\myS$ of $\myn$ segments with endpoints $\myP$, such that $\myP$ satisfies the following property $\Property$. The endpoints $\myP$ are partitioned into $\myP = \myC \cup \myT_1 \cup \myT_2$ where $\myC$ is in convex position and there exist two horizontal lines $\myl_1,\myl_2$, with $\myT_1$ above $\myl_1$ above $\myC$ above $\myl_2$ above $\myT_2$.
Let $\myt$ be the sum of the degrees of the points in $\myT= \myT_1 \cup \myT_2$.

There exists a removal and insertion strategy $RI$ such that any untangle sequence of $RI$ has length
\[\dRI_{\Property}(\myn,\myt) = \OO ( \myn + \myt \card{ \myP } ) = \OO ( \myt \myn ).\]
\end{theorem}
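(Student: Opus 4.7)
The plan is to reuse the insertion rule from Theorem~\ref{thm:separatedI} verbatim, and to strengthen Theorem~\ref{thm:separatedI} only on the pure $\myC\myC$-flips by replacing its multiplicative potential argument with the linear convex bound from \cite{BMS19,FGR24}. For every flip, sort its four endpoints by $y$-coordinate as $\myp_\mya,\myp_\myb,\myp_\myc,\myp_\myd$ with $\mya<\myb<\myc<\myd$, and insert the pair $\myp_\mya\myp_\myb,\myp_\myc\myp_\myd$. With this insertion rule the potential
\[
\PotParaLine_\myT(\myS) \;=\; \sum_{\substack{\myp_\myi\myp_\myj \in \myS \\ \{\myp_\myi,\myp_\myj\} \cap \myT \neq \emptyset}} \lvert \myi - \myj \rvert
\]
from the proof of Theorem~\ref{thm:separatedI} is a nonnegative integer bounded by $\myt\,\card{\myP}$; it strictly decreases at every flip that has at least one endpoint in $\myT$ (exactly the same case analysis as in Theorem~\ref{thm:separatedI}, which I only need to verify still goes through when the four endpoints can split into $\myT_1,\myC,\myT_2$), and it is invariant under flips whose four endpoints all lie in $\myC$. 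This already bounds the number of $\myT$-involving flips by $\OO(\myt\,\card{\myP})$.

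For the removal rule I would prioritize pure $\myC\myC$-crossings using the known convex removal-insertion strategy that untangles any multiset of $k$ segments with all endpoints in convex position in $\OO(k)$ flips (the Convex--RI bound in Table~\ref{tab:results}, taken from \cite{BMS19,FGR24}); if no such crossing is present, perform any flip on a $\myT$-involving crossing using the insertion rule above. Since pure $\myC\myC$-flips never touch any $\myT$-segment, they leave $\PotParaLine_\myT$ unchanged; conversely, a $\myT$-flip removes at most one $\myC\myC$-segment and inserts at most one $\myC\myC$-segment, so it perturbs the $\myC\myC$-subgraph by $\OO(1)$ segments. Viewing the subsequence of pure $\myC\myC$-flips as a single run of the convex strategy on an evolving $\myC\myC$-subgraph, the convex strategy costs $\OO(\card{\myC})=\OO(\myn)$ in isolation, plus at most $\OO(1)$ extra flips for each of the $\OO(\myt\,\card{\myP})$ one-segment perturbations, for a total of $\OO(\myn+\myt\,\card{\myP})$ pure $\myC\myC$-flips. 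Adding the $\OO(\myt\,\card{\myP})$ $\myT$-flips yields the announced bound.

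The main obstacle is the last amortization step. One needs to ensure that the convex removal-insertion strategy of \cite{BMS19,FGR24} is robust: each $\myT$-flip perturbation (a single $\myC\myC$-segment swapped in or out) must not trigger more than $\OO(1)$ additional flips of the convex substrategy. The cleanest way to do this is probably to exhibit a single combined potential $\Phi = \alpha\,\PotParaLine_\myT(\myS) + \beta\,\Phi_{\mathrm{cvx}}(\myS_{\myC\myC})$, where $\Phi_{\mathrm{cvx}}$ is the monotone potential underlying the convex RI bound and $\alpha,\beta$ are chosen so that $\Phi$ drops by at least one at every flip (a $\myT$-flip drops the first term by $1$ and raises the second by $\OO(1)$; a $\myC\myC$-flip keeps the first term and drops the second). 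If such a combined potential can be maintained, the bound $\OO(\myn+\myt\,\card{\myP})$ drops out immediately, matching the tight convex bound of \cite{BMS19} when $\myt=0$ and the general bound of \cite{BoM16} when $\myt$ is large.
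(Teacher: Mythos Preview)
Your insertion rule and the potential $\PotParaLine_\myT$ are exactly what the paper uses, and you correctly isolate the interaction with the convex $\myC\myC$-substrategy as the crux. The gap is that you process the two kinds of flips in the wrong order, and this is precisely what manufactures the amortization obstacle you could not close.

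The paper does the $\myT$-involving flips \emph{first}. In Phase~1, removal choice always selects a crossing with at least one endpoint in $\myT$, and the vertical-order insertion rule makes $\PotParaLine_\myT$ drop; this terminates after $\OO(\myt\card{\myP})$ flips with no $\myT$-crossings left. Only then does Phase~2 run the convex removal--insertion strategy of~\cite{BMS19} on the remaining $\myC\myC$-crossings, in $\OO(\myn)$ flips. There is no interleaving and hence no amortization, because of a geometric fact you never invoked: since every point of $\myT$ lies outside the convex hull of $\myC$, a flip between two $\myC\myC$-segments can never create a crossing with a segment incident to $\myT$. Phase~2 therefore cannot reintroduce work for Phase~1, and the two counts simply add.

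Your ordering---prioritize $\myC\myC$-crossings---forces you to argue that inserting a single $\myC\myC$-segment costs only $\OO(1)$ extra convex flips, and this is where the proposal actually fails, not just stalls. A single new $\myC\myC$-segment can cross $\Theta(\card{\myC})$ previously crossing-free $\myC\myC$-segments, so any of the known convex potentials (number of segments with a crossing, number of crossings, the depth product $\PotProduct$) jumps by $\Theta(\card{\myC})$ or multiplies by $\card{\myC}$, not by $\OO(1)$. With the depth-product potential your combined scheme yields at best $\OO((\myn+\myt\card{\myP})\log\card{\myC})$, which merely reproves Theorem~\ref{thm:separatedI}; with the crossing-count potential it gives $\OO(\myn\,\myt\card{\myP})$. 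Swapping the phase order removes the obstacle outright.
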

\begin{proof}
The algorithm runs in two phases.

\paragraph{Phase~1.} We use removal choice to perform the flips involving a point in $\myT$. At the end of the first phase, there can only be crossings among segments with all endpoints in $\myC$. 
The insertion choice for the first phase is the following.
Let $\myp_1,\ldots,\myp_{\card{\myP}}$ be the points $\myP$ sorted vertically from top to bottom.
Consider a flip involving the points $\myp_\mya,\myp_\myb,\myp_\myc,\myp_\myd$ with $\mya<\myb<\myc<\myd$. The insertion choice is to create the segments $\myp_\mya\myp_\myb$ and $\myp_\myc\myp_\myd$. As in the proofs of Theorem~\ref{thm:convexI} and of Theorem~\ref{thm:separatedI}, we define the potential $\PotDepth$ of a segment $\sgt{\myp_\myi}{\myp_\myj} $ as
$\PotDepth(\myp_\myi\myp_\myj) = \abs{\myi-\myj}$.\footnote{This definition resembles but is not the same as the depth used in~\cite{BMS19}.}
Notice that $\PotDepth$ is an integer from $1$ to $\card{\myP}-1$. We define $\PotDepth(\myS)$ as the sum of $\PotDepth(\myp_\myi\myp_\myj)$ for $\myp_\myi\myp_\myj \in \myS$ with $\myp_\myi$ or $\myp_\myj$ in $\myT$. Notice that $0 < \PotDepth(\myS) < \myt \card{\myP}$. It is easy to verify that any flip involving a point in $\myT$ decreases $\PotDepth(\myS)$. Hence, the number of flips in Phase~1 is $\OO(\myt\card{\myP})$.

\paragraph{Phase~2.}
Since $\myT$ is outside the convex hull of $\myC$, flips between segments with all endpoints in $\myC$ cannot create crossings with the other segments, which are guaranteed to be crossing free at this point. Hence, it suffices to run the algorithm to untangle a convex set with removal and insertion choice from~\cite{BMS19}\footnote{The algorithm in ~\cite{BMS19} is originally proven for convex matchings, but we use here a straightforward generalization to multigraphs available in~\cite{Riv23}, Theorem~3.2.13.}, which performs $\OO(\myn)$ flips.
\end{proof}

%%%%%%%%%%%%%%%%%%%%%%%%%%%%%%%%%%%%%%%%%%%%%%%%%%%%%%%%%%%%%%%%%%%%%%%%%%%%%%%%%%%%%
\subsection[Upper Bound with Points Outside a Convex]{Upper Bound for Matchings with Points Outside a Convex}
\label{sec:outside}
In this section, we consider the case of endpoints $C \cup T$ with $C$ in convex position and $T$ outside $C$. The result only apply to matching because Lemma~\ref{lem:libLine} is false when multiple copies of a segment are allowed. We start by proving Lemma~\ref{lem:libLine} and then apply it to prove Theorem~\ref{thm:nearConvexRI}.

%%%%%%%%%%%%%%%%%%%%%%%%%%%%%%%%%%%%%%%%%%%%%%%%%%%%%%%%%%%%%%%%%%%%%%%%%%%%%%%%%%%%%
\subsubsection{Liberating a Line}
\label{sec:libLine}

In this section, we prove the following key lemma, which we use in the following section. The lemma only applies to matchings and it is easy to find a counter-example for multisets ($\myS$ consisting of $\myn$ copies of a single segment that crosses $\myp\myq$).

\begin{lemma}
    \label{lem:libLine}
    Consider a set $\myS$ of $\myn$ segments with endpoints $\myC$ in convex position and a segment $\sgt{\myq}{\myq'}$ intersecting the interior of the convex hull of $\myC$ such that $(\myC \cup \{\myq,\myq'\},\myS \cup \{\sgt{\myq}{\myq'}\})$ forms a matching.
    
    There exists a flip sequence starting at $\myS \cup \{\sgt{\myq}{\myq'}\}$ of length $\OO(\myn)$ which ends with a set of segments that do not cross the line $\lineT{\myq}{\myq'}$ (the line $\lineT{\myq}{\myq'}$ splits the final set of segments). 
\end{lemma}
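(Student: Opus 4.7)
The plan is to first untangle $\myS$ (ignoring $\sgt{\myq}{\myq'}$) using the convex matching untangling algorithm from~\cite{BMS19} with both removal and insertion choices, which uses $\OO(\myn)$ flips. Because the flips in this phase only involve pairs of segments of $\myS$, the segment $\sgt{\myq}{\myq'}$ remains untouched. After this preprocessing, $\myS$ is crossing-free, and the segments of $\myS$ that cross the line $\ell := \lineT{\myq}{\myq'}$ (which I will call \emph{bad}) form a non-crossing, nested family of chords of the convex hull of $\myC$. Since $\sgt{\myq}{\myq'}$ intersects the interior of that convex hull and, in the intended use of this lemma in Theorem~\ref{thm:nearConvexRI}, $\myq$ and $\myq'$ lie outside the hull on opposite sides along $\ell$, the segment $\sgt{\myq}{\myq'}$ contains the intersection of $\ell$ with the hull, and therefore crosses every bad segment.

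In the second phase, I would first flip $\sgt{\myq}{\myq'}$ with the leftmost bad segment $\sgt{\mya_1}{\myb_1}$ along $\ell$ (where $\mya_1$ lies above $\ell$ and $\myb_1$ below), using the ``swap'' insertion to produce the pair $\sgt{\myq}{\myb_1}$, $\sgt{\myq'}{\mya_1}$. The second of these is a long helper reaching from $\myq'$ across the hull to $\mya_1$; a short geometric argument using convex position shows that it crosses every remaining bad segment. I would then iteratively flip this long helper, or its successors (which will be subsequent helpers incident to $\myq$ or $\myq'$, or $\myC\myC$-chords of the hull), with carefully chosen remaining bad segments, picking the insertion so that each flip either (i) removes two bad segments at once, or (ii) temporarily preserves the bad count while replacing one bad by a ``spanning'' segment that still crosses the remaining bads, enabling a double decrease on the next flip. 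A careful accounting shows that the number of flips in the second phase is $\OO(\myn)$, so the entire flip sequence has length $\OO(\myn)$.

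The hardest part will be the case analysis for the iterative step: I must verify that after each flip there is still a segment in the current set (the long helper, a newly created spanning bad, or a $\myC\myC$-chord) that crosses some remaining bad segment, so that the sequence can always continue until no bad segment remains. The strategy must be tuned to the parity of the number of remaining bads, since odd configurations force at least one transitional flip that keeps the bad count constant but sets up a subsequent double decrease. The underlying geometric fact used throughout is that two chords of a convex polygon cross if and only if their four endpoints are interleaved on the hull, which lets me track precisely which new segments will cross which remaining ones and guarantee that progress can always be made.
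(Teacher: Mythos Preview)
Your approach is in the same spirit as the paper's proof: preprocess so that the $\myC\myC$-segments are pairwise non-crossing, then manufacture a ``helper'' segment that crosses every remaining bad segment and iterate. The paper's preprocessing differs slightly (whenever two segments of $\myS$ cross, flip them and insert the pair \emph{not} crossing $\ell$, which kills two bad segments per flip), but your use of the convex $\dRI$ algorithm from~\cite{BMS19} is a legitimate alternative and also costs $\OO(\myn)$ flips.

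The substantive gap is in your second phase. First, your claim that $\sgt{\myq'}{\mya_1}$ crosses every remaining bad segment relies on the extra hypothesis that $\myq,\myq'$ both lie outside the convex hull of $\myC$; the lemma as stated does not assume this. The paper sidesteps the issue via an auxiliary geometric fact (Lemma~\ref{lem:triangleHide}): for any triangle and any segment meeting its interior, some segment from $\myq$ or $\myq'$ to a vertex of the triangle meets the interior as well. Applied to the triangle $\trgl{\myp_1}{\myp_2}{\myp_{2\myn-1}}$, this is what produces the initial helper without assuming anything about where $\myq,\myq'$ sit. Second, and more seriously, your iterative step is only sketched. The (i)/(ii) amortization is plausible, but you have not shown that at every step at least one insertion actually achieves (i) or (ii). In several configurations only one of the two insertions is viable, because the other creates a ``stuck'' bad segment (one that crosses $\ell$ but crosses no remaining segment and hence can never be flipped away). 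You must argue that the viable insertion always keeps a spanning helper alive. The paper does this with a concrete rule: always flip the current helper with the leftmost remaining bad segment, and always keep as the new helper the inserted segment incident to the fixed anchor $\myp_{2\myn-1}$. It then verifies, using convex position, that the other inserted segment does not cross $\ell$ and that the new helper still crosses every remaining bad segment; the parity-dependent choice at the second flip is exactly what makes the very last step close up cleanly. Without an explicit invariant of this kind, your case analysis is incomplete.
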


\begin{proof}
    For each flip performed in the subroutine described hereafter, at least one of the inserted segments does not cross the line $\myq\myq'$ and is removed from $\myS$ (see Figure~\ref{fig:libLine}). 
    
    \begin{figure}[htb]
     \centering
     \includegraphics[scale=\graphicsScale,page=1]{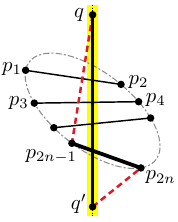} \hfill
     \includegraphics[scale=\graphicsScale,page=2]{libLine} \hfill
     \includegraphics[scale=\graphicsScale,page=3]{libLine} \hfill
     \includegraphics[scale=\graphicsScale,page=4]{libLine} \hfill
     \includegraphics[scale=\graphicsScale,page=5]{libLine}\\
     \caption{An untangle sequence of the subroutine to liberate the line $\myq\myq'$ (with $\myn=4$).}
     \label{fig:libLine}
    \end{figure}
    
    \paragraph{Preprocessing.}
    First, we remove from $\myS$ the segments that do not intersect the line $\myq\myq'$, as they are irrelevant.
    Second, anytime two segments in $\myS$ cross, we flip them choosing to insert the pair of segments not crossing the line $\myq\myq'$. One such flip removes two segments from $\myS$.
    Let $\myp_1\myp_2$ (respectively $\myp_{2 \myn-1}\myp_{2 \myn}$) be the segment in $\myS$ whose intersection point with $\myq\myq'$ is the closest from $\myq$ (respectively $\myq'$).
    Without loss of generality, assume that the points $\myp_1$ and $\myp_{2 \myn-1}$ are on the same side of the line $\myq\myq'$.
    
    \paragraph{First flip.}
    Lemma~\ref{lem:triangleHide} applied to the segment $\myq\myq'$ and the triangle $\myp_1\myp_2\myp_{2 \myn-1}$ shows that at least one of the segments among $\myq\myp_{2 \myn-1},\myq'\myp_1,\myq'\myp_2$ intersects all the segments of $\myS$.
    Without loss of generality, assume that $\myq\myp_{2 \myn-1}$ is such a segment, i.e., that $\myq\myp_{2 \myn-1}$ crosses all segments of $\myS \setminus \{\myp_{2 \myn-1}\myp_{2 \myn}\}$. 
    We choose to remove the segments $\myq\myq'$ and $\myp_{2 \myn-1}\myp_{2 \myn}$, and we choose to insert the segments $\myq\myp_{2 \myn-1}$ and $\myq'\myp_{2 \myn}$.
    As the segment $\myq'\myp_{2 \myn}$ does not cross the line $\myq\myq'$, we remove it from $\myS$.
    
    \paragraph{Second flip.} We choose to flip the segments $\myq\myp_{2 \myn-1}$ and $\myp_1\myp_2$.
    If $\myn$ is odd, we choose to insert the pair of segments $\myq\myp_1,\myp_2\myp_{2 \myn-1}$.
    If $\myn$ is even, we insert the segments $\myq\myp_2,\myp_1\myp_{2 \myn-1}$. 
    
    By convexity, one of the inserted segment (the one with endpoints in $\myC$) crosses all other $\myn-2$ segments.
    The other inserted segment (the one with $\myq$ as one of its endpoints) does not cross the line $\myq\myq'$, so we remove it from $\myS$.
    Note that the condition on the parity of $\myn$ is there only to ensure that the last segment $\myp_{2 \myn-3}\myp_{2 \myn-2}$ is dealt with at the last flip.
    
    \paragraph{Remaining flips.} 
    We describe the third flip. The remaining flips are performed similarly.
    Let $\mys$ be the previously inserted segment.
    Let $\myp_3\myp_4$ be the segment in $\myS$ whose intersection point with $\myq\myq'$ is the closest from $\myq$. Without loss of generality, assume that $\myp_3$ is on the same side of the line $\myq\myq'$ as $\myp_1$ and $\myp_{2 \myn-1}$. 
    
    We choose to flip $\mys$ with $\myp_3\myp_4$.
    If $\mys = \myp_2\myp_{2 \myn-1}$, we choose to insert the pair of segments $\myp_2\myp_4,\myp_3\myp_{2 \myn-1}$.
    If $\mys = \myp_1\myp_{2 \myn-1}$, we choose to insert the pair of segments $\myp_1\myp_3,\myp_4\myp_{2 \myn-1}$.
    
    By convexity, one inserted segment (the one with $\myp_{2 \myn-1}$ as an endpoint) crosses all other $\myn-3$ segments. 
    The other inserted segment does not cross the line $\myq\myq'$, so we remove it from $\myS$.
    Note that the insertion choice described is the only viable one, as the alternative would insert a crossing-free segment crossing the line $\myq\myq'$ that cannot be removed.
\end{proof}

\paragraph{Auxiliary Lemma of Section~\ref{sec:libLine}.}
\label{sec:libLineLemmas}

In this section, we prove Lemma~\ref{lem:triangleHide} used in the proof of Lemma~\ref{lem:libLine}.

Recall that, in the proof of Lemma~\ref{lem:libLine}, we have a convex quadrilateral $\myp_1\myp_2\myp_{2 \myn}\myp_{2 \myn-1}$ and a segment $\myq\myq'$ crossing the segments $\myp_1\myp_2$ and $\myp_{2 \myn}\myp_{2 \myn-1}$ in this order when drawn from $\myq$ to $\myq'$, and we invoke Lemma~\ref{lem:triangleHide} to show that at least one of the segments among $\myq\myp_{2 \myn-1},\myq'\myp_1,\myq'\myp_2$ intersects all the segments of $\myS$.
Before proving Lemma~\ref{lem:triangleHide}, we detail how to apply it to this context.

Lemma~\ref{lem:triangleHide} applied to the segment $\myq\myq'$ and the triangle $\myp_1\myp_2\myp_{2 \myn-1}$ asserts that at least one of the following pairs of segments cross: $\myq\myp_{2 \myn-1},\myp_1\myp_2$, or $\myq'\myp_1,\myp_2\myp_{2 \myn-1}$, or $\myq'\myp_2,\myp_1\myp_{2 \myn-1}$.
If the segments $\myq\myp_{2 \myn-1},\myp_1\myp_2$ cross, then we are done.
If the segments $\myq'\myp_1,\myp_2\myp_{2 \myn-1}$ cross, then the segments $\myq'\myp_1,\myp_{2 \myn}\myp_{2 \myn-1}$ also cross and we are done.
If the segments $\myq'\myp_2,\myp_1\myp_{2 \myn-1}$ cross, then the segments $\myq'\myp_2,\myp_{2 \myn}\myp_{2 \myn-1}$ also cross and we are done.

Next, we state and prove Lemma~\ref{lem:triangleHide}.

\begin{lemma}
    \label{lem:triangleHide}
    For any triangle $\trgl{\myp_1}{\myp_2}{\myp_3}$, for any segment $\sgt{\myq}{\myq'}$ intersecting the interior of the triangle $\trgl{\myp_1}{\myp_2}{\myp_3}$, there exists a segment $\mys \in \{\sgt{\myq}{\myp_1},\sgt{\myq}{\myp_2},\sgt{\myq}{\myp_3},\sgt{\myq'}{\myp_1},\sgt{\myq'}{\myp_2},\sgt{\myq'}{\myp_3}\}$ that intersects the interior of the triangle $\trgl{\myp_1}{\myp_2}{\myp_3}$.
\end{lemma}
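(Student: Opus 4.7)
The plan is to argue by a case analysis on the positions of $\myq$ and $\myq'$ with respect to the arrangement of the three lines through the edges of $\trgl{\myp_1}{\myp_2}{\myp_3}$. This arrangement partitions the plane into seven regions: the triangle interior, three \emph{edge regions} (each lying across one edge of the triangle), and three \emph{vertex regions} (each being the cone with apex $\myp_i$ opposite the triangle, bounded by the extensions of the two edges at $\myp_i$ beyond $\myp_i$). By general position, $\myq$ and $\myq'$ each lie in exactly one of these seven regions.

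The first step will handle the straightforward configurations. If $\myq$ lies in the triangle interior, then $\sgt{\myq}{\myp_1}$ starts inside and therefore intersects the interior; the case of $\myq'$ inside is symmetric. If instead $\myq$ lies in the edge region across $\sgt{\myp_i}{\myp_j}$, I claim that $\sgt{\myq}{\myp_k}$, where $\myp_k$ is the opposite vertex, intersects the interior. A short signed-distance computation will confirm this: $\myp_k$ and $\myq$ lie on opposite sides of the line supporting $\sgt{\myp_i}{\myp_j}$, while $\myq$ lies strictly on the triangle side of the two other edge-lines; parameterizing $\sgt{\myq}{\myp_k}$, past its unique crossing with the line supporting $\sgt{\myp_i}{\myp_j}$ the point lies strictly on the triangle side of all three edge-lines and hence in the triangle interior. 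The same conclusion applies symmetrically to $\myq'$.

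The main obstacle is ruling out the remaining case in which \emph{both} $\myq$ and $\myq'$ lie in vertex regions; I plan to show that this case contradicts the hypothesis that $\sgt{\myq}{\myq'}$ meets the triangle interior. The key geometric observation is that the vertex region opposite $\myp_i$ lies strictly on the non-triangle side of both edge-lines incident to $\myp_i$. Consequently, for any pair of indices $i, j$ (possibly equal), the union $V_i \cup V_j$ of the corresponding vertex regions is contained in the closed half-plane on the non-triangle side of the line supporting $\sgt{\myp_i}{\myp_j}$; since this half-plane is convex, the entire segment $\sgt{\myq}{\myq'}$ stays on its non-triangle side and therefore cannot meet the triangle interior, contradicting the hypothesis. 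Thus at least one of $\myq, \myq'$ must lie in an edge region or inside the triangle, and the earlier case produces a segment of the required form.
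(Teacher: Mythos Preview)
Your proof is correct and takes a genuinely different route from the paper's. The paper argues via the convex hull of the five points $\{\myp_1,\myp_2,\myp_3,\myq,\myq'\}$: if all five are in convex position, it connects $\myq$ to the triangle vertex that is not its neighbour on the hull boundary; otherwise one of the $\myp_i$ lies inside the hull of the other four (with $\myq,\myq'$ necessarily non-adjacent on the hull because $\sgt{\myq}{\myq'}$ meets the triangle interior), and then $\sgt{\myp_i}{\myq}$ or $\sgt{\myp_i}{\myq'}$ crosses the opposite edge $\sgt{\myp_j}{\myp_k}$. Your argument instead works with the seven-region arrangement of the three edge-lines and a half-plane containment for the vertex regions. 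Your approach is more explicit and self-contained (no convex-hull reasoning about five points), at the cost of a slightly longer case analysis; the paper's proof is terser but leans more on geometric intuition.

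One small notational wrinkle: in your final step you write ``for any pair of indices $i,j$ (possibly equal), $V_i\cup V_j$ is contained in the non-triangle side of the line supporting $\sgt{\myp_i}{\myp_j}$'', but when $i=j$ there is no such line. The intended statement is clear and the case $i=j$ is in fact easier (then $V_i$ already lies on the non-triangle side of either edge-line incident to $\myp_i$), so this is purely cosmetic; you may want to phrase the $i=j$ case separately.
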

\begin{proof}
    If all $\myp_1,\myp_2,\myp_3,\myq,\myq'$ are in convex position, then $\myq$ and the point among $\myp_1,\myp_2,\myp_3$ that is not adjacent to $\myq$ on the convex hull boundary define the segment $\mys$. Otherwise, since $\myq,\myq'$ are not adjacent on the convex hull boundary, assume without loss of generality that $\myp_1$ is not a convex hull vertex and $\myq,\myp_2,\myq',\myp_3$ are the convex hull vertices in order. Then, either the segment $\sgt{\myp_1}{\myq}$ or the segment $\sgt{\myp_1}{\myq'}$ intersects the segment $\sgt{\myp_2}{\myp_3}$.
\end{proof}

\subsubsection{Proof of the Upper Bound}
We are now ready to prove the following theorem, which only applies to matchings because it uses Lemma~\ref{lem:libLine}.

\begin{theorem}\label{thm:nearConvexRI}
Consider a set $\myS$ of $\myn$ segments with endpoints $\myP$ partitioned into $\myP = \myC \cup \myT$ where $\myC$ is in convex position and $\myT$ is outside the convex hull of $\myC$ and such that $(\myP,\myS)$ defines a matching. Let $\myt = |T|$.

There exists a removal and insertion strategy $RI$ such that any untangle sequence of $RI$ has length
\[\dRI_{\Matching}(\myn,\myt) = \OO(\myt^3\myn).\]
\end{theorem}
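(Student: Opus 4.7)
The plan is to proceed by induction on $\myt$, aiming to spend $\OO(\myt^2 \myn)$ flips per inductive step so that the recurrence $T(\myt) \leq T(\myt-1) + \OO(\myt^2 \myn)$ solves to $T(\myt) = \OO(\myt^3 \myn)$. For the base cases $\myt \leq 2$, Theorem~\ref{thm:2OutsideR} already yields $\OO(\myn) \subseteq \OO(\myt^3 \myn)$ flips (promoted from removal-only to the strictly stronger RI setting by Lemma~\ref{lem:reductionChoice}), so only the inductive step requires work.

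For the reduction, I pick an arbitrary $\myq \in \myT$; its unique matched segment $\mys$ is well-defined because $\myS$ is a matching. The goal is to make $\mys$ uncrossable, after which Lemma~\ref{lem:splitting} isolates $\mys$ from the rest and the induction hypothesis handles the remainder, which contains at most $\myt - 1$ outside points (and at most $\myt - 2$ if $\mys$ is a $\myT\myT$-segment). The engine of the reduction is Lemma~\ref{lem:libLine}: when a sub-configuration consists of exactly one singled-out $\myT\myT$-segment together with $\myC\myC$-segments on matching endpoints, a single invocation liberates the corresponding line in $\OO(\myn)$ flips and thereby creates a new splitting partition. Since the current configuration contains up to $\myt$ other $\myT$-incident segments that interfere with $\mys$, I process them one at a time, invoking Lemma~\ref{lem:libLine} on appropriately isolated sub-instances and using the convex matching strategy of~\cite{BMS19} to keep the $\myC\myC$-portion crossing-free between invocations. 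With $\OO(\myt)$ interfering $\myT$-incident segments and $\OO(\myt \myn)$ flips needed per one to account for re-integration of the other temporarily-isolated $\myT$-segments, the total cost of one reduction is $\OO(\myt^2 \myn)$.

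The main obstacle is the rigid hypothesis of Lemma~\ref{lem:libLine}: it insists that, aside from the one singled-out segment, every segment in $\myS$ has both endpoints in $\myC$. Applying the lemma repeatedly inside a configuration that begins with many simultaneous $\myT$-incident segments demands a careful ordering of operations so that at each invocation the sub-instance has exactly one non-$\myC\myC$ segment. The bookkeeping must guarantee that previously-liberated lines remain crossing-free under subsequent insertion choices, and that re-integrating a temporarily-isolated $\myT$-segment does not re-cross such a line and thereby undo prior progress. These invariants ultimately rely on the fact that the matching property bounds every vertex's degree by one, which is exactly why the lemma (and hence the theorem) cannot be extended to multisets. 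Establishing these invariants rigorously, and showing that each re-integration really does fit within the $\OO(\myt \myn)$ budget, is the most delicate step of the proof.
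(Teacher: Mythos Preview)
Your approach differs from the paper's, which does not induct on $\myt$. The paper instead argues directly with potentials layered by segment type. By insertion choice it forces every flip of two $\myC\myT$-segments to produce a $\myT\myT$-segment; it bounds the total number of flips involving $\myT\myT$-segments by $\OO(\myt^3)$ (invoking Lemma~\ref{lem:libLine} at most $\OO(\myt^2)$ times, once per $\myT\myT\inner$-segment, and handling $\myT\myT\outter$-segments via the potentials $\PotLine_\myl$ of the $\OO(\myt)$ tangent lines to $\mathrm{conv}(\myC)$); and it bounds $\myC\myT$-versus-$\myC\myC$ flips by $\OO(\myt^3\myn)$ through a separate potential $\PotDotprod$ that ranks the $\myC$-endpoint of each $\myC\myT$-segment along a direction separating its $\myT$-endpoint from $\myC$. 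The $\myC\myC$-segments are untangled last in $\OO(\myn)$ flips.

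Your inductive reduction has a genuine gap at its core. You want the segment $\mys$ incident to $\myq$ to become \emph{uncrossable} so that Lemma~\ref{lem:splitting} lets you drop $\myq$ and recurse on $\myt-1$. But uncrossability is a property of the point set, not of the current matching: $\{\mys\}$ is a singleton of a splitting partition only when no segment with both endpoints in $\myP \setminus \{\text{endpoints of }\mys\}$ crosses $\mys$, and flipping cannot change which pairs of points lie on opposite sides of $\mys$. For a $\myC\myT$-segment $\myq\myp$ with $\myp$ not a tangent point of $\mathrm{conv}(\myC)$ from $\myq$, many $\myC\myC$-segments cross it; even if you steer $\mys$ to be a tangent $\myC\myT$-segment or a $\myT\myT$-segment, the remaining $\myT$-points can still supply crossing segments, and in general no segment incident to $\myq$ need be uncrossable at all. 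The fallback of liberating the line through $\mys$ via Lemma~\ref{lem:libLine} and splitting into half-planes does not rescue the induction either, since all the other $\myT$-points may lie on the same side and $\myt$ does not decrease. The difficulty you label ``delicate bookkeeping'' is thus not a matter of missing invariants but the point where the scheme breaks. (A smaller issue: your base case also overstates Theorem~\ref{thm:2OutsideR}, which for matchings yields $\OO(\myn\log\myn)$ rather than $\OO(\myn)$, since $\dR_{\Convex,\Matching}(\myn)$ is only bounded by $\OO(\myn\log\myn)$ in this paper.)
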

\begin{proof}
Throughout this proof, we partition the $\myT\myT$-segments into two types: \emph{$\myT\myT\inner$-segment} if it intersects the interior of the convex hull of $\myC$ and \emph{$\myT\myT\outter$-segment} otherwise.
We define the potential $\PotLine_\myl(\myS)$ of a line $\myl$ as the number of segments of $\myS$ crossing $\myl$.

\paragraph{$\myT\myT$-segments.}
At any time during the untangle procedure, if there is a $\myT\myT\inner$-segment $\mys$ that crosses more than $\myt$ segments, we apply Lemma~\ref{lem:libLine} to liberate $\mys$ from every $\myC\myC$-segment using $\OO(\myn)$ flips.
Let $\myl$ be the line containing $\mys$. Since $\PotLine_\myl$ cannot increase (by an easy observation that first appears in~\cite{LS80}), $\PotLine_\myl < \myt$ after Lemma~\ref{lem:libLine}, and there are $\OO(\myt^2)$ different $\myT\myT\inner$-segments in $\myS$, it follows that Lemma~\ref{lem:libLine} is applied $\OO(\myt^2)$ times, performing a total $\OO(\myt^2\myn)$ flips.
As the number of times $\mys$ is inserted and removed differ by at most $1$ and $\PotLine_\myl$ decreases at each flip that removes $\mys$, it follows that $\mys$ participates in $\OO(\myt)$ flips. As there are $\OO(\myt^2)$ different $\myT\myT\inner$-segments in $\myS$, the total number of flips involving $\myT\myT\inner$-segments is $\OO(\myt^3)$.

We define a set $\myL$ of $\OO(\myt)$ lines as follows. For each point $\myq \in \myT$, we have two lines $\myl_1, \myl_2 \in \myL$ that are the two tangents of the convex hull of $\myC$ that pass through $\myq$. As the lines $\myl \in \myL$ do not intersect the interior of the convex hull of $\myC$, the potential $\PotLine_\myl = \OO(\myt)$.
When flipping a $\myT\myT\outter$-segment $\myq_1\myq_2$ with another segment $\myq_3\myp$ with $\myq_3 \in \myT$ ($\myp$ may be in $\myT$ or in $\myC$), we make the insertion choice of creating a $\myT\myT\outter$-segment $\myq_1\myq_3$ such that there exists a line $\myl \in \myL$ whose potential $\PotLine_\myl$ decreases. It is easy to verify that $\myl$ always exist (see Lemma~\ref{lem:Icritical} and Lemma~\ref{lem:criticalTangent}). Hence, the number of flips involving $\myT\myT\outter$-segments is $\OO(\myt^2)$ and the number of flips involving $\myT\myT$-segments in general is $\OO(\myt^3)$.

\paragraph{All except pairs of $\myC\myC$-segments.}
We keep flipping segments that are not both $\myC\myC$-segments with the following insertion choices.
Whenever we flip two $\myC\myT$-segments, we make the insertion choice of creating a $\myT\myT$-segment. Hence, as the number of flips involving $\myT\myT$-segments is $\OO(\myt^3)$, so is the number of flips of two $\myC\myT$-segments.

Whenever we flip a $\myC\myT$-segment $\myp_1\myq$ with $\myq \in \myT$ and a $\myC\myC$-segment $\myp_3\myp_4$, we make the following insertion choice. Let $\myv(\myq)$ be a vector such that the dot product $\myv(\myq) \cdot \myq < \myv(\myq) \cdot \myp$ for all $\myp \in \myC$, that is, $\myv$ is orthogonal to a line $\myl$ separating $\myq$ from $\myC$ and $\myv$ is pointing towards $\myC$. We define the potential $\PotDotprod(\myp_\myi\myq)$ of a segment with $\myp_\myi \in \myC$ and $\myq \in \myT$ as the number of points $\myp \in \myC$ such that $\myv(\myq) \cdot \myp < \myv(\myq) \cdot \myp_\myi$, that is the number of points in $\myC$ before $\myp_\myi$ in direction $\myv$. We choose to insert the segment $\myp_\myi\myq$ that minimizes $\PotDotprod(\myp_\myi\myq)$ for $\myi \in \{3,4\}$. Let $\PotDotprod(\myS)$ be the sum of $\PotDotprod(\myp_\myi\myq)$ for all $\myC\myT$-segments $\myp_\myi\myq$ in $\myS$. It is easy to see that $\PotDotprod(\myS)$ is $\OO(\myt\card{\myC})$ and decreases at each flip involving a $\myC\myT$-segment (not counting the flips inside Lemma~\ref{lem:libLine}).

There are two situation in which $\PotDotprod(\myS)$ may increase. One is when Lemma~\ref{lem:libLine} is applied, which happens $\OO(\myt^2)$ times. Another one is when a $\myT\myT$-segment and a $\myC\myC$-segment flip, creating two $\myC\myT$-segments, which happens $\OO(\myt^3)$ times. At each of these two situations, $\PotDotprod(\myS)$ increases by $\OO(\card{\myC})$. Consequently, the number of flips between a $\myC\myT$-segment and a $\myC\myC$-segment is $\OO(\myt^3\card{\myC}) = \OO(\myt^3\myn)$.

\paragraph{$\myC\myC$-segments.}
By removal choice, we choose to flip the pairs of $\myC\myC$-segments last (except for the ones flipped in Lemma~\ref{lem:libLine}). As $\myT$ is outside the convex hull of $\myC$, flipping two $\myC\myC$-segments does not create crossings with other segments (by Lemma~\ref{lem:splitting}). Hence, we apply Theorem~5 from~\cite{BMS19} to untangle the remaining segments using $\OO(\myn)$ flips.
\end{proof}

\paragraph{Auxiliary Lemmas of Section~\ref{sec:outside}}
\label{sec:outsideLemmas}

In this section, we prove Lemma~\ref{lem:criticalTangent} and Lemma~\ref{lem:Icritical} used in the proof of Theorem~\ref{thm:nearConvexRI}.

Recall that, in the proof of Theorem~\ref{thm:nearConvexRI}, we define a set $\myL$ of lines as follows.
For each point $\myq \in \myT$, we have two lines $\myl_1, \myl_2 \in \myL$ that are the two tangents of the convex hull of $\myC$ that pass through $\myq$.
When flipping a $\myT\myT\outter$-segment $\myq_1\myq_2$ with another segment $\myq_3\myp$ with $\myq_3 \in \myT$ ($\myp$ may be in $\myT$ or in $\myC$), we make the insertion choice of creating a $\myT\myT\outter$-segment $\myq_1\myq_3$ such that there exists a line $\myl \in \myL$ whose potential $\PotLine_\myl$ decreases.
We invoke Lemma~\ref{lem:Icritical} and Lemma~\ref{lem:criticalTangent} to show that such a line $\myl$ always exist.

Indeed, by Lemma~\ref{lem:Icritical}, it is enough to show that there exists a line $\myl \in \myL$ containing one of the points $\myq_1,\myq_2,\myq_3$ that crosses one of the segments $\myq_1\myq_2$ or $\myq_3\myp$. This is precisely what Lemma~\ref{lem:criticalTangent} shows.

Next, we state prove Lemma~\ref{lem:Icritical} and Lemma~\ref{lem:criticalTangent}.

\begin{lemma}
    \label{lem:Icritical}
    Consider two crossing segments $\myp_1\myp_2,\myp_3\myp_4$ and a line $\myl$ containing $\myp_1$ and crossing $\myp_3\myp_4$.
    Then, one of the two pairs of segments $\myp_1\myp_3,\myp_2\myp_4$ or $\myp_1\myp_4,\myp_2\myp_3$ does not cross $\myl$.
    In other words, there exists an insertion choice for a flip removing $\myp_1\myp_2,\myp_3\myp_4$ such that the number of segments crossing $\myl$ decreases.
\end{lemma}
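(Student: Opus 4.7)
The plan is a short case analysis based on which side of $\myl$ the point $\myp_2$ lies on. The main insight is that two of the four candidate inserted segments—namely $\myp_1\myp_3$ and $\myp_1\myp_4$—are ``free'' with respect to $\myl$: since $\myp_1 \in \myl$, each of these segments meets $\myl$ only at the endpoint $\myp_1$, which by the paper's definition does not count as a crossing. So in either possible insertion pair, the segment through $\myp_1$ is automatically safe, and the question reduces to whether the complementary segment (connecting $\myp_2$ to either $\myp_3$ or $\myp_4$) crosses $\myl$.

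Next, because $\myl$ crosses the segment $\myp_3\myp_4$, the points $\myp_3$ and $\myp_4$ lie strictly on opposite sides of $\myl$. By general position, $\myp_2 \notin \myl$, so $\myp_2$ lies strictly on exactly one of these two sides. If $\myp_2$ is on the same side of $\myl$ as $\myp_4$, then the segment $\myp_2\myp_4$ stays on one side of $\myl$ and does not cross $\myl$, so the pair $\{\myp_1\myp_3,\myp_2\myp_4\}$ witnesses the lemma; symmetrically, if $\myp_2$ is on the same side as $\myp_3$, then $\myp_2\myp_3$ does not cross $\myl$ and the pair $\{\myp_1\myp_4,\myp_2\myp_3\}$ works. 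Both insertion choices are valid flip outputs since $\{\myp_1,\myp_2,\myp_3,\myp_4\}$ are in convex position (as $\myp_1\myp_2$ and $\myp_3\myp_4$ cross) and these are precisely the two pairs of opposite sides of that convex quadrilateral.

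For the ``in other words'' claim, observe that before the flip the contribution of the two removed segments to $\PotLine_\myl$ is exactly $1$: $\myp_3\myp_4$ crosses $\myl$, while $\myp_1\myp_2$ does not (its only intersection with $\myl$ is the endpoint $\myp_1$). After inserting the pair identified above, neither new segment crosses $\myl$, so the contribution drops to $0$. Crossings of $\myl$ with the remaining segments of $\myS$ are untouched by a flip, so $\PotLine_\myl$ decreases by exactly $1$. There is no real obstacle here; the only thing that requires a moment of care is remembering the endpoint convention in the definition of ``cross'', without which the role of $\myp_1 \in \myl$ would be confusing.
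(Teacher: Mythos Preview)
Your proof is correct and is exactly the natural argument the paper gestures at with its one-word proof ``Straightforward.'' The only microscopic quibble is the appeal to general position for $\myp_2 \notin \myl$: the paper's standing assumption forbids three collinear \emph{endpoints}, which does not literally cover an arbitrary line $\myl$ through $\myp_1$; but this is harmless, since if $\myp_2 \in \myl$ then $\myp_2\myp_3$ and $\myp_2\myp_4$ also meet $\myl$ only at an endpoint and both insertion pairs work.
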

\begin{proof}
    Straightforward.
\end{proof}

\begin{lemma}
    \label{lem:criticalTangent}
    Consider a closed convex body $\myB$ and two crossing segments $\myq_1\myq_3,\myq_2\myq_4$ whose endpoints $\myq_1,\myq_2,\myq_3$ are not in $\myB$, and whose endpoint $\myq_4$ is not in the interior of $\myB$.
    If the segment $\myq_1\myq_3$ does not intersect the interior of $\myB$, then at least one of the six lines tangent to $\myB$ and containing one of the endpoints $\myq_1,\myq_2,\myq_3$ is crossing one of the segments $\myq_1\myq_3,\myq_2\myq_4$ (\Figure~\ref{fig:criticalTangent}(a)).
    (General position is assumed, meaning that the aforementioned six lines are distinct, i.e., each line does not contain two of the points $\myq_1,\myq_2,\myq_3,\myq_4$.)
\end{lemma}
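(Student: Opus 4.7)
The plan is to argue by contradiction: suppose none of the six tangent lines from $\myq_1, \myq_2, \myq_3$ to $\myB$ crosses its corresponding segment. A line through $\myq_1$ or $\myq_3$ cannot cross $\sgt{\myq_1}{\myq_3}$ (they share an endpoint), and a line through $\myq_2$ cannot cross $\sgt{\myq_2}{\myq_4}$, so the assumption reduces to: the two tangents from $\myq_2$ do not cross $\sgt{\myq_1}{\myq_3}$, and the four tangents from $\myq_1, \myq_3$ do not cross $\sgt{\myq_2}{\myq_4}$. Let $\myo = \sgt{\myq_1}{\myq_3} \cap \sgt{\myq_2}{\myq_4}$, which lies in the interior of both segments.

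For each point $\myp \in \sgt{\myq_1}{\myq_3}$, general position together with $\sgt{\myq_1}{\myq_3} \cap \mathrm{int}(\myB) = \emptyset$ implies $\myp \notin \myB$, so there are two tangent lines from $\myp$ to $\myB$; I will label them $\tau^-(\myp)$ and $\tau^+(\myp)$ continuously as $\myp$ varies, which is possible because the two tangent points on $\partial \myB$ move continuously and cannot swap while $\myp$ stays outside $\myB$. For $\epsilon \in \{-,+\}$, let $\text{side}^\epsilon(\myp) = 1$ if $\tau^\epsilon(\myp)$ separates $\myq_2$ from $\myq_4$ (equivalently, if it crosses $\sgt{\myq_2}{\myq_4}$), and $\text{side}^\epsilon(\myp) = 0$ otherwise. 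The contradiction hypothesis gives $\text{side}^\epsilon(\myq_1) = \text{side}^\epsilon(\myq_3) = 0$; conversely, $\tau^\epsilon(\myo)$ passes through the interior point $\myo$ of $\sgt{\myq_2}{\myq_4}$, and generically differs from the line $\lineT{\myq_2}{\myq_4}$ itself, so $\text{side}^\epsilon(\myo) = 1$.

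Since $\text{side}^\epsilon$ can change only when $\tau^\epsilon$ passes through $\myq_2$ or $\myq_4$, the pattern $0 \to 1 \to 0$ along $\myq_1 \to \myo \to \myq_3$ forces each family to undergo at least $2$ transitions, for a combined total of at least $4$. For an upper bound, each tangent line from $\myq_4$ (there are at most two) meets $\sgt{\myq_1}{\myq_3}$ in at most one point and coincides with at most one of $\tau^-, \tau^+$ there, so tangent lines through $\myq_4$ account for at most $2$ transitions in total. By pigeonhole, at least one transition is caused by a tangent line from $\myq_2$ passing through an interior point $\myp^* \in \sgt{\myq_1}{\myq_3}$; such a line crosses $\sgt{\myq_1}{\myq_3}$ at $\myp^*$, contradicting the initial assumption.

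The main obstacle will be handling the degenerate configurations at the boundary of general position, such as the line $\lineT{\myq_1}{\myq_3}$ or $\lineT{\myq_2}{\myq_4}$ being tangent to $\myB$, or $\myq_4 \in \partial \myB$. In each such case either one of the six tangent lines coincides with one of the two segment lines and directly furnishes the required crossing (for instance, if $\lineT{\myq_1}{\myq_3}$ is tangent to $\myB$ then this line is itself a tangent from $\myq_1$ that crosses $\sgt{\myq_2}{\myq_4}$ at $\myo$), or the pigeonhole only gets sharper (fewer tangents through $\myq_4$ means more transitions must go through $\myq_2$). Invoking the paper's general position assumption will reduce everything to the main case above.
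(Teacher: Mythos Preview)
Your argument is correct and follows a genuinely different route from the paper's. The paper also argues by contradiction, but via line arrangements: if none of the six tangent lines crosses either diagonal of the convex quadrilateral $\myq_1\myq_2\myq_3\myq_4$, then all six support that quadrilateral; invoking the fact that an arrangement of $m \geq 5$ lines has at most one cell bounded by all $m$ of them, $\myB$ and the quadrilateral must lie in the same cell, and then the chord of $\myB$ joining the two contact points of the tangents from $\myq_1$ is seen to cross $\sgt{\myq_1}{\myq_3}$, contradicting $\sgt{\myq_1}{\myq_3} \cap \mathrm{int}(\myB) = \emptyset$. Your intermediate-value and pigeonhole argument on the one-parameter family $\tau^\pm(p)$ is more elementary in that it avoids the arrangement fact, and it makes the asymmetric role of $\myq_4$ explicit; the paper's approach is shorter once that fact is granted. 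One small looseness in your write-up: when $\myq_4 \in \partial \myB$ at a corner, the supporting lines through $\myq_4$ form a pencil rather than being ``at most two'', so the parenthetical about fewer tangents is misleading; the right observation is that $\{p \in \sgt{\myq_1}{\myq_3} : \myq_4 \in \tau^\epsilon(p)\}$ is a single interval for one branch $\epsilon$ and empty for the other, with $\text{side}^\epsilon \equiv 0$ on that interval, so it still accounts for at most two transitions and your count goes through.
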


\begin{figure}[!ht]
    \hspace*{\stretch{1}}%
    \pbox[b]{\textwidth}{\centering\includegraphics[page=1,scale=\graphicsScale]{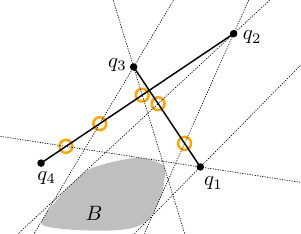}\newline(a)}\hspace*{\stretch{2}}%
    \pbox[b]{\textwidth}{\centering\includegraphics[page=2,scale=\graphicsScale]{criticalTangent}\newline(b)}%
    \hspace*{\stretch{1}}%
    \caption{(a) In the statement of Lemma~\ref{lem:criticalTangent}, we assert the existence of points, circled in the figure, which are the intersection of a line tangent to $\myB$ and containing one of the points $\myq_1,\myq_2,\myq_3$. (b) In the proof of Lemma~\ref{lem:criticalTangent} by contraposition, we exhibit a point, circled in the figure, showing that $\myB$ intersects one of the segment $\myq_1\myq_3$.}
    \label{fig:criticalTangent}%
\end{figure}

\begin{proof}
    For all $i \in \{1,2,3\}$, let $\myl_\myi$ and $\myl_\myi'$ be the two lines containing $\myq_\myi$ and tangent to $\myB$ (\Figure~\ref{fig:criticalTangent}(b)).
    By contraposition, we assume that none of the six lines $\myl_1,\myl_1',\myl_2,\myl_2',\myl_3,\myl_3'$ crosses one of the segments $\myq_1\myq_3,\myq_2\myq_4$. In other words, we assume that the six lines are tangent to the convex quadrilateral $\myq_1\myq_2\myq_3\myq_4$.
    It is well known that, if $\mym \geq 5$, then any arrangement of $\mym$ lines or more admits at most one face with $\mym$ edges (see~\cite{Gru73} for example).
    Therefore, $\myB$ is contained in the same face of the arrangement of the six lines as the quadrilateral $\myq_1\myq_2\myq_3\myq_4$.
    Let $\ppx_1$ (respectively $\ppx_1'$) be a contact point between the line $\myl_1$ (respectively $\myl_1'$) and the convex body $\myB$.
    The segment $\ppx_1\ppx_1'$ crosses the segment $\myq_1\myq_3$ and is contained in $\myB$ by convexity  (\Figure~\ref{fig:criticalTangent}(b)), concluding the proof by contraposition.
\end{proof}

\section{Conclusion and Open Problems}
\label{cha:conclusion}

Flip graphs of geometric reconfiguration problems are mathematically challenging objects. The flip operation that removes two crossing segments and inserts two non-crossing segments with the same endpoints has applications to local search heuristics and produces directed flip graphs. Such flip graphs have previously been studied in the contexts of TSP tours~\cite{VLe81,OdW07,WCL09}, matchings~\cite{BoM16,BMS19,DDFGR22}, and trees~\cite{BMS19}.

In this paper, we have considered several different versions of the problem within the unifying framework of removal and insertion choices.
In all versions, we consider a multiset of $\myn$ segments.
When the endpoints are in convex position, we showed that removal choice is enough to get an $\OO(\myn \log \myn)$ bound on the number of flips, implying the same bound for trees (matching the bound in~\cite{BMS19}) but without using any specific property of trees. It is an open problem whether any of these two bounds (the general removal choice bound or the special tree bound\footnote{A fatal flaw has been found in the proof of the $\OO(\myn)$ bound for trees presented in the PhD dissertation~\cite{Riv23}.}) could be improved to $\OO(\myn)$ as in the special cases of TSP tours and red-blue matchings. Still in the convex case, we proved an $\OO(\myn \log \myn)$ bound on the number of flips using only insertion choice. We do not know if similar bounds hold in the simple cases of one endpoint inside the convex hull or two points outside the convex hull that cannot be separated by two parallel lines.

The remaining results consider near-convex sets of endpoints. The most general of these results is in the case of no choice, where the endpoints $T$ that are not in convex position may come in any number and may be placed anywhere. In contrast, in our removal strategies, at most two endpoints $T$ that are not in convex position are allowed and the proofs are very technical, considering each placement of $T$ separately. In our insertion strategies, the endpoints $T$ must be outside and separated by two parallel lines. In our removal and insertion strategy the points may be anywhere outside the convex hull but multiple copies of the same segment are not allowed anywhere in the flip graph, which limits its application to matchings or some artificial problems such as all but one vertex having degree $1$. Improving and generalizing these bounds remains an open problem.

Our framework with insertion choice is limited to problems where both insertion choices are always available. It is motivating to consider problems where two insertion choices are sometimes available. For example, we may consider the case of maintaining a collection of simple vertex-disjoint graph cycles (without multi-edges). In this case, we have two insertion choices in all but some very specific situations where one of the insertions would create an edge of multiplicity two (that is, a multigraph cycle of length $2$). Obtaining a subcubic bound for endpoints in general position with this limited insertion choice is an open problem which may be easier than proving a subcubic bound when insertion choice is never available (with or without removal choice).

\begin{figure}[!ht]
    \centering%
    \hspace*{\stretch{1}}%
    \includegraphics[scale=\graphicsScale,page=2]{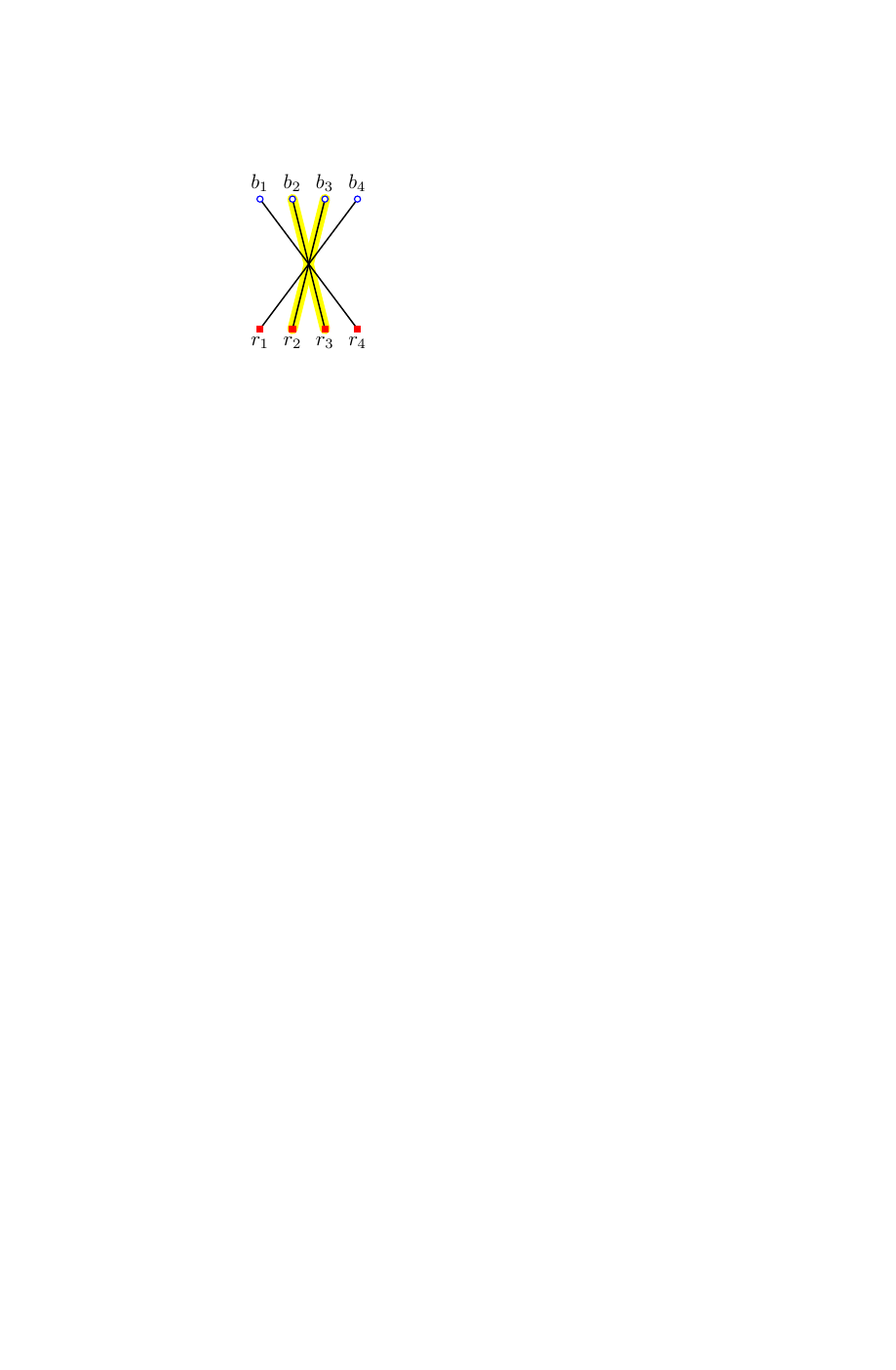}\hspace*{\stretch{1}}%
    \makebox[0pt]{\includegraphics[scale=\graphicsScale,page=14]{flipReuse}}\hspace*{\stretch{1}}%
    \includegraphics[scale=\graphicsScale,page=4]{flipReuse}\hspace*{\stretch{1}}%
    \makebox[0pt]{\includegraphics[scale=\graphicsScale,page=14]{flipReuse}}\hspace*{\stretch{1}}%
    \includegraphics[scale=\graphicsScale,page=6]{flipReuse}\hspace*{\stretch{1}}%
    \makebox[0pt]{\includegraphics[scale=\graphicsScale,page=14]{flipReuse}}\hspace*{\stretch{1}}%
    \includegraphics[scale=\graphicsScale,page=8]{flipReuse}\hspace*{\stretch{1}}%
    \makebox[0pt]{\includegraphics[scale=\graphicsScale,page=14]{flipReuse}}\hspace*{\stretch{1}}%
    \includegraphics[scale=\graphicsScale,page=10]{flipReuse}\hspace*{\stretch{1}}%
    \makebox[0pt]{\includegraphics[scale=\graphicsScale,page=14]{flipReuse}}\hspace*{\stretch{1}}%
    \includegraphics[scale=\graphicsScale,page=12]{flipReuse}\hspace*{\stretch{1}}%
    \caption{A flip sequence where the highlighted segments $\sgt{\myr_2}{\myb_3},\sgt{\myr_3}{\myb_2}$ are removed than reinserted and flipped again. It is possible to iterate the use of this same flip $\frac{\myn}{2}$ times. The first three flips of this figure appear in~\cite{BoM16}.}%
    \label{fig:flipReuse}%
\end{figure}

Without insertion choice, reducing the gaps between the $\OO(\myn^3)$ upper bound and the lower bounds of $\Omega(\myn^2)$ and $\Omega(\myn)$ (respectively without and with removal choice) remains the most challenging open problem. The $\OO(\myn^{8/3})$ bound on the number of distinct flips is a hopeful step in this direction. 
Note that there exist flip sequences using the same flip a linear number of times (Figure~\ref{fig:flipReuse}).
However, we were not able to find a set of line segments that requires the same pair of segments to be flipped twice in order to be untangled. A proof of this statement would imply a subcubic upper bound with removal choice.

\bibliographystyle{plainurl}
\bibliography{ref}
\end{document}